\documentclass[10pt]{article}

\usepackage[T1]{fontenc}
\usepackage[utf8]{inputenc}
\usepackage[letterpaper,margin=0.94in]{geometry}

\usepackage[bottom]{footmisc}

\usepackage{amsmath,amssymb,amsthm,amsfonts}
\allowdisplaybreaks[4] 
\usepackage{mathtools,mleftright}
\usepackage{mathrsfs}
\usepackage{latexsym}
\usepackage{bm}
\usepackage{bbm}
\usepackage{dsfont}
\usepackage{braket}
\usepackage{aligned-overset}
\usepackage{inconsolata}

\mleftright
\usepackage{graphicx}
\usepackage{float}
\usepackage{afterpage}
\usepackage{booktabs}
\usepackage{tabularx}
\usepackage{multirow}
\usepackage{tabularray}
\usepackage{array}
\usepackage{diagbox}
\usepackage{subcaption}
\UseTblrLibrary{varwidth}
\usepackage{aligned-overset}

\usepackage{enumitem}

\usepackage{algorithmic}
\usepackage{framed}
\usepackage{environ}

\usepackage{xspace}

\usepackage[table]{xcolor}
\usepackage[colorinlistoftodos]{todonotes}

\definecolor{TSUYUKUSA}{RGB}{46,169,223}
\definecolor{KURENAI}{RGB}{203,27,69}
\definecolor{hkugreen}{HTML}{49B880}
\definecolor{hkublue}{HTML}{009CDE}
\definecolor{hkudarkgreen}{HTML}{1C584B}
\definecolor{hkudarkred}{HTML}{FF665E}
\definecolor{navyblue}{HTML}{000080}
\definecolor{quantumviolet}{HTML}{771C56}
\definecolor{deepblue}{RGB}{0,51,153}
\definecolor{deepred}{RGB}{153,0,51}

\usepackage{tikz}
\usetikzlibrary{
  quantikz2,
  decorations.pathmorphing,
  calc,
  arrows.meta
}
\usepackage{pgfplots}
\pgfplotsset{compat=1.18}
\usepackage{ytableau}

\usepackage[
  pdfstartview=FitH,
  pdfpagemode=UseNone,
  colorlinks=true,
  citecolor=blue,
  linkcolor=magenta,
  urlcolor=navyblue,
  backref=page,
  linktoc=section
]{hyperref}
\usepackage[nameinlink,capitalize]{cleveref}

\newtheorem{theorem}{Theorem}[section]
\newtheorem{lemma}[theorem]{Lemma}

\newtheorem*{proposition*}{Proposition}
\newtheorem{corollary}[theorem]{Corollary}

\newtheorem{observation}[theorem]{Observation}

\theoremstyle{definition}
\newtheorem{fact}[theorem]{Fact}
\newtheorem{definition}[theorem]{Definition}

\newtheorem{remark}[theorem]{Remark}

\renewcommand{\Pr}{\mathop{\bf Pr\/}}
\newcommand{\E}{\mathop{\bf E\/}}

\renewcommand{\sp}{\mathrm{sp}}
\newcommand{\conv}{\mathrm{conv}}

\newcommand{\tr}{\mathrm{tr}}
\newcommand{\Tr}{\tr}
\newcommand{\poly}{\mathrm{poly}}

\newcommand{\sgn}{\mathrm{sgn}}

\newcommand{\diag}{\mathrm{diag}}

\newcommand{\bmt}{\bm{\theta}}
\newcommand{\Arg}{\mathrm{Arg}}

\newcommand{\R}{\mathbb{R}}
\newcommand{\C}{\mathbb{C}}
\newcommand{\N}{\mathbb{N}}
\newcommand{\Z}{\mathbb{Z}}

\newcommand{\T}{\mathbb{T}}
\newcommand{\Sym}{\mathfrak{S}}
\newcommand{\bfk}{\mathbf{k}}

\newcommand{\op}{\mathrm{op}}

\newcommand{\eps}{\varepsilon}

\newcommand{\spec}{\mathsf{spec}}
\newcommand{\Span}{\mathrm{span}}
\newcommand{\TP}{\mathsf{T}}

\newcommand{\Lin}{\mathsf{Lin}}

\newcommand{\I}{\mathrm{i}}

\renewcommand{\Ket}[1]{|#1\rangle\!\rangle}
\renewcommand{\Bra}[1]{\langle\!\langle#1|}
\renewcommand{\Braket}[1]{\langle\!\langle#1\rangle\!\rangle}

\newcommand{\su}{\mathsf{SU}}

\newcommand{\ie}{\text{i.e.}\xspace}

\newcommand{\Y}{\mathsf{Y}}
\newcommand{\dd}{\,\mathrm{d}}
\newcommand{\U}{\mathsf{U}}

\newcommand{\ad}{\mathrm{Ad}}

\renewcommand{\O}{\mathsf{O}}
\newcommand{\SO}{\mathsf{SO}}

\newcommand{\bfq}{\mathbf{q}}
\renewcommand{\H}{\mathcal{H}}
\newcommand{\openone}{\mathbb{I}}

\newcommand{\bigo}[1]{\mathcal{O}\left(#1\right)}

\newcommand{\ignore}[1]{}

\renewcommand{\eqref}[1]{\textup{(\ref{#1})}}
\newcommand{\eref}[1]{Eq.~\eqref{#1}}
\newcommand{\lref}[1]{Lemma~\ref{#1}}
\newcommand{\tref}[1]{Theorem~\ref{#1}}

\newcommand{\syrow}{%
  \tikz[baseline={([yshift=-.5ex]current bounding box.center)},scale=0.15]{%
    \draw (0,0) rectangle (2,1);
    \draw (1,0) -- (1,1);
  }%
}

\hypersetup{
  pdftitle={Query-optimal unitary channel tomography in diamond distance with parallel access},
  pdfauthor={Entong He, Zihao Li, Yuxiang Yang},
}

\title{Query-optimal unitary channel tomography in diamond distance with parallel access}

\author{Entong He\thanks{The University of Hong Kong. \href{mailto:ethe@cs.hku.hk}{\texttt{ethe@cs.hku.hk}}}
\and
Zihao Li\thanks{The University of Hong Kong. \href{mailto:zihaoli@hku.hk}{\texttt{zihaoli@hku.hk}}}
\and 
Yuxiang Yang
\thanks{The University of Hong Kong. \href{mailto:yuxiang@cs.hku.hk}{\texttt{yuxiang@cs.hku.hk}}}
}

\date{}
\allowdisplaybreaks

\begin{document}

\maketitle

\begin{abstract}
    In the study of quantum process tomography, it has remained open how to learn a unitary channel in diamond distance with strictly parallel queries as efficiently as with sequential queries. Sequential queries allow one to exploit adaptivity to refine the estimate and adjust the learning strategy accordingly, whereas such adjustments are impossible for parallel queries. From the perspective of higher-order quantum operations, any parallel learning strategy can be simulated by a sequential one, whereas the converse does not hold in general, suggesting that sequential learning strategies are potentially more powerful. Contrary to this intuition, we present a new protocol for learning an unknown $d$-dimensional unitary channel to within $\varepsilon$ in diamond distance using $\bigo{d^2/\eps}$ parallel queries, matching the lower bound presented in [Haah, Kothari, O'Donnell, and Tang, FOCS '23]. Our protocol thus closes the gap between parallel and sequential strategies for unitary channel tomography. As applications, it achieves optimal query complexity for boundary-regime quantum channel tomography and improves the query efficiency of the best-known protocol for tomography of fermionic linear optics.
\end{abstract}

\tableofcontents

\section{Introduction}
The dynamics of a $d$-dimensional closed quantum system is, from a mathematical perspective, described by a unitary channel $\mathcal{U}(\cdot) = U (\cdot) U^\dagger$. This description prompts a simple yet fundamental question for experimentalists: given black-box queries to this evolution, how efficiently can one reconstruct its action on every input quantum state $\eps$-accurately with high probability? In physicists' terminology, this is the \emph{tomography of reversible quantum processes}\footnote{
In the literature, this task also appears under the names \emph{unitary estimation}, \emph{unitary process tomography}, and \emph{unitary channel/transformation learning}. When discussing prior work, we follow the terminology of each original source.
}, with the distinction that we use the worst-case error, e.g., the diamond distance, rather than the average one. Such a guarantee applies uniformly to all input states, including those entangled with an external reference system. This topic has been extensively studied under different access models, access architectures, and performance measures \cite{PhysRevA.64.050302, PhysRevA.72.042338, Kahn_2007, PhysRevA.81.032324, BaldwinKalevDeutsch2014, Sedl_k_2019, Yang_2020, Christandl2021, ZhaoEtAl2024BoundedGate, one_to_one_correspondence2026, chen_Girardi2026}. Readers can also refer to \cite[Sections 1.2 and 1.3]{HKOT23} for a more comprehensive overview.

In 2023, the breakthrough work of Haah, Kothari, O'Donnell, and Tang \cite{HKOT23} established that $\Theta(d^2/\varepsilon)$ queries are necessary and sufficient to $\varepsilon$-accurately learn a $d$-dimensional unitary channel in diamond distance at a constant success probability. Their $\mathcal{O}(d^2/\varepsilon)$-query algorithm relies on a ``bootstrapping'' procedure that interleaves sequential queries to the unknown channel with known operations, adaptively updating both the hypothesis and the intermediate operations via feedback from measurements. A comparable $\widetilde{\mathcal{O}}(d^{2}/\varepsilon)$ query complexity is achievable given sequential, controlled access to both $U$ and $U^{\dagger}$ \cite{vanapeldoorn2022quantumtomographyusingstatepreparation}. More recently, Grewal and Liang \cite{grewal2026efficientlearningstructuredquantum} recovered the optimal $\mathcal{O}(d^2/\eps)$ query complexity for general qubit unitaries through a fine-grained analysis of the Pauli spectra of unitary operators, again relying on sequential bootstrapping.

The aforementioned algorithms thus exploit both the total number and the architecture of the queries. In a \emph{sequential} learning strategy, queries to the unknown channel may be interleaved with arbitrary quantum operations, allowing the learner to retain quantum information between queries and adaptively design subsequent operations based on historical measurement outcomes. A \emph{parallel} learning strategy prepares a joint input state, applies all queries in a single oracle layer, and performs a final measurement directly to retrieve information. The input may be entangled across the query systems and arbitrary ancillas, and the final measurement may act jointly on all outputs and retained ancillary systems. The restriction is that the output of one query cannot influence the input to another. From the perspective of higher-order quantum operations and quantum combs~\cite{Giulio_comb_2009}, any parallel learning strategy can be simulated by a sequential one, whereas the converse does not hold in general~\cite{PhysRevA.81.032339, Bavaresco2022, Liu_2023, 10.1145/3618260.3649621}, suggesting that sequential strategies are strictly more powerful for some quantum information processing tasks.

Nevertheless, parallel strategies have their own practical advantages. Parallelism is a fundamental driver of large-scale data processing: recent advances in classical computing rely heavily on distributing computations across processors and executing them concurrently \cite{lecun2015deep,hennessy2019golden}. A similar issue arises in quantum information processing, where the arrangement of queries can be as important as their number \cite{giovannetti2011advances,Giulio_comb_2009,Liu_2023}. Sequential protocols may require long coherent circuits or repeated measurement and feedback, whereas parallel protocols, when simultaneous access is available, can reduce oracle-query depth and experimental runtime. This is particularly useful when the unknown process cannot be repeated on demand, allowing the learner to probe all available resources at once without relying on subsequent adaptive access. This distinction is relevant to near-term quantum platforms \cite{preskill2018nisq}, including superconducting circuits, neutral Rydberg atoms, and photonic devices, where many degrees of freedom can be controlled simultaneously, but long sequential circuits and repeated feedback loops can impose time overhead and accumulate errors \cite{arute2019quantum,wu2021strong,bluvstein2024logical,zhong2020quantum}.

Returning to our original task of diamond-distance unitary channel tomography, we are thus motivated to ask whether the practical advantages of parallel strategies can be achieved without sacrificing query efficiency. More precisely, we seek to resolve the following dichotomy:

\begin{center}
    \textit{Can strictly parallel strategies attain the $\mathcal{O}(d^2/\varepsilon)$ query complexity achievable with their sequential counterparts for learning a $d$-dimensional unitary channel to accuracy $\varepsilon$ in diamond distance, or is there an (asymptotic) separation between these two query architectures?}
\end{center}

In this work, we resolve this dichotomy in favor of the former: with a properly chosen entangled probe state and a joint measurement, unitary channel tomography can be solved using optimal counts of parallel queries that match the query lower bound for this task \cite[Theorem 1.2]{HKOT23}, establishing that there is no asymptotic separation between the two architectures for this task in terms of query complexity.

\begin{theorem}[Query-optimal learning of unitary channel with parallel queries] \label{thm:query_optimal_parallel_tomo}
Let $d \geq 1$ and $0 < \eps < 2$. Given only parallel black-box access to an unknown $d$-dimensional unitary channel $\mathcal{U}$, there exists an algorithm that outputs a classical description of a unitary channel $\widehat{\mathcal{U}}$ as the estimate of $\mathcal{U}$, such that
$$
\Pr \left[ \left\|\widehat{\mathcal{U}} - \mathcal{U}  \right\|_{\diamond} \leq \eps \right] \geq \frac{2}{3},
$$
and uses at most $Q = d(L - d) \leq \frac{8 \pi d^2}{\eps} = \bigo{d^2/\eps}$ parallel queries to $U$, where $L := \lceil \frac{8 \pi d}{\eps} \rceil - 1$.
    
\end{theorem}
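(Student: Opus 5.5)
The plan is to build a \emph{covariant} parallel strategy and reduce the diamond-distance analysis to the eigenphases of the error unitary $V := U^\dagger \widehat U$. Under $U^{\otimes Q}$, Schur--Weyl duality decomposes $(\C^d)^{\otimes Q}$ into irreps $\lambda$ of $\mathsf{U}(d)$ with at most $d$ rows and $Q$ boxes, each with a multiplicity space of the same dimension $d_\lambda$. Pairing the irrep with the multiplicity space gives a maximally entangled vector $\Ket{I_\lambda}$, and $U^{\otimes Q}$ acts on it as $(U_\lambda \otimes \openone)\Ket{I_\lambda}$.

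\textbf{Step 1: the probe and the measurement.} I take the probe to be
$$
\ket{\psi} = \sum_\lambda c_\lambda \Ket{I_\lambda}/\sqrt{d_\lambda},
$$
with $\lambda$ ranging over diagrams that fit in a $d \times (L-d)$ box; these have at most $d(L-d) = Q$ boxes, and any missing boxes are padded with determinant (one-dimensional) representations. Equivalently, I write $\lambda$ through the shifted row lengths $l_k = \lambda_k + d - k$, which are distinct elements of $\{0,\dots,L-1\}$. The measurement is the Haar-covariant POVM
$$
\dd\widehat U \,\, (\widehat U^{\otimes Q}) \ket{\eta}\!\bra{\eta} (\widehat U^{\otimes Q})^\dagger, \qquad \ket{\eta} = \sum_\lambda \sqrt{d_\lambda} \Ket{I_\lambda}.
$$
Since $\int \dd\widehat U\, \widehat U_\lambda \otimes \overline{\widehat U_{\lambda'}}$ is supported on $\lambda = \lambda'$, this is a valid POVM on the relevant subspace.

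\textbf{Step 2: the error distribution.} The outcome density is
$$
p(\widehat U) = \Bigl|\sum_\lambda c_\lambda\, \chi_\lambda(V)\Bigr|^2 .
$$
By the Weyl character formula,
$$
\chi_\lambda(V) = \frac{\det\bigl[e^{\I l_j \theta_k}\bigr]}{\Delta(e^{\I\theta})},
$$
where $\theta_k$ are the eigenphases of $V$. I choose $c_\lambda$ to factor over rows, $c_\lambda \propto \det$-compatible with $\prod_k f(l_k)$, where $f$ is a sine window $f(l) \propto \sin\bigl(\pi(l+1)/(L+1)\bigr)$, as in optimal Heisenberg-limited phase estimation. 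Cauchy--Binet then turns $\sum_\lambda c_\lambda \chi_\lambda(V)$ into
$$
\det\Bigl[\sum_l f(l)\, e^{\I l \theta_k} \cdots\Bigr] \Big/ \Delta .
$$
Combined with the Weyl integration formula $|\Delta|^2 \dd\theta$, this shows the eigenphase distribution is a determinantal point process with kernel $K(\theta,\theta') = \sum_{l=0}^{L-1} \overline{g_l(\theta)}\, g_l(\theta')$. Here the $g_l$ are sine-windowed Fourier modes, and the determinantal structure is what lets me handle all $d$ phases jointly.

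\textbf{Step 3: from eigenphases to diamond distance.} For the global-phase-free channel distance,
$$
\bigl\|\widehat{\mathcal U} - \mathcal U\bigr\|_\diamond \le 2 \min_\phi \max_k \bigl|e^{\I(\theta_k - \phi)} - 1\bigr| .
$$
It therefore suffices to show that all $\theta_k$ lie in an arc of half-width about $\eps/2$, modulo a common phase, with probability at least $2/3$. My plan is to bound $\E \sum_k \sin^2\bigl((\theta_k - \bar\phi)/2\bigr)$ using the one- and two-point correlation functions of the process. I will pick $\phi$ so that it absorbs the "centre of mass" determined by which $d$ of the $L$ Fourier levels are occupied. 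I then convert this expectation into a max-bound via Markov's inequality together with a tail estimate for the sine window, $|\hat f(\theta)|^2 \lesssim 1/(L\theta)^4$. The constant $8\pi$ in $L = \lceil 8\pi d/\eps \rceil - 1$ should come from requiring a per-phase spread of roughly $\pi d/L \le \eps/8$.

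\textbf{Main obstacle.} The crux is Step 3. The $d$ phases come from a joint determinantal law, and I need their \emph{maximum} deviation to be $O(d/L)$ rather than the naive $O(\sqrt d \cdot d/L)$ that a union bound would give. My first attempt is to exploit the fact that the kernel is a projection onto an $L$-dimensional windowed space holding $d$ particles: the particles then behave like a gapped "Fermi sea" whose positions concentrate within $O(d/L)$ of an unknown common rotation. I would make this rigorous by bounding the variance of the linear statistic $\sum_k e^{\I\theta_k}$ through the kernel. If that fails, I would switch the probe weights to a product of $d$ independent windows on suitably separated levels, so that the phases decouple and a direct per-phase Chebyshev bound suffices.
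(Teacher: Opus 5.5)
Your Step 3, which you identify as the crux, is left open, and none of the routes you list reaches $\mathcal{O}(d^2/\eps)$. Bounding $\E\sum_k\sin^2((\theta_k-\bar\phi)/2)$ and applying Markov is a second-moment argument. For an antisymmetric probe on $L$ levels this expectation is of order $d^3/L^2$: that is the ground-state energy of the restricted hopping Hamiltonian, and the $d$ mutually repelling phases spread over an arc of width $\sim d/L$, which re-centring cannot shrink. The resulting failure bound $\lesssim d^3/(L\eps)^2$ forces $L\sim d^{1.5}/\eps$ and $Q\sim d^{2.5}/\eps$.

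Your two alternatives do not fix this. The variance of the linear statistic $\sum_k e^{\I\theta_k}$ controls only an average and is blind to a single outlying eigenphase, so it cannot bound $\max_k$. Per-phase Chebyshev in your fallback is again a second-moment bound and again gives $d^{2.5}/\eps$; moreover, separated windows do not decouple the phases, since the density is still $|\det[g_j(\theta_i)]|^2$.

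Your diagnosis is also inverted: it is the second moment, not the union bound, that loses the $\sqrt d$. The missing step is a union bound through the \emph{first} moment of the outlier count at the fixed centre $0$. Since $I_\eps^{\times d}\subseteq\widetilde{\mathsf E}_\eps$, the failure probability is at most $\E N$ with $N=\#\{k:\theta_k\notin I_\eps\}$. For the determinantal density this is exactly $\E N=\int_{I_\eps^c}\sum_{\ell=1}^d|\varphi_\ell(\theta)|^2\dd\nu(\theta)$, which follows from Laplace expansion and orthonormality of the $\varphi_\ell$. The fourth-power tail you mention, in the sharper form $|\varphi_\ell(\theta)|^2\lesssim\ell^2/((L+1)^3\vartheta^4)$ valid for $\vartheta\ge 4\pi\ell/(L+1)$, then gives $\E N\lesssim d^3/((L+1)\eps)^3$. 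This is a small constant once $L+1\ge 8\pi d/\eps$, and no random centre or Fermi-sea concentration is needed.

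That estimate depends on the probe being the Slater determinant of the $d$ lowest sine modes, $a_{\bfk}=\det[r_{k_j}(\ell)]_{1\le j,\ell\le d}$, whose $\ell$-th mode sits at $|\theta|\approx\ell\pi/(L+1)$. Only this form lets Cauchy--Binet collapse $\sum_{\bfk}a_{\bfk}\psi_{\bfk}$ into $\det[\varphi_\ell(\theta_j)]/\sqrt{d!}$ with a rank-$d$ projection kernel. As written, your Step 2 uses a single window $f$ with weights factoring over rows. The map $S\mapsto\prod_{k\in S}f(k)$ violates the three-term Pl\"ucker relations for $d\ge2$ and $L\ge d+2$, so it is not of the form $\det B[S,:]$ and Cauchy--Binet does not apply. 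Your kernel $\sum_{l=0}^{L-1}\overline{g_l}\,g_l$ also has rank $L$ rather than $d$.

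Step 1 has a separate problem. Padding diagrams of different sizes with determinant representations requires $d\mid(Q-|\lambda|)$. Even then, the covariant measurement returns $\det(V)^{m_\lambda}\chi_\lambda(V)$, so the likelihood acquires relative phases $e^{\I(m_\lambda-m_{\lambda'})\sum_k\theta_k}$. These are invariant under common shifts and do not cancel on the success event, so the likelihood is not $|\sum_\lambda c_\lambda\chi_\lambda(V)|^2$. The paper instead samples the box number $n$ classically and runs the fixed-$n$ covariant protocol. It then shows the cross terms between different $n$ integrate to zero over the shift-invariant event $\widetilde{\mathsf E}_\eps$, which recovers the determinantal density for the success probability.
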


\tref{thm:query_optimal_parallel_tomo} shows that a single round of parallel queries suffices to gather all the information required for diamond-distance unitary channel tomography. There is no need to interleave the black-box queries with intermediate operations, nor to update those operations as the protocol proceeds. In other words, adaptivity is not necessary for query-optimal unitary channel tomography. This adds a new entry to the ``adaptivity-does-not-help''\footnote{
There is one subtle point to clarify. Strict parallelism is an even stronger restriction than non-adaptivity: a non-adaptive strategy may still query the oracle in sequence with fixed interleaved operations, thus maintaining coherence between consecutive queries. 
} literature discussed in \cite{10353129}.

Meanwhile, a query complexity upper bound in diamond distance also implies bounds under other error measures. For any $\widehat{U}, U \in \U(d)$, we denote the diamond distance error $e_{\diamond}(\widehat{U}, U) := \| \widehat{\mathcal{U}} - \mathcal{U} \|_{\diamond}$, the entanglement infidelity $e_{\mathrm{ent}}(\widehat{U}, U) := 1 - | \Tr( \widehat{U}^\dagger U ) |^2 / d^2$, the average gate infidelity $e_{\mathrm{avg}}(\widehat{U}, U) := 1 - (| \Tr( \widehat{U}^\dagger U ) |^2 + d) / (d(d+1))$, and the worst-case gate infidelity $e_{\mathrm{wc}}(\widehat{U}, U) := 1 - \min_{\ket{\psi}} | \braket{\psi | \widehat{U}^\dagger U | \psi } |^2$. We omit their arguments for brevity. As reviewed in \cite[Section 1.1]{HKOT23} and \cite[Section IV.C]{PRXQuantum.6.030202}, the following inequalities hold:
$$
e_{\mathrm{avg}} = \frac{d}{d+1} e_{\mathrm{ent}}, \quad e_{\mathrm{ent}} \leq \frac{1}{4} e_{\diamond}^2 = e_{\mathrm{wc}} \leq \frac{d}{2} e_{\mathrm{ent}}.
$$
Plugging in the performance guarantee of Theorem \ref{thm:query_optimal_parallel_tomo}, we can find that $\mathcal{O}(d^2/\sqrt{\delta})$ queries suffice to achieve $\delta$-approximate tomography under each of the error measures $e_{\mathrm{ent}}$, $e_{\mathrm{avg}}$, and $e_{\mathrm{wc}}$, each matching the corresponding information-theoretic lower bound presented in \cite[Theorem 12]{one_to_one_correspondence2026} and \cite[Theorem 1.2]{HKOT23}.

\paragraph{Note added.} We recently became aware of independent and concurrent work by Scully and Zhou \cite{NoamZhouNearOptimal2026} that achieves a query complexity of $\mathcal{O}(d^2 \log d / \eps)$ via a different parallel tomography protocol.

\subsection{Prior work}
To the best of our knowledge, this is the first construction of a query-optimal parallel-access learning strategy that solves the unitary channel tomography task using diamond distance as the error measure. Our work draws on a long line of research on unitary channel estimation and learning. In this section, we review the results most closely related to our setting while briefly outlining their results and underlying constructions.

\paragraph{Parallel-access process tomography.} A standard approach to unitary tomography is itself parallel: one estimates the images of basis states and of sufficiently many superpositions to recover the relative phases between columns. The unitary promise reduces diamond-distance unitary tomography to $\mathcal{O}(d)$ pure-state tomography. As illustrated in \cite[Section 1.3]{HKOT23}, na\"ively invoking a pure-state
tomography algorithm in this way yields a query complexity of
$\mathcal{O}(d^{3}/\eps^{2})$. A sharper analysis improves the query complexity to
$\mathcal{O}(d^{2}/\varepsilon^{2})$ using an ancilla-free parallel strategy \cite{HKOT23}. The ancilla-assisted approach, dating back to Leung \cite{leung2000robustquantumcomputation}, performs tomography of the Choi state using $\mathcal{O}(d^{2}/\delta)$ parallel queries to achieve an entanglement infidelity of $\delta$. A standard conversion to diamond distance again yields an overall complexity of $\mathcal{O}(d^{3}/\varepsilon^{2})$ \cite{HKOT23}. Mele and Bittel \cite{mele2026optimallearningquantumchannels} developed a parallel, non-adaptive unitary channel-learning method whose rank-one specialization achieves $\mathcal{O}(d^{2}/\varepsilon^{2})$ query complexity through a direct analysis of the channel's Choi estimator, which has been extended to learning quantum channels with parallel queries by invoking the random purification technique \cite{chen_Girardi2026}.

\paragraph{Parallel covariant unitary estimation under fidelity losses.} Ac\'in, Jan\'e, and Vidal \cite{PhysRevA.64.050302} gave an early finite-query representation-theoretic optimization for estimating an arbitrary $\su(d)$ transformation with respect to the Haar-averaged channel fidelity, under the assumption that copies of the unitary channel are applied in parallel on half of a bipartite system. For $\su(2)$ estimation, rotation and reference-frame estimation protocols
achieving $1/n^{2}$ error scaling with $n$ parallel queries were developed
independently by Peres and Scudo 
\cite{peres2001entangled}, Bagan
et al. \cite{BaganBaigMunozTapia2004a, BaganBaigMunozTapia2004b}, Chiribella et al. \cite{chiribella2004efficient},
and Hayashi \cite{hayashi2006parallel}, each under its own invariant cost
function. Chiribella, D'Ariano, and Sacchi established a broader covariant framework for estimating compact group transformations \cite{PhysRevA.72.042338}. Kahn \cite{Kahn_2007} subsequently proved that the $1/n^2$ error rate is optimal and gave a saturating protocol, but the optimal prefactor scaling in $d$ was left undetermined.

Bisio, Chiribella, D'Ariano, Facchini, and Perinotti \cite{PhysRevA.81.032324} showed that a parallel strategy suffices for optimal learning of unitary channels under the storage-and-retrieval criterion, measured by average fidelity, via a covariant measure-and-operate protocol. Building upon this framework, Yang, Renner, and Chiribella \cite{Yang_2020} constructed a unitary tomography algorithm with sine-weighted representation sectors that achieves expected entanglement infidelity $\mathcal{O}(d^4 / n^2)$ when $n \geq \Omega(d^2)$. However, converting this average-measure guarantee to diamond distance $\eps$ gives a $\mathcal{O}(d^{2.5}/\eps)$ query complexity \cite{HKOT23}. More recently, Yoshida, Koizumi, Studzi\'nski, Quintino, and Murao \cite{one_to_one_correspondence2026} established a correspondence between optimal parallel unitary estimation and deterministic port-based teleportation, and pinned down the tight $\Theta(d^4)$ dimension dependence of the $n^{-2}$ infidelity coefficient. Specifically, for $\su(3)$ estimation, Yoshida, Yoshida and Murao \cite{yoshida2025asymptoticallyoptimalunitaryestimation} obtained the sharper infidelity expansion via a graph-Laplacian approach. These results have progressively refined the (in)fidelity analysis of parallel strategies; upgrading the guarantee to diamond distance, however, generally requires finer-grained control of the reconstruction error.

\paragraph{Architecture optimality and separation in quantum information processing.} Parallel access concerns the causal arrangement of queries to quantum resources, not a restriction to independent experiments\footnote{
For instance, the architectural advantages quantified by quantum Fisher information in quantum metrology are asymptotic and can potentially emerge in the limit of infinitely many independent repetitions; see, e.g., \cite{PhysRevLett.123.110501}.}: a parallel unitary tomography protocol may prepare a jointly entangled probe, apply all available copies of the unknown unitary in parallel in a single oracular layer, and perform a collective measurement. Sequential protocols additionally allow quantum information to pass from an earlier oracle output to a later oracle input through intermediate operations and memory \cite{Giulio_comb_2009} and are thus more powerful. This is a distinction broader than classical feedback in choosing subsequent experiments.

These two architectures do not necessarily separate. For noiseless single-parameter unitary families, Giovannetti, Lloyd, and Maccone \cite{giovannetti2006quantum} show that entangled parallel probes can attain the Heisenberg scaling achieved by coherent sequential interrogation. Duan, Feng, and Ying \cite{PhysRevLett.98.100503} show that the perfect discrimination between two known unitary channels requires the same number of queries in both architectures.

Such equal query efficiency is, however, not universal. Bavaresco, Murao, and Quintino \cite{Bavaresco2022} exhibit a four-unitary ensemble perfectly distinguishable with two sequential queries, but not with any two-query parallel protocol. For general channels, Harrow, Hassidim, Leung, and Watrous \cite{PhysRevA.81.032339} give a pair distinguishable with two sequential calls but with no finite number of parallel ones. Strict finite-query hierarchies also appear in tasks including quantum channel discrimination \cite{PhysRevLett.127.200504}, noisy single-parameter metrology \cite{Liu_2023}, and deciding quantum promise problems \cite{10.1145/3618260.3649621}.

For learning unitary channels, group symmetry has been shown to restore the optimality of parallel strategies. Chiribella, D'Ariano, and Perinotti \cite{PhysRevLett.101.180501} established this parallel optimality for estimating a unitary representation of a compact group under group-invariant error measures, and also proved it for binary unitary discrimination. Recently, Hayashi~\cite{Hayashi2025indefinitecausal} revisited this estimation problem by introducing generalized POVMs, and showed that the conclusion holds even when indefinite-causal-order strategies are allowed.

\subsection{Applications}

\subsubsection{Query-optimal boundary-regime quantum channel tomography}
\label{subsec:boundary_regime_channel_tomography}
Our first application is an improvement to the known upper bound for quantum channel tomography in the boundary regime. For a quantum channel
$\mathcal{E}:\Lin(\mathbb C^{d_1})\to \Lin(\mathbb C^{d_2})$
with Kraus rank at most $r$, its dilation rate is defined as $\tau:= r d_2/d_1$ \cite{chen_Girardi2026}.
Since quantum channels are trace-preserving, the normalization constraint gives $\tau \geq 1$. 
The boundary regime corresponds to $\tau=1$, or equivalently $r d_2=d_1$. Chen, Girardi, Oufkir, Yu, and Zhang \cite {chen_Girardi2026} established the query complexity upper bound
$$
    \mathcal{O}\left(
        \min\left\{
        \frac{r d_1^{3/2}d_2}{\eps},
        \frac{r d_1d_2}{\eps^2}
        \right\}
    \right),
$$
for quantum channel tomography in diamond distance in this regime. They also proved a query-complexity lower bound $\Omega\left( r d_1d_2\eps^{-1}\right)$ for this task.
Thus, before our observation, there remained a gap of a factor $\sqrt{d_1}$ in the Heisenberg-scaling term for diamond-norm tomography in the boundary regime.

The key tool we use is the following ``dilation does not help'' theorem for parallel testers \cite{chen_Girardi2026,girardi2025random}. 
We state it in a form tailored to our application.

\begin{lemma}[Dilation does not help for parallel testers~{\cite[Theorem 1.5]{chen_Girardi2026}}]
\label{lem:dilation_does_not_help}
Let $\mathcal E$ be an unknown quantum channel, and $\mathsf{Dilation}_r(\mathcal E)$ be the set of its Stinespring dilations with ancilla dimension $r$. 
Suppose a channel-estimation task can be solved by a parallel tester that makes $K$ queries to an arbitrary dilation $V\in \mathsf{Dilation}_r(\mathcal E)$. 
Then the same task can be solved by a parallel tester that makes $K$ queries directly to $\mathcal E$.
\end{lemma}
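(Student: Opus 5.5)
This lemma is quoted from \cite[Theorem 1.5]{chen_Girardi2026}, so the paper will simply invoke it. If I had to reprove it, my plan would be a \emph{random purification} argument. Fix a dilation $V_0\in\mathsf{Dilation}_r(\mathcal E)$, i.e.\ an isometry $V_0:\C^{d_1}\to\C^{d_2}\otimes\C^r$ with $\Tr_{r}(V_0\,\cdot\,V_0^\dagger)=\mathcal E$. Every other dilation with ancilla dimension $r$ has the form $(\openone\otimes W)V_0$ with $W\in\U(r)$. This follows from the uniqueness of Stinespring dilations up to an isometry on the environment; when the Kraus rank is below $r$ one passes to a partial isometry and completes it to a unitary.

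Since the given parallel tester succeeds for \emph{every} dilation, it also succeeds on average when the dilation is drawn as $(\openone\otimes W)V_0$ with $W$ Haar-random. Because the queries are parallel, the tester sees the $K$-fold operator $V^{\otimes K}$ acting on its probe. Its output distribution is therefore linear in the twirled $K$-copy channel
\[
\rho\mapsto \E_{W}\Big[(\openone\otimes W)^{\otimes K}V_0^{\otimes K}\rho\, V_0^{\dagger\otimes K}(\openone\otimes W^\dagger)^{\otimes K}\Big].
\]

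The key step is to show that this twirled object can be produced from $\mathcal E^{\otimes K}$ alone. I would apply each of the $K$ queries to $\mathcal E$ using a purification implemented by the learner: by Schur--Weyl duality, the twirl $\E_W W^{\otimes K}(\cdot)W^{\dagger\otimes K}$ over the $K$ environment registers equals $\bigoplus_\lambda \mathrm{id}_{\Sym\text{-part}}\otimes \mathrm{Dep}_{\U\text{-part}}$. Concretely, the twirled $K$-copy channel equals $\mathcal E^{\otimes K}$ followed by a fixed operation $\Phi$ that prepares the environment registers. $\Phi$ ``re-purifies'' the outputs coherently: it is a Uhlmann-type isometry acting on the $K$ output systems together with fresh ancillas in the symmetric subspace of $(\C^r\otimes\C^{r})^{\otimes K}$. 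The trace-out map $\Tr_{r^K}$ must be intertwined so that the resulting state agrees with the twirled one, and this is exactly the random purification channel of \cite{girardi2025random}. The tester then runs $\Phi$ after its $K$ parallel queries to $\mathcal E$, still with a single oracle layer, and applies its original measurement.

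I expect the main obstacle to be constructing and verifying $\Phi$. One has to check that its output equals the Haar-twirled purification for all inputs, including inputs entangled with the tester's reference system, which requires the Schur--Weyl / Gelfand--Tsetlin decomposition of $(\C^r)^{\otimes K}$. I would first verify the single-copy case, where $\Phi$ is a standard purification by a maximally entangled ancilla, and then lift it via permutation invariance.
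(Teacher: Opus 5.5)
The paper gives no proof here; it imports the lemma from \cite[Theorem 1.5]{chen_Girardi2026}, which rests on the random Stinespring \emph{superchannel} of \cite{girardi2025random}. Your outer reduction is fine: dilations with ancilla dimension $r$ are exactly $(\openone\otimes W)V_0$ with $W\in\U(r)$, a tester that succeeds for every dilation also succeeds on the Haar average, and its statistics are linear in the twirled $K$-query channel $\mathcal{T}_K$.

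The step that fails is the claim that $\mathcal{T}_K=\Phi\circ\mathcal{E}^{\otimes K}$ for a post-processing $\Phi$. The twirl over $W^{\otimes K}$ keeps the $\mathfrak{S}_K$-part of the environment. That part is correlated with the permutation structure of the \emph{input}, which $\mathcal{E}^{\otimes K}$ can erase. Writing $V_0=\sum_i K_i\otimes\ket{i}$, one finds $\Tr_{E^K}[(\openone\otimes P_\pi^\dagger)V_0^{\otimes K}\rho V_0^{\dagger\otimes K}]=P_\pi\,\mathcal{E}^{\otimes K}(P_\pi^\dagger\rho)$. For $K=2$, with swap $F$, $\Pi_\pm=(\openone\pm F)/2$ on $E_1E_2$ and $d_\pm=\Tr\Pi_\pm$, this gives
\[
\mathcal{T}_2(\rho)=\sum_{\pm}\frac{1}{2d_\pm}\Big[\mathcal{E}^{\otimes 2}(\rho)\pm F\,\mathcal{E}^{\otimes 2}(F\rho)\Big]\otimes\Pi_\pm .
\]
The one-sided term $\mathcal{E}^{\otimes 2}(F\rho)$ is in general not a function of $\mathcal{E}^{\otimes 2}(\rho)$. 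Here is a concrete counterexample, inside the boundary regime the paper needs. Take $d_1=rd_2$ with $r\geq 2$ and $V_0=\openone$, so $\mathcal{E}=\Tr_{\C^r}$. Then $\mathcal{T}_2$ fixes both $\omega\otimes\Pi_+/d_+$ and $\omega\otimes\Pi_-/d_-$, while $\mathcal{E}^{\otimes 2}$ maps both to $\omega$. So no $\Phi$ exists, not even one allowed to depend on $\mathcal{E}$. The state-level random purification map works on inputs $\rho^{\otimes K}$. It does not work on outputs of $\mathcal{E}^{\otimes K}$ applied to non-symmetric or entangled probes. Consequently the single-copy case does not lift, and the verification you flag as the main obstacle cannot succeed.

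The missing ingredient is pre-processing whose memory bypasses the oracle layer. For $K=2$ the construction is:
\begin{enumerate}
\item prepare a control qubit in $\ket{+}$ and apply a controlled swap to the two inputs;
\item make the two queries;
\item apply a controlled swap to the two outputs;
\item measure the control in $\{\ket{\pm}\}$ and append $\Pi_\pm/d_\pm$.
\end{enumerate}
By permutation covariance of $\mathcal{E}^{\otimes 2}$, the unnormalized post-measurement state for outcome $\pm$ is $\tfrac12[\mathcal{E}^{\otimes 2}(\rho)\pm F\,\mathcal{E}^{\otimes 2}(F\rho)]$, also with a reference system attached. This reproduces $\mathcal{T}_2$ exactly. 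It is a swap test that survives even though $\mathcal{E}$ may discard the data. For general $K$ the same idea uses coherent control over $\mathfrak{S}_K$, with controlled permutations before and after the queries. The oracle is still called in a single parallel layer. Composing such a superchannel with the original tester therefore gives a parallel tester with $K$ queries to $\mathcal{E}$, which is how the cited result holds.
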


We emphasize that the parallel nature of Lemma~\ref{lem:dilation_does_not_help} is essential here. 
The lemma shows that access to a Stinespring dilation does not help for \emph{parallel} testers, but it does not imply an analogous statement for sequential testers. 
Therefore, although the Haah-Kothari-O'Donnell-Tang protocol~\cite{HKOT23} is query-optimal, it is sequential and adaptive and hence cannot be directly combined with the dilation argument in Lemma~\ref{lem:dilation_does_not_help}. This is precisely where our parallel unitary tomography protocol becomes crucial: it is compatible with the dilation-does-not-help lemma. It can therefore be transferred from unitary tomography of the dilation to tomography of the original quantum channel.

\begin{theorem}[Optimal boundary-regime channel tomography]
\label{thm:boundary_regime_channel_tomography}
Let $0 < \eps < 2$, $\mathcal E:\Lin(\mathbb C^{d_1})\to \Lin(\mathbb C^{d_2})$ be an unknown quantum channel with Kraus rank at most $r$, and suppose that $r d_2=d_1$. Then there exists a parallel channel tomography protocol that outputs a classical description of a channel $\widehat{\mathcal E}$ satisfying
$$
\Pr\left[
\big\|\widehat{\mathcal E}-\mathcal E\big\|_\diamond\le \eps
\right] \geq \frac{2}{3},
$$
using
$\mathcal O\left(r d_1d_2\eps^{-1}\right)$
queries to $\mathcal E$. 
\end{theorem}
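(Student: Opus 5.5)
The plan is to reduce to unitary tomography of a Stinespring dilation and then invoke Lemma~\ref{lem:dilation_does_not_help}. Since $\mathcal E$ has Kraus rank at most $r$, it has an isometric dilation $V:\mathbb C^{d_1}\to\mathbb C^{d_2}\otimes\mathbb C^{r}$ with $\mathcal E(\rho)=\Tr_{r}(V\rho V^\dagger)$. In the boundary regime $rd_2=d_1$, so $V$ maps $\mathbb C^{d_1}$ isometrically onto a space of the same dimension. Hence $V$ is a unitary on a $d$-dimensional space with $d=d_1=rd_2$. The first step is to make this identification explicit. Fix an isomorphism $\mathbb C^{d_2}\otimes\mathbb C^{r}\cong\mathbb C^{d_1}$, so that every $V\in\mathsf{Dilation}_r(\mathcal E)$ becomes an element of $\U(d_1)$.

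Next I would run the parallel protocol of Theorem~\ref{thm:query_optimal_parallel_tomo} on $V$ with $d=d_1$ and accuracy $\eps$. This uses $\mathcal O(d_1^2/\eps)$ parallel queries to $V$. It outputs $\widehat V\in\U(d_1)$ with $\|\widehat{\mathcal V}-\mathcal V\|_\diamond\le\eps$ with probability at least $2/3$. The output is $\widehat{\mathcal E}:=\Tr_r\circ\widehat{\mathcal V}$, which is a valid channel $\Lin(\mathbb C^{d_1})\to\Lin(\mathbb C^{d_2})$. By monotonicity of the diamond norm under post-composition with the partial trace,
\[
\|\widehat{\mathcal E}-\mathcal E\|_\diamond=\|\Tr_r\circ(\widehat{\mathcal V}-\mathcal V)\|_\diamond\le\|\widehat{\mathcal V}-\mathcal V\|_\diamond\le\eps .
\]

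Two things then need checking. First, the guarantee must hold for every dilation $V$: Theorem~\ref{thm:query_optimal_parallel_tomo} is worst case over $\U(d_1)$, and any dilation yields the same $\mathcal E$ after the partial trace. So the task ``output $\widehat{\mathcal E}$ that is $\eps$-close to $\mathcal E$'' is solved by a parallel tester with $K=\mathcal O(d_1^2/\eps)$ queries to an arbitrary $V\in\mathsf{Dilation}_r(\mathcal E)$. Second, Lemma~\ref{lem:dilation_does_not_help} then gives a parallel tester with the same $K$ queries directly to $\mathcal E$. The count is $d_1^2=d_1\cdot rd_2$, so $K=\mathcal O(rd_1d_2/\eps)$, as claimed.

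I expect the main obstacle to be checking that the hypotheses of Lemma~\ref{lem:dilation_does_not_help} are met. The estimation task must depend only on $\mathcal E$, not on the particular dilation, and the protocol must be strictly parallel. Both hold here: the post-processing $\Tr_r$ makes the success criterion a function of $\mathcal E$ alone, and Theorem~\ref{thm:query_optimal_parallel_tomo} is parallel by construction. A minor point is the edge case $r d_2=d_1$ with an isometry whose output space is identified with $\mathbb C^{d_1}$ only up to the fixed isomorphism. This affects neither the query count nor the norms.
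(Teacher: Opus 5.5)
Your proposal is correct and matches the paper's proof: identify an $r$-ancilla Stinespring dilation with a unitary in $\U(d_1)$ using $rd_2=d_1$, run the parallel protocol of Theorem~\ref{thm:query_optimal_parallel_tomo} on it, output $\Tr_r\circ\widehat{\mathcal V}$ with the error controlled by contractivity of the diamond norm, and transfer to direct queries to $\mathcal E$ via Lemma~\ref{lem:dilation_does_not_help}, giving $\mathcal O(d_1^2/\eps)=\mathcal O(rd_1d_2/\eps)$ queries. Your explicit check that the guarantee is worst-case over all dilations and that the success criterion depends only on $\mathcal E$ is a detail the paper leaves implicit.
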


\begin{proof}[Proof of \tref{thm:boundary_regime_channel_tomography}]
Let $V\in \mathsf{Dilation}_r(\mathcal E)$ be an arbitrary Stinespring dilation of $\mathcal E$ with $r$-dimensional ancillae. 
Thus we can write $\mathcal E(\rho)=\Tr_{\mathrm{anc}} \left[V\rho V^\dagger\right]$,
where
$V:\mathbb C^{d_1}\to \mathbb C^{d_2}\otimes \mathbb C^r$
is an isometry. 
In the boundary regime $r d_2=d_1$, the input and output dimensions of $V$ are equal. 
So the dilation $V$ can be viewed directly as a $d_1$-dimensional unitary.

Suppose for the moment that we have query access to the dilated unitary channel $\mathcal V(\cdot)=V(\cdot)V^\dagger$.
Applying our parallel unitary tomography protocol in Theorem \ref{thm:query_optimal_parallel_tomo} to $V$, we can output a unitary $\widehat V$ satisfying $\|\widehat{\mathcal V}-\mathcal V\|_\diamond\le \eps$ with probability $\geq 2/ 3$ using $\mathcal O\left(d_1^2\eps^{-1}\right)=\mathcal O\left(r d_1d_2\eps^{-1}\right)$ queries to $V$. 
Define the reconstructed channel
$\widehat{\mathcal E}(\rho):=
\Tr_{\rm anc} [\widehat V\rho \widehat V^\dagger]$.
By the contractivity of the diamond norm, we have  
$$
\left\|\widehat{\mathcal E}-\mathcal E\right\|_\diamond
\le
\left\|\widehat{\mathcal V}-\mathcal V\right\|_\diamond \leq \eps.
$$
Finally, this dilation-query protocol is parallel, since our unitary tomography protocol is already parallel. By Lemma~\ref{lem:dilation_does_not_help}, it can be directly simulated by a parallel tester that makes the same number of queries directly to the unknown channel $\mathcal E$. 
This finishes the proof.
\end{proof}

\tref{thm:boundary_regime_channel_tomography} removes the $\sqrt{d_1}$ gap in the query complexity and matches the lower bound $\Omega(rd_1d_2\eps^{-1})$ in all of $r$, $d_1$, $d_2$, and $\eps$. This finally\footnote{
An earlier version of \cite{he2026optimalclassicalshadowestimation} claimed query-optimal unitary channel tomography with parallel access, and hence boundary-regime channel tomography; the authors subsequently identified a gap in their proof and revised the claim to match the best known bound of \cite{chen_Girardi2026}. Our construction is able to recover optimality.} settles the question of query-optimal boundary-regime channel tomography, which was initially left open in \cite[Section 1.4]{chen_Girardi2026}.

\subsubsection{Query-efficient tomography of fermionic linear optics}
For clarity, we start by introducing several notations: For a $d$-dimensional Hilbert space $\H$, we write $\wedge^\bullet \H := \bigoplus_{k=0}^{d} \wedge^k \H$, correspondingly, for $V \in \U(d)$, we define $\wedge^\bullet V := \bigoplus_{k=0}^d \wedge^k V$ that acts on the antisymmetric subspace as $\wedge^k V \ket{v_1 \wedge \cdots \wedge v_k} := \ket{V v_1 \wedge \cdots V v_k}$.

The second application is an improved upper bound for tomography of the family of fermionic Gaussian unitaries, commonly referred to as \emph{fermionic linear optics} (FLO) \cite{Bravyi_Lagrangian_2005, PRXQuantum.3.020328, christensen2026learningfermioniclinearoptics, braccia2026commutantfermionicgaussianunitaries}. The FLO concerns a special family of unitary operators on the $n$-particle exterior Fock space $\mathcal{F}_n := \bigoplus_{k=0}^n \wedge^k \C^n$. Let $\{e_j\}_{j=1}^n$ be an orthonormal basis for $\C^n$. The creation operator $c_j^\dagger$ acts on $\wedge^\bullet \C^n$ by $c_j^\dagger \ket{\psi} := \ket{e_j \wedge \psi}$, which can be extended linearly to the actions on $\mathcal{F}_n$ with the convention $\wedge^{n+1} \C^n = \{0\}$. The annihilation operator $c_j$ is defined as the adjoint of $c_j^\dagger$. For $j_1 < \cdots < j_k$ we have $c_{j_1}^\dagger \cdots c_{j_k}^\dagger \ket{0^n} = e_{j_1} \wedge \cdots \wedge e_{j_k}$, so the creation operators generate every particle-number sector from the vacuum. The fermionic parity operator is defined as $\mathbf{Par} := (-1)^{\sum_j c_j^\dagger c_j}$. These operators obey the canonical anticommutation relations \cite{Dereziński_Gérard_2013}
$$
\forall\, j, k \in [n], \quad \{c_j, c_k\} = \left\{ c_j^\dagger, c_k^\dagger \right\} = 0, \quad \{c_j, c_k^\dagger\} = \delta_{j, k} \openone_{\mathcal{F}_n}.
$$
Equivalently, we conduct our analysis in the Majorana basis, which is a new collection of self-adjoint operators $\{\gamma_{j}\}_{j=1}^{2n}$ from $\{c_j\}_{j=1}^n$ by
$$
   \forall \, j \in [n], \quad \gamma_{2j-1} = c_j + c_j^\dagger, ~
    \gamma_{2j} = -\I\left( c_j - c_j^\dagger \right),
$$
so that they satisfy simpler canonical anticommutation relations
$\{\gamma_j, \gamma_k\} = 2\delta_{j,k} \openone_{\mathcal{F}_n}$ for all
$j, k \in [2n]$.

Following \cite{christensen2026learningfermioniclinearoptics}, an $n$-mode active FLO is a unitary on $\mathcal{F}_n$ whose adjoint map acts linearly on the Majorana operators: For every $R \in \O(2n)$, we choose the implementation $\Phi(R)$ of $R$ on $\mathcal{F}_n$ satisfying
\begin{equation}
\label{eqn:Bogoliubov_transformations}
\Phi(R)^\dagger \gamma_j \Phi(R) = \sum_{k =1 }^{2n} R_{j, k} \gamma_k.
\end{equation}
Such a unitary exists and is unique up to an overall phase \cite{Dereziński_Gérard_2013}. Then the channels $\{ \ad_{\Phi(R)}: R \in \O(2n) \}$ are called $n$-mode \emph{active} FLO channels. The two connected components of $\O(2n)$, namely $\SO(2n)$ and $\O^{-}(2n) := \{ R \in \O(2n) : \det(R) = -1 \}$, correspond to the parity-preserving and parity-flipping actions, respectively. The induced action $\Phi(R)$ commutes with $\mathbf{Par}$ in the former case and anti-commutes with it in the latter. \emph{Passive} FLO channels have their defining unitaries being the particle-number-conserving copy of $\U(n)$ embedded in $\SO(2n)$ via the standard realification \cite[Equation (1.1)]{christensen2026learningfermioniclinearoptics}, with the fixed change of basis required by our interleaved Majorana ordering. Thus, for each $V \in \U(n)$, the corresponding passive FLO channel is $\operatorname{Ad}_{\wedge^\bullet V}$.

Several prior studies have investigated FLO tomography as a model for optical quantum processes. Oszmaniec, Dangniam, Morales, and Zimbor\'as \cite{PRXQuantum.3.020328} proposed an algorithm using $\widetilde{\mathcal{O}}(n^5/\eps^2)$ queries to learn active FLOs. Cudby and Strelchuk \cite{Cudby2024} consider learning general FLOs $\eps_{\mathrm{Fro}}$-accurately in the Frobenius distance and obtained an $\mathcal{O}(n^{13}/\eps_{\mathrm{Fro}}^4)$-query algorithm. Most recently, the unified algorithm of Christensen and Zhao \cite{christensen2026learningfermioniclinearoptics} uses $\mathcal{O}(n^3/\eps)$ queries for passive FLOs and $\widetilde{\mathcal{O}}(n^4/\eps)$ for active FLOs using only parity-respecting state preparations, controllable gates, and measurements, and at most one ancillary mode. When allowing $n$ ancillary modes, the latter query complexity can be improved to $\widetilde{\mathcal{O}}(n^3/\eps)$.

We will start by establishing the query complexity of passive FLO tomography. To establish the structure of queries to passive FLOs, we will need the following duality:

\begin{lemma}[Skew Howe duality, see, e.g., \cite{Howe1987, braccia2026commutantfermionicgaussianunitaries}] \label{lemma:skew_howe_duality}
    For integers $L > d \geq 1$, if we borrow the notations from Section \ref{sec:rep_theory} and Definition \ref{def:canonical_transformation}, the following $\U(d) \times \U(L - d)$-equivariant decomposition holds:
    $$
    \wedge^\bullet \left( \C^{d} \otimes \C^{L - d} \right) \cong \bigoplus_{\lambda \in \Gamma_{L, d} } \mathcal{W}_{\lambda}^d \otimes \mathcal{W}^{L - d}_{\lambda^{\TP}},
    $$
    where $\lambda^{\TP}$ stands for the transpose of the Young diagram $\lambda$. Consequently, for every $V \in \U(d)$, we have
    \begin{equation}
    \label{eqn:decomposition_of_passive_FLO_query}
    \left( \wedge^{\bullet} V \right)^{\otimes (L - d)} \cong \wedge^{\bullet} \left( V \otimes \openone_{\C^{L - d}} \right) \cong \bigoplus_{\lambda \in \Gamma_{L, d}  } V_{\lambda} \otimes \openone_{\mathcal{W}^{L - d}_{\lambda^{\TP}}} = \bigoplus_{\mathbf{k} \in \Omega_{L, d}  } V_{\lambda(\mathbf{k})} \otimes \openone_{\mathcal{W}^{L - d}_{\lambda(\mathbf{k})^{\TP}}}.
    \end{equation}
\end{lemma}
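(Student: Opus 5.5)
We prove the skew Howe duality by reducing the exterior algebra to a tensor product of copies of $\wedge^\bullet\C^d$ and decomposing the result by a character computation. Write $\C^d\otimes\C^{L-d}\cong\bigoplus_{j=1}^{L-d}\C^d$, using the standard basis $\{f_j\}$ of $\C^{L-d}$. The exponential property $\wedge^\bullet(A\oplus B)\cong\wedge^\bullet A\otimes\wedge^\bullet B$ (graded tensor product, sending $a\wedge b\mapsto a\otimes b$) then gives
\[
\wedge^\bullet\left(\C^d\otimes\C^{L-d}\right)\cong(\wedge^\bullet\C^d)^{\otimes(L-d)}.
\]
Under this isomorphism, $V\otimes\openone$ acts as $(\wedge^\bullet V)^{\otimes(L-d)}$, because $V\otimes\openone$ preserves each summand $\C^d\otimes f_j$. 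This already yields the first isomorphism in \eqref{eqn:decomposition_of_passive_FLO_query}. The Koszul signs in the isomorphism do not matter for the $\U(d)$-action, since $V$ acts by even (degree-preserving) maps.

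For the decomposition itself, $\U(d)\times\U(L-d)$ acts through $V\otimes W$, and both groups are compact, so the space splits into isotypic components. Since the two actions commute, it suffices to compute the joint character and match it with a sum of products of irreducible characters. Take $x=\diag(x_1,\dots,x_d)$ and $y=\diag(y_1,\dots,y_{L-d})$. The eigenvalues of $x\otimes y$ are the products $x_iy_j$, so
\[
\Tr\wedge^\bullet(x\otimes y)=\prod_{i,j}(1+x_iy_j).
\]
The dual Cauchy identity states that
\[
\prod_{i,j}(1+x_iy_j)=\sum_\lambda s_\lambda(x)\,s_{\lambda^{\TP}}(y),
\]
where the sum runs over partitions $\lambda$ with at most $d$ rows and with $\lambda^{\TP}$ having at most $L-d$ rows. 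Equivalently, $\lambda$ fits in a $d\times(L-d)$ box, and we expect this to be exactly the index set $\Gamma_{L,d}$ from Section~\ref{sec:rep_theory}. The identity can be proved from the Jacobi–Trudi/RSK dual bijection, or quoted from \cite{Howe1987}. Schur polynomials are the characters of the irreducibles $\mathcal{W}^d_\lambda$, and characters determine representations of compact groups. Together these give the claimed multiplicity-free decomposition.

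The consequence for a fixed $V$ follows by restricting to $\U(d)\times\{\openone\}$. On each block, $V$ acts as $V_\lambda\otimes\openone_{\mathcal W^{L-d}_{\lambda^{\TP}}}$. The final reindexing by $\mathbf{k}\in\Omega_{L,d}$ uses the bijection $\mathbf{k}\mapsto\lambda(\mathbf{k})$ of Definition~\ref{def:canonical_transformation}. That map was defined earlier, so this step only requires checking that it is a bijection onto $\Gamma_{L,d}$.

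The main obstacle is confirming that the paper's $\Gamma_{L,d}$ and the bijection $\lambda(\mathbf{k})$ coincide exactly with the box-shaped index set. The algebraic steps are standard.
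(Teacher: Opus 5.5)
Your proof is correct. It takes a genuinely different route from the paper, because the paper gives no argument of its own: it imports the lemma from \cite{Howe1987, braccia2026commutantfermionicgaussianunitaries}, where it is an instance of Howe's dual-pair theory. That is a double-commutant argument: the $\U(d)$- and $\U(L-d)$-actions on $\wedge^\bullet(\C^d\otimes\C^{L-d})$ generate each other's commutants.

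Your route is the standard self-contained character proof. The exponential property gives the first isomorphism in \eqref{eqn:decomposition_of_passive_FLO_query} directly. You then match $\det(\openone+x\otimes y)=\prod_{i,j}(1+x_iy_j)$ against the dual Cauchy identity, using three facts:
\begin{itemize}
\item Schur polynomials are the irreducible characters, which is exactly Lemma~\ref{lemma:weyl_char_formula};
\item characters on the whole group are recovered from the maximal torus by conjugation invariance;
\item equality of characters gives (unitary) isomorphism for representations of a compact group.
\end{itemize}
The two routes are logically reversed. You take the dual Cauchy identity as input, provable by dual RSK or from $\prod_j(1+x_jt)=\sum_k e_k t^k$ with the dual Jacobi--Trudi formula. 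The dual-pair approach derives the decomposition from invariant theory and obtains that identity as a corollary by taking traces. Your version is shorter and makes the box-shaped index set explicit; Howe's places the result inside a framework that covers other dual pairs uniformly.

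The step you flagged as the main obstacle is immediate. Since $\ell(\lambda^{\TP})=\lambda_1$, the conditions $\ell(\lambda)\le d$ and $\ell(\lambda^{\TP})\le L-d$ are verbatim the definition of $\Gamma_{L,d}$. From $k_j=\lambda_{d+1-j}+j-1$ you get $k_1=\lambda_d\ge 0$, $k_d=\lambda_1+d-1\le L-1$, and $k_{j+1}-k_j=\lambda_{d-j}-\lambda_{d+1-j}+1\ge 1$. Reading the same inequalities backwards for the inverse map shows that $\mathbf{k}\mapsto\lambda(\mathbf{k})$ is a bijection $\Omega_{L,d}\to\Gamma_{L,d}$, so the final reindexing is justified.
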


The expression in \eref{eqn:decomposition_of_passive_FLO_query} tells us that $L - d$ parallel queries to the unknown passive FLO are sufficient to supply all sectors $\lambda \in \Gamma_{L, d}$, or equivalently, $\mathbf{k} \in \Omega_{L, d}$, used in our construction of full $\U(d)$ unitary tomography [cf. Section \ref{sec:proof_of_main_theorem}]. Unlike Eq. (13), this representation contains all box-number sectors coherently within the same $L - d$ queries, and sets the dummy state $\ket{\eta_{\lambda(\bfk)^{\TP}}} \in (\mathcal{W}_{\lambda^{\TP}}^{L - d})^{\otimes 2}$. We will use the following probe state and covariant reference vectors:
$$
\ket{\Psi_{L, d}^{\mathrm{FLO}}} = \bigoplus_{\bfk \in \Omega_{L, d} } \frac{|a_{\bfk}|}{\sqrt{d_{\lambda(\bfk)}}} \Ket{\openone_{W_{\lambda(\bfk)}^{d}}} \otimes \ket{\eta_{\lambda(\bfk)^{\TP}}}, \quad \ket{\Psi_{\widehat{U}}^{\mathrm{FLO}}} = \bigoplus_{\bfk \in \Omega_{L, d} } \sqrt{d_{\lambda(\bfk)}} \Ket{{\widehat{U}}_{W_{\lambda(\bfk)}^{d}}} \otimes \ket{\eta_{\lambda(\bfk)^{\TP}}}.
$$
We can readily arrive at the likelihood function in \eref{eqn:probability_of_successful_event_reformulation_updated} in our proof without reformulation [cf. Lemma \ref{lemma:vanishing_of_cross_term}], so the fixed-centre tail bound in Section \ref{subsec:tail_probability_bounding} therefore applies directly. This provides a direct approach to learning $\mathcal{V}$ in diamond distance through parallel queries to $\wedge^\bullet V$.

For active parity-preserving FLOs, a standard two-copy Majorana lifting \cite{Bravyi_Lagrangian_2005, PRXQuantum.3.020328, braccia2026commutantfermionicgaussianunitaries} recovers the same structure as in the passive case. Specifically, from the original Majoranas $\{\gamma_j\}_{j=1}^{2n}$ we construct a new family of operators on $\mathcal{F}_n^{\otimes 2}$ that again satisfy the canonical anticommutation relations, and under which the transformation relation \eref{eqn:Bogoliubov_transformations} takes the same form.

\begin{lemma}[Two-copy lifting, see, e.g., {\cite[Appendix F]{PRXQuantum.3.020328}}]
\label{lemma:two_copy_lifting}
    There exists a fixed unitary $J_n : \mathcal{F}_n \otimes \mathcal{F}_n \to \mathcal{F}_{2n}$ such that for every $R \in \SO(2n)$,
    $$
    J_n \left( \Phi(R) \otimes \Phi(R) \right) J_n^\dagger
    = e^{\I \phi_R} \wedge^\bullet R
    $$
    for some phase $\phi_R \in \R$. Hence two parallel queries to an active parity-preserving FLO channel are equivalent to a single passive FLO query with the same underlying operator $R$.
\end{lemma}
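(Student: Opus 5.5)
The plan is to realize $\Phi(R)\otimes\Phi(R)$ as a Bogoliubov transformation of the doubled system, and then identify the lifted Majorana operators with the Majoranas of $2n$ modes. The transformation must act on these Majoranas through the \emph{complexified} matrix $R$, viewed as an element of $\U(2n)$. A passive FLO is precisely the Fock-space unitary whose adjoint action on creation operators is linear via a unitary matrix. So the statement reduces to two checks: first, that the lifted creation operators transform under $\mathrm{Ad}_{\Phi(R)\otimes\Phi(R)}$ through $R$; second, that the uniqueness-up-to-phase of implementers then gives $\wedge^\bullet R$ up to $e^{\I\phi_R}$.

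\textbf{Step 1 (Jordan--Wigner doubling).} On $\mathcal{F}_n\otimes\mathcal{F}_n$ I define $4n$ Majoranas
$$
\tilde\gamma_j := \gamma_j\otimes \openone, \qquad \tilde\gamma_{2n+j} := \mathbf{Par}\otimes\gamma_j, \qquad j\in[2n].
$$
These satisfy $\{\tilde\gamma_a,\tilde\gamma_b\}=2\delta_{ab}$, since $\mathbf{Par}$ anticommutes with every $\gamma_j$. By uniqueness of the irreducible CAR representation of $4n$ Majoranas, of dimension $2^{2n}$, there is a unitary $J_n:\mathcal{F}_n\otimes\mathcal{F}_n\to\mathcal{F}_{2n}$ intertwining them with the standard Majoranas of $2n$ modes, ordered so that mode $j\in[2n]$ has creation operator
$$
a_j^\dagger := \tfrac12\left(\tilde\gamma_j - \I\,\tilde\gamma_{2n+j}\right).
$$
I would check directly that the $a_j$ satisfy the fermionic CAR $\{a_j,a_k^\dagger\}=\delta_{jk}$ and $\{a_j,a_k\}=0$, so $J_n$ can equivalently be defined by matching the $a_j$ with the canonical $c_j$ of $\mathcal{F}_{2n}$.

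\textbf{Step 2 (covariance).} Here parity preservation enters. For $R\in\SO(2n)$, $\Phi(R)$ commutes with $\mathbf{Par}$. Hence
$$
(\Phi(R)\otimes\Phi(R))^\dagger\,\tilde\gamma_{2n+j}\,(\Phi(R)\otimes\Phi(R)) = \mathbf{Par}\otimes \textstyle\sum_k R_{jk}\gamma_k = \textstyle\sum_k R_{jk}\tilde\gamma_{2n+k},
$$
and similarly for $\tilde\gamma_j$. Since $R$ is real, this yields $a_j^\dagger\mapsto\sum_k R_{jk}a_k^\dagger$. After conjugating by $J_n$, we have a unitary on $\mathcal{F}_{2n}$ whose adjoint action sends $c_j^\dagger$ to $\sum_k R_{jk}c_k^\dagger$, where $R\in\O(2n)\subset\U(2n)$.

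\textbf{Step 3 (identification).} The operator $\wedge^\bullet R$ maps $c_j^\dagger$ as well. From $(\wedge^\bullet V)c_j^\dagger(\wedge^\bullet V)^\dagger=\sum_k V_{kj}c_k^\dagger$ we get the adjoint version $(\wedge^\bullet V)^\dagger c_j^\dagger\wedge^\bullet V=\sum_k \overline{V_{jk}}\,c_k^\dagger$. For real $R$ this matches Step 2 exactly, up to the index convention, which I would fix via transpose if needed. Two unitaries with the same adjoint action on all $c_j,c_j^\dagger$ differ by an element of the commutant of an irreducible algebra, i.e.\ by a phase. This gives the formula. The final sentence of the statement follows because the channels $\mathrm{Ad}$ of the two sides coincide.

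\textbf{Main obstacle.} I expect the main obstacle to be bookkeeping: index and transposition conventions relating $\Phi(R)^\dagger\gamma\Phi(R)=R\gamma$ to the standard action of $\wedge^\bullet$ on creation operators, together with the interleaved Majorana ordering. If the conventions produce $R^{\TP}$ or $\overline{R}$ instead of $R$, I would absorb this by reordering or relabeling the modes inside $J_n$, which is a fixed passive unitary independent of $R$.
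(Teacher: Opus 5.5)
Your proposal is correct and follows the paper's proof sketch. You use the same doubled Majoranas $\gamma_j\otimes\openone$ and $\mathbf{Par}\otimes\gamma_j$, your $a_j=\tfrac12(\tilde\gamma_j+\I\tilde\gamma_{2n+j})$ is exactly the paper's $b_j$, and covariance comes from the same fact $[\Phi(R),\mathbf{Par}]=0$ for $R\in\SO(2n)$. Your Step 3 checks $(\wedge^\bullet R)^\dagger c_j^\dagger\,\wedge^\bullet R=\sum_k R_{jk}c_k^\dagger$ and uses irreducibility of the CAR algebra to get equality up to a phase. That spells out the final identification the paper leaves implicit, and your computation there already shows no transposition or mode relabeling is needed.
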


\begin{proof}[Proof sketch of \lref{lemma:two_copy_lifting}]
    On the two-copy space $\mathcal{F}_n \otimes \mathcal{F}_n$, take the single-space Majorana $\{\gamma_j\}_{j=1}^{2n}$, we set the lifted operators $\gamma_{j, 1} := \gamma_j \otimes \openone$ and $\gamma_{j, 2} := \mathbf{Par} \otimes \gamma_j$. Then we define $b_j = \frac{1}{2} (\gamma_{j, 1} + \I \gamma_{j, 2})$ for each $j \in [2n]$. It is easily verified that the anticommutation relation $\{\gamma_{j, s}, \gamma_{k, t}\} = 2 \delta_{(j, s), (k, t)} \openone_{\mathcal{F}_n^{\otimes 2}}$ is satisfied, and by the parity-preserving condition, a similar transformation to \eref{eqn:Bogoliubov_transformations} is satisfied by
    $$
    (\Phi(R) \otimes \Phi(R))^\dagger \gamma_{j, r} (\Phi(R) \otimes \Phi(R)) = \sum_{k =1 }^{2n} R_{j, k} \gamma_{k, r}, \quad \forall \, j \in [2n], \, r \in \{1, 2\}.
    $$
    Therefore, we can identify $b_j$ with the standard annihilation operator on $\mathcal{F}_{2n}$ in the same way that $c_j$ is the standard annihilation operator on $\mathcal{F}_n$. This identification is precisely the unitary $J_n$.
\end{proof}
With \lref{lemma:two_copy_lifting} in hand, all that remains is to handle the parity-flipping component $\O^{-}(2n)$. This case in fact requires very little work, as is specified in the following lemma: the component $\O^{-}(2n)$ is a coset of $\SO(2n)$ determined simply by any fixed element of $\O^{-}(2n)$.

\begin{lemma}[From parity-flipping to parity-preserving FLOs, {\cite[Section 1.2]{christensen2026learningfermioniclinearoptics}}]
\label{lemma:parity_coset_equivalence}
    Let the fixed coset representative $R_0 = \mathrm{diag}(1, -1, \dots, -1)$. Then
    $$
    \Phi(R_0) = \gamma_1; \quad \O^{-}(2n) = R_0 \cdot \SO(2n).
    $$
    Thus, learning a parity-flipping FLO channel reduces to learning its parity-preserving counterpart, by left-composing each oracle query with $\ad_{\Phi(R_0)}$ and composing the output estimate with $\ad_{\Phi(R_0)}^{-1}$ once more. When the parity of $R$ is unknown, a single additional parallel query with a vacuum state with parity measurement suffices to determine it, as revealed by identity $\mathbf{Par} \Phi(R) \ket{0^n} = \det(R) \Phi(R) \mathbf{Par} \ket{0^n} = \det(R) \Phi(R) \ket{0^n}$.
\end{lemma}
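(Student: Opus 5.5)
Three things need checking: that $\Phi(R_0)=\gamma_1$ implements $R_0=\mathrm{diag}(1,-1,\dots,-1)$; the coset identity $\O^{-}(2n)=R_0\cdot\SO(2n)$; and the two operational claims (the reduction to the parity-preserving case, and parity detection with one query).

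\textbf{Step 1: $\Phi(R_0)=\gamma_1$.} $\gamma_1$ is self-adjoint and $\gamma_1^2=\openone$, so it is unitary. Using $\{\gamma_1,\gamma_j\}=2\delta_{1j}\openone$:
\begin{itemize}
\item for $j=1$, $\gamma_1^\dagger\gamma_1\gamma_1=\gamma_1$;
\item for $j\neq 1$, $\gamma_1\gamma_j\gamma_1=-\gamma_j\gamma_1^2=-\gamma_j$.
\end{itemize}
This is exactly \eref{eqn:Bogoliubov_transformations} with $R=R_0$. Since the implementation is unique up to a phase, we may choose $\Phi(R_0)=\gamma_1$.

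\textbf{Step 2: the coset identity.} $\det R_0=(-1)^{2n-1}=-1$, so $R_0\in\O^{-}(2n)$. For $R\in\O^-(2n)$, the product $R_0R$ has determinant $1$, and $R_0^{-1}=R_0$. Hence $R=R_0(R_0R)\in R_0\cdot\SO(2n)$. The reverse inclusion is immediate from $\det(R_0S)=-\det S=-1$ for $S\in\SO(2n)$.

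\textbf{Step 3: the map $R\mapsto\Phi(R)$ is multiplicative up to phase.} Let $S=R_0R\in\SO(2n)$, so $R=R_0S$. Compute
$$
(\Phi(R_0)\Phi(S))^\dagger\gamma_j\,\Phi(R_0)\Phi(S)=\Phi(S)^\dagger\Big(\sum_k (R_0)_{jk}\gamma_k\Big)\Phi(S)=\sum_{k,l}(R_0)_{jk}S_{kl}\gamma_l .
$$
So $\Phi(R_0)\Phi(S)$ implements $R_0S=R$. By uniqueness, $\Phi(R)=e^{\I\theta}\gamma_1\Phi(S)$ for some phase $\theta$, and therefore $\ad_{\Phi(R)}=\ad_{\gamma_1}\circ\ad_{\Phi(S)}$.

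This multiplicativity is the only step requiring care: the convention in \eref{eqn:Bogoliubov_transformations} is a right action, so the order of composition must be tracked. Both possible orders are harmless here, since $R_0$ is an involution and $\SO(2n)$ is normal in $\O(2n)$. The same computation therefore also gives a parity-preserving $S'=RR_0$ if the other order is needed.

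\textbf{Step 4: reduction to the parity-preserving case.} Compose each oracle query with $\ad_{\gamma_1}^{-1}=\ad_{\gamma_1}$ on the appropriate side. By Step 3 this gives parallel queries to $\ad_{\Phi(S)}$, which is parity-preserving.
\begin{itemize}
\item The composition is a fixed known unitary applied per query, so the strategy remains parallel.
\item The query count is unchanged.
\item Diamond distance is unitarily invariant, so composing the estimate $\widehat{S}$ with $\ad_{\gamma_1}$ preserves the error $\eps$.
\end{itemize}

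\textbf{Step 5: parity detection.} $\Phi(R)$ commutes with $\mathbf{Par}$ when $\det R=1$ and anticommutes when $\det R=-1$, i.e.\ $\mathbf{Par}\,\Phi(R)=\det(R)\,\Phi(R)\,\mathbf{Par}$. This follows because $\mathbf{Par}$ is proportional to $\gamma_1\cdots\gamma_{2n}$, which transforms by $\det R$. The vacuum satisfies $\mathbf{Par}\ket{0^n}=\ket{0^n}$. Hence $\Phi(R)\ket{0^n}$ is a parity eigenstate with eigenvalue $\det R$, and a single parity measurement determines $\det R$ deterministically. This extra query can be run in the same parallel layer. The learner then applies the Step 4 reduction conditionally on the outcome, which is a classical post-processing decision and does not require adaptivity between queries.
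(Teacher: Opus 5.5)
Your proposal is correct and follows essentially the route the paper relies on: the paper gives no separate proof beyond the coset decomposition, the choice $\Phi(R_0)=\gamma_1$, and the parity identity $\mathbf{Par}\,\Phi(R)\ket{0^n}=\det(R)\,\Phi(R)\ket{0^n}$, citing Christensen--Zhao. One point to make explicit: the same-layer parity query is compatible with parallel access only because the correction $\ad_{\gamma_1}$ sits on the output side of each query (your Step 3 order, the paper's left-composition), where it can be folded into the final joint measurement conditioned on the parity outcome, whereas the alternative order $S'=RR_0$ you mention would act on the inputs and could not be conditioned on an outcome from the same layer.
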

Finally, the following stability lemma specifies how to set the parameters when applying the protocol of Theorem \ref{thm:query_optimal_parallel_tomo} to active FLO channel tomography.

\begin{lemma}[{\cite[Proposition 2.6]{christensen2026learningfermioniclinearoptics}}]
\label{lemma:operator_stability_with_FLO_channel}
    Let $R, R' \in \SO(2n)$, the corresponding active FLO channels satisfy
    $$
    \left\| \ad_{\Phi(R)} - \ad_{\Phi(R')} \right\|_{\diamond}
    \leq 2n \left\| R - R' \right\|_{\op}.
    $$
\end{lemma}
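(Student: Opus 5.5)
The statement to prove is \lref{lemma:operator_stability_with_FLO_channel}: for $R,R'\in\SO(2n)$, $\|\ad_{\Phi(R)}-\ad_{\Phi(R')}\|_\diamond\le 2n\|R-R'\|_{\op}$. The plan is to reduce to a statement about the identity and then integrate along a one-parameter path. By left-invariance it suffices to compare $\Phi(R)$ and $\Phi(R')$ through $W:=R'R^{-1}\in\SO(2n)$. Since $\Phi$ is a projective representation (unique up to phase by \eref{eqn:Bogoliubov_transformations}), we have $\ad_{\Phi(R')}=\ad_{\Phi(W)}\circ\ad_{\Phi(R)}$. Unitary invariance of the diamond norm then gives $\|\ad_{\Phi(R)}-\ad_{\Phi(R')}\|_\diamond=\|\ad_{\Phi(W)}-\mathrm{id}\|_\diamond$. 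It also gives $\|W-\openone\|_{\op}=\|R'-R\|_{\op}$.

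The main tool is the standard bound $\|\ad_U-\ad_V\|_\diamond\le 2\min_\phi\|U-e^{\I\phi}V\|_{\op}$. It therefore suffices to choose a phase with $\|\Phi(W)-e^{\I\phi}\openone\|_{\op}\le n\|W-\openone\|_{\op}$.

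To bound this, bring $W$ to real normal form: $W=O\,\bigoplus_{j=1}^n \mathrm{rot}(\theta_j)\,O^{\TP}$ with $O\in\SO(2n)$ after absorbing a reflection if needed, and $\theta_j\in(-\pi,\pi]$. Up to phase, $\Phi(W)=\Phi(O)\exp\bigl(\tfrac12\sum_j\theta_j\gamma_{2j-1}\gamma_{2j}\bigr)\Phi(O)^\dagger$. Here $\tfrac{\I}{2}\gamma_{2j-1}\gamma_{2j}$ has eigenvalues $\pm\tfrac12$, so the exponent is $\I$ times a Hermitian operator $H$ with spectrum $\{\tfrac12\sum_j\pm\theta_j\}$, and the spectrum of $\Phi(W)$ is $\{e^{\I\sum_j\pm\theta_j/2}\}$. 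Multiplying by the phase $e^{-\I\sum_j\theta_j/2}$ gives eigenvalues $e^{-\I\sum_{j\in S}\theta_j}$ over subsets $S$. Each such eigenvalue lies within distance $\sum_j|\theta_j|\le n\max_j|\theta_j|$ of $1$.

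On the other side, $\|W-\openone\|_{\op}=\max_j|e^{\I\theta_j}-1|=\max_j 2|\sin(\theta_j/2)|\ge\tfrac{2}{\pi}\max_j|\theta_j|$. Combining the two estimates with the factor $2$ from the diamond-norm bound yields a constant of $\pi n$, not the claimed $2n$.

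The main obstacle is therefore the constant. To tighten it, I would work with chordal distances throughout. The key inequality needed is
$$|e^{\I\sum_{j\in S}\theta_j}-1|\le\sum_{j\in S}|e^{\I\theta_j}-1|,$$
which follows from the triangle inequality applied to a telescoping product of unit-modulus factors. This gives the distance of each eigenvalue from $1$ as at most $n\|W-\openone\|_{\op}$, without the $\pi/2$ loss.

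The phase can still be improved, for instance by centring at the midpoint phase. Even without that, the bound $2\cdot n\|W-\openone\|_{\op}$ already matches the statement exactly. So the remaining work is just to verify the telescoping inequality and the diamond-to-operator-norm conversion, both of which are standard.
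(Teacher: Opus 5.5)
Your proof is correct and gives exactly the constant $2n$. The paper itself does not prove this lemma: it imports the statement from \cite[Proposition 2.6]{christensen2026learningfermioniclinearoptics}. So there is no in-paper route to compare against, and your argument works as a self-contained replacement.

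Its ingredients are all elementary:
\begin{itemize}
\item the projective law $\Phi(R_1R_2)\propto\Phi(R_1)\Phi(R_2)$, which follows from the Bogoliubov relation and uniqueness up to phase;
\item unitary invariance, which reduces the problem to $W=R'R^{-1}$ with $\|W-\openone\|_{\op}=\|R'-R\|_{\op}$;
\item the bound $\|\ad_U-\ad_V\|_\diamond\le 2\|U-V\|_{\op}$;
\item the real normal form of $W\in\SO(2n)$;
\item chordal subadditivity $|z_1\cdots z_m-1|\le\sum_i|z_i-1|$ for unit-modulus $z_i$.
\end{itemize}

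A few points to tighten in the write-up:
\begin{itemize}
\item Drop the opening reference to integrating along a one-parameter path and the intermediate $\pi n$ estimate. Neither is used, because the chordal inequality does all the work.
\item State why the normal form has exactly $n$ rotation blocks. Since $\det W=1$, the eigenvalue $-1$, and hence also $+1$, occurs with even multiplicity, so these can be paired into blocks with $\theta_j\in\{0,\pi\}$.
\item The upper bound only needs that every eigenvalue of $\Phi(W)$ has the form $e^{\I\sum_j\pm\theta_j/2}$ up to a global phase. You do not need all $2^n$ sign patterns to occur, so the sign convention of the rotation blocks is irrelevant.
\end{itemize}

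Your remark about centring is also right. Keep the phase for which the eigenvalues are exactly $e^{\I\sum_j\pm\theta_j/2}$ with $\theta_j\in(-\pi,\pi]$. Then apply the same telescoping bound using $|e^{\I\theta/2}-1|=2|\sin(\theta/4)|\le\frac{1}{\sqrt2}|e^{\I\theta}-1|$, which holds for $|\theta|\le\pi$. This improves the constant to $\sqrt2\,n$, although $2n$ is all the FLO application needs.
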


\begin{theorem}[Efficient tomography of fermionic linear optics with parallel access]
\label{thm:paralle_tomo_of_FLO}
    Let $n \geq 1$ and $0 < \eps < 2$. Given parallel black-box access to an unknown $n$-mode active FLO unitary channel $\ad_{\Phi(R)}$ where $R \in \O(2n)$, there exists a protocol that outputs a classical description of $\widehat{R} \in \O(2n)$ that satisfies
    $$
    \Pr\left[ \left\| \ad_{\Phi(\widehat{R})} - \ad_{\Phi(R)} \right\|_{\diamond} \leq \eps  \right] \geq \frac{2}{3},
    $$
    using $\mathcal{O}(n^2/\eps)$ queries to $\ad_{\Phi(R)}$. Within the unified framework, under each of the passive, parity-preserving, and parity-flipping promises, this protocol produces an estimate from the same family.
\end{theorem}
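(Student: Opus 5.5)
The plan is to reduce every case to a single application of the passive-FLO variant of \tref{thm:query_optimal_parallel_tomo}, with the ambient dimension $d=2n$. The key point is that each passive query $\wedge^\bullet V$ already contains all sectors $\lambda\in\Gamma_{L,d}$ by \lref{lemma:skew_howe_duality}. So the protocol of Section~\ref{sec:proof_of_main_theorem}, run with the probe $\ket{\Psi^{\mathrm{FLO}}_{L,d}}$ and reference vectors $\ket{\Psi^{\mathrm{FLO}}_{\widehat U}}$, needs only $L-d$ parallel queries rather than $d(L-d)$. It still returns $\widehat U\in\U(d)$ with $\|\ad_{\widehat U}-\ad_V\|_\diamond\le\eps'$ with probability $\ge 2/3$, where $L=\lceil 8\pi d/\eps'\rceil-1$. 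I will take this as given from the discussion after \lref{lemma:skew_howe_duality}, since the likelihood \eref{eqn:probability_of_successful_event_reformulation_updated} and the tail bound of Section~\ref{subsec:tail_probability_bounding} carry over verbatim.

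\textbf{Steps.}
\begin{enumerate}
\item \emph{Parity.} If the parity is unknown, spend one extra query on $\ket{0^n}$ and measure $\mathbf{Par}$. By \lref{lemma:parity_coset_equivalence} this reveals $\det R$ deterministically. If $\det R=-1$, precompose every subsequent query with $\ad_{\gamma_1}=\ad_{\Phi(R_0)}$, which leaves a parity-preserving oracle $\ad_{\Phi(R_0R)}$ with $R_0R\in\SO(2n)$. This is done by the same fixed known gate on each parallel slot, so the strategy stays parallel.
\item \emph{Lifting.} Group the queries in pairs and conjugate by $J_n$. By \lref{lemma:two_copy_lifting}, each pair becomes a single passive query $\wedge^\bullet R$ up to a global phase. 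The phase is irrelevant for channels, and it also drops out of the covariant likelihood because it is a global phase on each sector. Here $R\in\SO(2n)\subset\U(2n)$ is viewed as a complex unitary on $\C^{2n}$. In the genuinely passive case, simply use $\wedge^\bullet V$ directly with $d=n$.
\item \emph{Choice of accuracy.} Run the passive protocol with $d=2n$ and $\eps'=c\,\eps/n$ for a small absolute constant $c$. The query count is $2(L-d)+1\le 2\cdot 8\pi d/\eps'+1=\mathcal O(n^2/\eps)$.
\item \emph{From a diamond bound to an operator bound.} The diamond-norm guarantee for $\ad_{\widehat U}$ versus $\ad_R$ gives a phase $\theta$ with $\|e^{\I\theta}\widehat U-R\|_{\op}\le \eps'$. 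This follows since the diamond distance of unitary channels controls $\min_\theta\|e^{\I\theta}\widehat U-R\|_{\op}$ up to an absolute constant.
\item \emph{Rounding and conclusion.} Round to the family: take the orthogonal polar factor $\widehat R$ of a real matrix close to $R$, and argue that $\|\widehat R-R\|_{\op}=\mathcal O(\eps')$ and $\det\widehat R=1$ (details below). Then \lref{lemma:operator_stability_with_FLO_channel} gives $\|\ad_{\Phi(\widehat R)}-\ad_{\Phi(R)}\|_\diamond\le 2n\cdot\mathcal O(\eps')\le\eps$ once $c$ is chosen suitably. Finally, undo the coset shift: output $R_0\widehat R$ in the parity-flipping case, which preserves the diamond distance by unitary invariance.
\end{enumerate}

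\textbf{Main obstacle.} I expect the difficulty to lie in steps 4 and 5: turning the complex estimate $\widehat U$, defined only up to phase, into a real element of the correct family without losing more than a constant factor. Accuracy must be measured in the operator norm on $\C^{2n}$ rather than in diamond distance on Fock space, and that is exactly where the factor $n$ from \lref{lemma:operator_stability_with_FLO_channel} enters.

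My first attempt is as follows. Compute $\theta$ by maximizing $\mathrm{Re}\,\Tr(e^{\I\theta}\widehat U^{\TP}\cdot)$, or rather by choosing $\theta$ to minimize $\|\mathrm{Im}(e^{\I\theta}\widehat U)\|$. Then set $M=\mathrm{Re}(e^{\I\theta}\widehat U)$. Since $R$ is real, $\|M-R\|_{\op}\le\eps'$. The polar factor of $M$ is the nearest orthogonal matrix, so $\|\widehat R-R\|_{\op}\le 2\eps'$. Because $\eps'<1$, the straight path from $R$ to $M$ avoids singular matrices, so $\det\widehat R=\det R=1$.

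There is a subtlety: $\theta$ is ambiguous up to $\pi$, and $-1\in\SO(2n)$. Since $-R$ also lies in $\SO(2n)$ and $\Phi(-R)=\mathbf{Par}\,\Phi(R)$ up to phase, this ambiguity does not affect the output channel up to parity. I would check carefully that $\Phi(-R)=\mathbf{Par}\,\Phi(R)$ and that the resulting channels indeed coincide.

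In the passive case, the projection onto the realified copy of $\U(n)$ is immediate: $\widehat U\in\U(n)$ is already in the family, and the phase ambiguity is harmless there. This addresses the final ``same family'' clause.
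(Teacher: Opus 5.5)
Your reduction (parity detection, two-copy lifting, the coherent passive protocol with $d=2n$, then the stability lemma) follows the paper's. The gap is in steps 4--5, and it is exactly the subtlety you flagged.

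\textbf{The sign ambiguity is fatal.} You carry forward only the phase-blind guarantee $\|\ad_{\widehat U}-\ad_R\|_\diamond\le\eps'$ on $\C^{2n}$. That guarantee is invariant under $\widehat U\mapsto e^{\I\alpha}\widehat U$, so after passing to real matrices it cannot distinguish $\widehat R$ from $-\widehat R$. Your criterion $\min_\theta\|\mathrm{Im}(e^{\I\theta}\widehat U)\|_{\op}$ always has the two minimizers $\theta^*$ and $\theta^*+\pi$, and nothing you have invoked breaks the tie.

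This ambiguity is not harmless. Since $\mathbf{Par}\,\gamma_j\,\mathbf{Par}=-\gamma_j$, you do get $\Phi(-R)\propto\mathbf{Par}\,\Phi(R)$. But then $\ad_{\Phi(-R)}=\ad_{\mathbf{Par}}\circ\ad_{\Phi(R)}$, and
\[
\big\|\ad_{\Phi(-R)}-\ad_{\Phi(R)}\big\|_\diamond=\big\|\ad_{\mathbf{Par}}-\mathcal{I}\big\|_\diamond=2,
\]
because $\mathbf{Par}$ has both eigenvalues $\pm1$. For example, $(\ket{0^n}+c_1^\dagger\ket{0^n})/\sqrt2$ is sent to an orthogonal state. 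So the two channels do not coincide, and your output can be at diamond distance about $2$ from the truth.

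The same error invalidates your passive remark. With $d=n$, the diamond guarantee leaves a continuous phase free. However, $\wedge^\bullet(e^{\I\alpha}V)=e^{\I\alpha\sum_j c_j^\dagger c_j}\,\wedge^\bullet V$, which is a genuinely different passive FLO channel. So the phase is not harmless there either.

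\textbf{The missing idea is a phase-sensitive guarantee.} The coherent passive protocol gives more than a diamond bound on $\C^{2n}$. All box-number sectors sit in one coherent superposition, and sectors with different $|\lambda|$ pick up relative phases $e^{\I|\lambda|\alpha}$ under $R\mapsto e^{\I\alpha}R$. Hence the global phase of the single-particle operator is observable. Formally, the likelihood of $W=\widehat U^\dagger R$ equals $\mu(\bmt)$ on every permutation-invariant set, with no need for the shift invariance used in Lemma~\ref{lemma:vanishing_of_cross_term}. The tail bound of Section~\ref{subsec:tail_probability_bounding} therefore applies to the non-shift-invariant event $I_\eta^{\times 2n}$ itself. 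This gives $\|\widehat U-R\|_{\op}=\max_j|e^{\I\theta_j(W)}-1|\le\eta$ with probability at least $2/3$, with no phase left to fix.

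This is what the paper uses. It then sets $\widehat R=\arg\min_{X\in\SO(2n)}\|X-\widehat U\|_{\op}$, obtains $\|\widehat R-R\|_{\op}\le 2\eta$, and takes $\eta=\eps/(4n)$ in Lemma~\ref{lemma:operator_stability_with_FLO_channel}. Once you have this operator-norm guarantee, your own rounding also works with $\theta=0$, i.e.\ the polar factor of $\mathrm{Re}\,\widehat U$; your determinant-by-path argument is fine.

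Alternatively, you could keep the phase-blind route and settle the sign with one extra query stored as a Choi state and measured after $\widehat R$ is known. The Choi vectors of $\Phi(\widehat R)$ and $\mathbf{Par}\,\Phi(\widehat R)$ are orthogonal since $\Tr\,\mathbf{Par}=0$, and a joint final measurement keeps the strategy parallel. As written, however, the proposal contains neither fix.
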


\begin{proof}[Proof of Theorem \ref{thm:paralle_tomo_of_FLO}]
    By the realification embedding (cf.\ \cite[Equation (1.11)]{christensen2026learningfermioniclinearoptics}) and the coset equivalence of Lemma \ref{lemma:parity_coset_equivalence}, we may assume without loss of generality that the channel being learned is the active parity-preserving FLO channel determined by an unknown $R \in \SO(2n)$. Assume that we make $2(L - 2n)$ parallel queries to the unitary $\Phi(R)$ for some $L > 2n$. Combining the tensor product lifting of Lemma \ref{lemma:two_copy_lifting} and the duality of Lemma \ref{lemma:skew_howe_duality}, applied with the Hilbert space dimension $d = 2n$, we have
    $$
    \left( \Phi(R) \otimes \Phi(R) \right)^{\otimes (L - 2n)} \cong e^{\I (L - 2n) \phi_R } \left( \wedge^\bullet R \right)^{\otimes (L - 2n)} \cong e^{\I (L - 2n) \phi_R } \bigoplus_{\lambda \in \Gamma_{L, 2n} } R_{\lambda} \otimes \openone_{\mathcal{W}_{\lambda^{\TP}}^{L - 2n}}.
    $$
    Then recall the proof of Theorem \ref{thm:query_optimal_parallel_tomo} in Section \ref{sec:proof_of_main_theorem}. Suppose we invoke the unitary tomography protocol with error threshold $\eta$ to obtain an estimate $R$ of $\widehat{U}$, and consider the rotated estimate $W = \widehat{U}^\dagger R$. Then the eigenphases of $W$ have joint density $\mu(\bmt)$ with respect to $\dd \nu^{\otimes 2n}(\bmt)$. Our learning protocol construction ensures the eigenphase concentration $\Pr[\bmt(W) \in I_{\eta}^{\times 2n}] \geq 2/3$, then the produced estimate $\widehat{U}$ satisfies
    $$
    \left\| \widehat{U} - R \right\|_{\op} = \left\| W - \openone  \right\|_{\op} = \max_{j \in [2n]} \left|  e^{\I \theta_j(W)} - 1\right| = 2 \max_{j \in [2n]} \left| \sin \left(\frac{\theta_j(W)}{2} \right) \right| \leq 2 \max_{j \in [2n]} \left| \sin \left( \frac{1}{2} \arcsin \frac{\eta}{2} \right) \right| \leq \eta.
    $$

    Then we round up the estimate to $\SO(2n)$ by simply taking $\widehat{R} := \mathrm{argmin}_{X \in \SO(2n) }\|X - \widehat{U}\|_{\op}$ whose existence is guaranteed by the compactness of $\SO(2n)$. Since the ground truth $R$ is also feasible, on the same event,
    $$
    \left\| \widehat{R} - R \right\|_{\op} \leq \left\| \widehat{R} - U  \right\|_{\op} + \left\| U - R \right\|_{\op} \leq 2 \left\| U - R \right\|_{\op} \leq 2 \eta.
    $$
    Finally, using Lemma \ref{lemma:operator_stability_with_FLO_channel}, if we use $\widehat{R}$ as the output estimate, the resulting active FLO channel deviates from the ground truth in diamond distance by at most
    $$
    \left\| \ad_{\Phi(\widehat{R})} - \ad_{\Phi(R)} \right\|_{\diamond} \leq 2n \left\| \widehat{R} - R \right\|_{\op} \leq 4 n \eta.
    $$
    Therefore, to satisfy the bounded error requirement in Theorem \ref{thm:paralle_tomo_of_FLO}, it suffices to take $\eta = (4 n)^{-1} \eps$. Recall Theorem \ref{thm:query_optimal_parallel_tomo} and Lemma \ref{lemma:two_copy_lifting}, by taking $d = 2n$, the total number of queries to $\Phi(R)$ is given by $2(L - 2n)$, and the query complexity is thus given by 
    $$
    2(L - 2n) < \frac{32 \pi n}{\eta} = \frac{128 \pi n^2}{\eps} = \mathcal{O}\left( \frac{n^2}{\eps} \right).
    $$
    For passive FLOs and active parity-flipping FLOs, the performance guarantees are satisfied automatically, potentially with some extra post-processing overhead (see passive specialization of Lemma \ref{lemma:operator_stability_with_FLO_channel} in \cite[Proposition 2.6]{christensen2026learningfermioniclinearoptics}). This concludes the proof.
\end{proof}

\subsection{Technical overview}
\label{sec:tech_overview}

We adopt the canonical parallel covariant unitary transformation learning framework \cite{PhysRevA.72.042338, PhysRevA.81.032324} where the probe state and measurement designs are reduced to choosing an appropriate probability distribution on the collection of legitimate representation sectors. By covariance, it suffices to ensure that the unitary channel estimate is $\varepsilon$-close to the identity channel in the diamond distance; that is, the success event yields $\| \mathcal{W} - \mathcal{I}\|_{\diamond} \leq \eps$. Equivalently, all its eigenphases lie within an arc of half-angle $\arcsin\frac{\varepsilon}{2}$, with the center of the arc unrestricted. Directly analyzing this joint localization event does not provide a tractable approach for optimizing the probability distribution. Instead, we use a solvable variational problem motivated by Holevo's phase estimation algorithm to discover a probe whose additional structure suffices to ensure that the desired success event is satisfied.

\paragraph{A variationally determined probe state.} To ensure the concentration of the eigenphases of the unitary estimate, a starting point is a periodic penalty for their displacement. We generalize the Holevo cost function $C_{H}(\hat\theta)=4\sin^{2}(\hat\theta/2)$ introduced for $\U(1)$ (phase) estimation \cite{Holevo2011} to its additive analogue $C_{H}(\hat{\bmt})=4\sum_{j=1}^{d}\sin^{2}(\hat\theta_{j}/2)$. It is arguably a natural choice geometrically and algebraically: around the identity channel, it penalizes squared displacement, and each term in the summand $4\sin^2(\hat \theta_j/2) = 2 - e^{\I \hat{\theta}_j} - e^{-\I \hat{\theta}_j }$ couples neighboring frequency levels. For phase estimation supported on $L$ consecutive frequency levels in space $\H_L = \Span\{\ket{j}\}_{j=0}^{L-1}$, minimizing the single-phase cost corresponds to finding the ground state of a Hamiltonian $H_L$ consisting of neighbor-hopping terms \cite{hayashi2006parallel, vanDam_2007}, the solution to which is the familiar sine-shaped state family \cite{hayashi2006parallel, Yang_2020, he2026optimalclassicalshadowestimation}. A well-optimized state with respect to this additive cost clearly encourages phase concentration and thus satisfies even the most stringent small-arc requirement imposed by the diamond distance guarantee.

Our second observation is that the likelihood function in our framework involves the characters $\chi_\lambda(\cdot)$. When the relevant Young diagrams are supported within a $d\times(L-d)$ box, the Weyl character formula~\cite{Goodman2009} reformulates the likelihood as the phase statistics of an auxiliary probe state in the antisymmetric subspace $\wedge^d \H_L$. Taking independent sine-style amplitudes on each subsystem would ignore this anti-symmetry and would not fit naturally into the canonical framework. Moreover, the cost-function minimization reduces to finding the ground state of the local Hamiltonian $\sum_{j=1}^{d}H_{L}^{(j)}$ over $\wedge^{d}\mathcal{H}_{L}$, where $H_{L}^{(j)}$ acts as $H_{L}$ on the $j$-th subsystem and trivially on the rest. This ground state is obtained by taking the exterior product of the $d$ eigenvectors of $H_{L}$ with the smallest eigenvalues. Similar approaches for unitary estimation that optimize the learning strategy via spectral-theoretic techniques also appear in \cite{Christandl2021, yoshida2025asymptoticallyoptimalunitaryestimation}.

\paragraph{Plugging into the learning framework and bounding the failure probability.} The above discussion merely addresses the geometry of eigenphase minimization in the canonical bases (called the occupation bases $\ket{\mathbf{k}}_{\wedge}$ indexed by a strictly increasing sequence $\bfk \in \{0, 1, \dots, L -1\}^d$) of $\wedge^{d}\mathcal{H}_{L}$, and has yet to show how to realize this within Bisio et~al.'s parallel learning protocol \cite{PhysRevA.81.032324}. Fortunately, there is a standard conversion $\bfk\mapsto \lambda(\bfk)$ from these bases to legitimate Young diagrams with at most $L-d$ boxes in the first row~\cite{Okounkov2001}, while ensuring that the maximum number of parallel queries is $\mathcal{O}(dL)$. The antisymmetric probe state naturally has support on Young diagrams with different numbers of boxes, and we transform it into a canonical learning framework as follows: We first sample the total number of queries and then, conditioned on the sampled value, apply the canonical learning protocol with that many queries. 

Substituting our designated probe state and POVM and then using covariance, we note that the coherence between different numbers of queries would vanish owing to the invariance of our success event under a common shift of all eigenphases. Moreover, the eigenphases of the rotated estimate $W$ fall into the success event with the same probability as an auxiliary phase random variable $\bmt$ whose density $\mu(\bmt)$ can be evaluated explicitly. The final step is to bound the failure probability from above. The success probability can be lower bounded by the probability that all coordinates of $\bmt$ drawn from this auxiliary density lie within the interval $I_{\eps} := [-\arcsin(\eps/2), \arcsin(\eps/2)]$\footnote{
This to a large extent resembles the idea behind the inequality $\| \widehat{\mathcal{U}} - \mathcal{U}\|_{\diamond} \leq 2 \|\widehat{U} - U\|_{\op} = 2\|W - \openone\|_{\op}$ \cite{HKOT23}: under the auxiliary phase random variable distribution, with high probability, the output estimate $W$ would satisfy the much stricter condition that every eigenphase concentrates around $0$.
}, and Markov's inequality bounds the probability that some eigenphase falls outside $I_{\varepsilon}$ by the expectation of any nonnegative upper bound on the indicator of this event. A caveat is that if we track $C_{H}(\bmt)$ to bound the tail according to $\mu(\bmt)$, $\E[C_{H}(\bmt)]$ gives $\mathcal{O}(d^3/L^2)$ so that the tail probability reads $\E[C_{H}(\bmt)] / \eps^2 = \mathcal{O}(d^3/L^2 \eps^2)$. A constant error probability requires $L=\Theta(d^{1.5}/\varepsilon)$, so the overall query complexity is $\mathcal{O}(d^{2.5}/\varepsilon)$ and ties that of \cite{Yang_2020, he2026optimalclassicalshadowestimation}. We instead directly count the number $N=N(\bm{\theta})$ of coordinates falling outside $I_\varepsilon$, so that $N(\bmt)=0$ gives rise to a sufficient event for the auxiliary success condition. Direct evaluation shows that $\E[N(\bmt)]=\mathcal{O}(d^{3}/(L\varepsilon)^{3})$, so setting $L=\Theta(d/\varepsilon)$ suffices. This finally grants us the desired $\mathcal{O}(d^{2}/\varepsilon)$ query complexity.

\section{Preliminaries} \label{sec:prelim}
We will use some notational conventions: Throughout, boldface stands for vectors, and for any $m \in \N$, we write $[m]$ as shorthand for the set of integers $\{1, 2, \dots, m\}$, and denote $\mathfrak{S}_m$ as the permutation group acting on $[m]$. For each element $\sigma \in \mathfrak{S}_m$, we denote its sign (or equivalently, its parity) as $\sgn(\sigma)$. Besides, we will use some basic notations in quantum information \cite{Nielsen_Chuang_2010}: First, for Hilbert spaces $\mathcal{H}_A, \mathcal{H}_B$, we write $\Lin(\mathcal{H}_A, \mathcal{H}_B)$ for the set of linear operators from $\H_A$ to $\H_B$, and write directly $\Lin(\H)$ when $\H_A = \H_B = \H$. For an integer $d \in \N$, we write $\U(d) \subseteq \Lin(\C^d)$ as the $d$-dimensional unitary group, and by default, we use $\dd U$ as the unique normalized (left- and right-) invariant Haar measure over the corresponding unitary group. Note that we would use $\T = \R/2\pi \Z$ to identify the phase circle, use the representative interval $(-\pi, \pi]$ whenever the absolute values of phases occur, and write the normalized Haar measure $\dd \nu(\theta) := \frac{1}{2\pi} \dd \theta$ on it. Correspondingly, the product measure $\dd \nu^{\otimes d}(\bm{\theta}) = \prod_{j=1}^d \dd \nu(\theta_j)$ when $\bm{\theta} = (\theta_1, \dots, \theta_d) \in \T^d$. For a unit complex number $z = e^{\I \theta} \in \C$, the argument function $\Arg(e^{\I \theta}) = \theta \in \T$ extracts its angle. For a square operator $X \in \Lin(\C^d)$, we denote its eigenvalues by $\spec(X) = \{\lambda_j\}_{j=1}^d$, sorted in an order appropriate to the context. The convex hull of a discrete point set $\{x_j\}_{j=1}^m$ is the set of the convex combinations of its members: $\conv(\{x_j \}_{j=1}^m) = \{ \sum_{j=1}^m p_j x_j:p_j \geq 0, \sum_{j=1}^m p_j = 1 \}$. The notation $\mathbb{L}^2(\mathcal{X},\dd\mu)$ stands for the inner product space of
square-integrable functions on $\mathcal{X}$ equipped with
$\left< f, g \right>:=\int_{\mathcal{X}}\overline{f(x)} g(x) \dd\mu(x)$.

For a linear operator $X$, we define the adjoint action $\ad_X : \rho \mapsto X \rho X^\dagger$, and frequently write the calligraphic $\mathcal{U}$ for $\ad_U$ when $U$ is unitary. For any linear operator $X \in \Lin(\H)$, we will explicitly specify its submatrix entries by $X = (X_{i, j})_{i \in \mathfrak{I}, j \in \mathfrak{J}}$ for some index subsets $\mathfrak{I}, \mathfrak{J} \subseteq [\dim \H]$. We define its trace norm and operator norm as $\|X\|_1 := \Tr|X|$ and $\|X\|_{\op} := \sqrt{\lambda_{\max}(X^\dagger X)}$, respectively, and its vectorization as $\Ket{X} := \sum_{i,j} X_{i,j} \ket{i}\ket{j}$. For a channel $\mathcal{C}: \Lin(\H_A) \to \Lin(\H_B)$, let $\H_R \cong \H_A$ be a reference space. Its (unnormalized) Choi operator is $\Phi_{\mathcal{C}} := (\mathcal{C}_A \otimes \mathcal{I}_R)(\Ket{I} \Bra{I}_{AR})$. If $\ket{\psi} \in  \C^d$ is a pure state, we use the convention $\psi = \ket{\psi} \bra{\psi} \in \C^{d \times d}$ to denote its density matrix. For a unitary $U \in \U(d)$, we use $\bmt(U) \in \T^{d}$ to represent its eigenphases labelled in uniformly random order. As all events we would discuss are permutation-invariant, this choice does not affect any of the probability analysis below. Finally, for a pair of quantum channels $\mathcal{E}, \mathcal{F}: \Lin(\mathcal{H}_A) \to \Lin(\mathcal{H}_B)$, we define their (unnormalized) diamond norm distance as
$$
\left\| \mathcal{E} - \mathcal{F} \right\|_{\diamond} = \sup_{\H_R} \max_{\ket{\psi}_{AR} \in \H_A \otimes \H_R } \left\| \left(\mathcal{E}_A \otimes \mathcal{I}_{R} - \mathcal{F}_A \otimes \mathcal{I}_R \right)(\psi_{AR}) \right\|_1.
$$

\subsection{Linear algebra}
In this section, we review basic definitions and lemmas required for our proof.

\begin{definition}[Determinant \cite{Horn2012}] \label{def:determinant}
    For any square linear operator $X = (X_{i, j})_{i, j \in [d]} \in \Lin(\C^d)$, its determinant is given by the Leibniz formula
    $$
    \det(X) = \sum_{\sigma \in \mathfrak{S}_d } \sgn(\sigma) \prod_{j=1}^d X_{\sigma(i), i}.
    $$
    In particular, permuting the rows or columns multiplies the determinant by the sign of the permutation.
\end{definition}

\begin{lemma}[Cauchy-Binet {\cite[Section 1.5]{Bhatia1997}}, adapted] \label{lemma:cauchy_binet}
    For any linear operators $X = (X_{i, j})_{i \in [m], j \in [k]} \in \C^{m \times k}, Y = (Y_{i, j})_{i \in [k], j \in [m]} \in \C^{k \times m}$ with $m \leq k$, for any subset $S = \{k_1, k_2, \dots, k_m\} \subseteq [k]$ sorted in order, let $X[:, S]$ and $Y[S,:]$ be the square submatrices obtained by taking the columns of $X$ and rows of $Y$ corresponding to $S$, then it holds that
    $$
    \det\left( XY \right) = \sum_{S \subseteq [k]:\,|S| = m } \det\left( X[:, S] \right) \det \left( Y[S, :] \right). 
    $$
\end{lemma}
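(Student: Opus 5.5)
This is the standard Cauchy--Binet identity, and I will prove it by multilinear expansion of the determinant in the columns of $XY$. The $i$-th column of $XY$ is $\sum_{l=1}^k Y_{l,i}\, X[:,l]$, where $X[:,l]$ denotes the $l$-th column of $X$. The determinant of Definition~\ref{def:determinant} is multilinear and alternating in the columns. So I will first expand each of the $m$ columns separately, which gives
$$
\det(XY) = \sum_{l_1,\dots,l_m \in [k]} \Bigl(\prod_{i=1}^m Y_{l_i,i}\Bigr) \det\bigl(X[:,l_1],\dots,X[:,l_m]\bigr).
$$

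The second step is to discard the degenerate terms. Any tuple $(l_1,\dots,l_m)$ with a repeated index produces a matrix with two equal columns, so its determinant vanishes by the alternating property. This is a consequence of the remark in Definition~\ref{def:determinant} that permuting columns multiplies the determinant by the sign.

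The third step groups the surviving injective tuples by their image set $S=\{k_1<\dots<k_m\}\subseteq[k]$. Each such tuple can be written uniquely as $l_i=k_{\sigma(i)}$ for some $\sigma\in\mathfrak{S}_m$. Reordering the columns then gives
$$
\det\bigl(X[:,k_{\sigma(1)}],\dots,X[:,k_{\sigma(m)}]\bigr)=\sgn(\sigma)\det(X[:,S]).
$$
The inner sum over $\sigma$ is therefore
$$
\sum_{\sigma\in\mathfrak{S}_m}\sgn(\sigma)\prod_{i=1}^m Y_{k_{\sigma(i)},i},
$$
which is exactly the Leibniz formula for $\det(Y[S,:])$. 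Summing over all $S$ with $|S|=m$ yields the claim.

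The only point needing care is the index convention in Definition~\ref{def:determinant}: it uses $\prod_i X_{\sigma(i),i}$, so permutations act on row indices. I have to make sure the sign produced by reordering columns matches this convention, using $\det(X)=\det(X^{\TP})$ or a direct check. This sign bookkeeping is the main, if mild, obstacle; the rest is routine. The hypothesis $m\le k$ only guarantees that the index set of subsets is nonempty. If $m>k$, every tuple would repeat an index and the sum would be empty, giving $0$.
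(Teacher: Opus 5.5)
Your proof is correct. The column-wise multilinear expansion, the vanishing of non-injective tuples, and the regrouping by image set $S$ are all sound. The sign issue you flag does not actually arise. Your inner sum $\sum_{\sigma}\sgn(\sigma)\prod_{i}Y_{k_{\sigma(i)},i}$ is verbatim the row-permuting Leibniz formula of Definition~\ref{def:determinant} applied to $Y[S,:]$. The factor $\sgn(\sigma)$ from reordering the columns of $X[:,S]$ is exactly the remark recorded in that definition.

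The paper itself gives no proof. It cites Bhatia's Section~1.5 on symmetry classes, where the identity comes from antisymmetric tensor powers: the entries of $\wedge^m X$ in the occupation basis are the maximal minors $\det(X[:,S])$, and $\wedge^m(XY)=\wedge^m X\,\wedge^m Y$. In this paper's own language that route takes two lines. Write $\bar x_i\in\C^k$ for the entrywise conjugate of the $i$-th row of $X$ and $y_j\in\C^k$ for the $j$-th column of $Y$, so $(XY)_{i,j}=\braket{\bar x_i|y_j}$. Corollary~\ref{cor:exterior_inner_product} then gives $\det(XY)=\braket{\bar x_1\wedge\cdots\wedge\bar x_m|y_1\wedge\cdots\wedge y_m}$. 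Inserting the resolution of the identity $\sum_{S}\ket{S}_\wedge\bra{S}_\wedge$ on $\wedge^m\C^k$ (Lemma~\ref{lemma:occupation_basis}) and applying the same corollary to each factor produces $\det(X[:,S])\det(Y[S,:])$.

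Your route is the more elementary one: it needs nothing beyond the Leibniz formula and no inner-product structure. The exterior-power route is more conceptual, since the formula is just functoriality of $\wedge^m$. It also makes clear what the later application in step $(\ast)$ of Section~\ref{subsec:tail_probability_bounding} really is: the overlap $\bra{\bmt}_\wedge\ket{\Psi_{L,d}}$, with $\ket{\Psi_{L,d}}=\ket{\mathbf r(1)\wedge\cdots\wedge\mathbf r(d)}$, expanded in the occupation basis.
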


\begin{corollary}[Laplace expansion \cite{Horn2012}] \label{corollary:laplace_expansion}
    For any linear operator $X=(X_{i, j})_{i, j \in [d]} \in \Lin(\C^d)$, we define its minor $X[i, j]$ as the operator in $\Lin(\C^{d-1})$ obtained by removing row $i$ and column $j$ of $X$, then for all $j \in [d]$, it holds that
    $$
    \det(X) = \sum_{\ell=1}^{d} (-1)^{j + \ell} X_{j, \ell} \det\left( X[j, \ell] \right).
    $$
\end{corollary}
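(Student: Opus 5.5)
We will derive the expansion directly from the Leibniz formula in \cref{def:determinant}. Cauchy--Binet (\lref{lemma:cauchy_binet}) is not needed, since the statement is purely combinatorial. Fix the row index $j\in[d]$ and write $\det(X)=\sum_{\sigma\in\mathfrak S_d}\sgn(\sigma)\prod_{i=1}^d X_{i,\sigma(i)}$. This is equivalent to the column-indexed form in \cref{def:determinant}, because $\sgn(\sigma)=\sgn(\sigma^{-1})$ and reindexing by $\sigma^{-1}$ swaps the two conventions.

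\textbf{Step 1: partition the permutations.} Split $\mathfrak S_d$ into the $d$ disjoint classes $\mathfrak S_d^{(\ell)}:=\{\sigma:\sigma(j)=\ell\}$ for $\ell\in[d]$. Every term in the class $\mathfrak S_d^{(\ell)}$ contains the factor $X_{j,\ell}$. Pulling it out gives
$$
\det(X)=\sum_{\ell=1}^d X_{j,\ell}\sum_{\sigma\in\mathfrak S_d^{(\ell)}}\sgn(\sigma)\prod_{i\neq j}X_{i,\sigma(i)} .
$$
Each $\sigma\in\mathfrak S_d^{(\ell)}$ restricts to a bijection $[d]\setminus\{j\}\to[d]\setminus\{\ell\}$, and every such bijection arises from exactly one $\sigma$.

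\textbf{Step 2: identify the inner sum with the minor.} Let $r_j:[d-1]\to[d]\setminus\{j\}$ and $c_\ell:[d-1]\to[d]\setminus\{\ell\}$ be the order-preserving bijections. By definition of the minor, $X[j,\ell]_{a,b}=X_{r_j(a),c_\ell(b)}$. Each $\sigma\in\mathfrak S_d^{(\ell)}$ then corresponds bijectively to $\tau:=c_\ell^{-1}\circ\sigma\circ r_j\in\mathfrak S_{d-1}$, and under this correspondence $\prod_{i\neq j}X_{i,\sigma(i)}=\prod_{a=1}^{d-1}X[j,\ell]_{a,\tau(a)}$. It remains to show the sign identity $\sgn(\sigma)=(-1)^{j+\ell}\sgn(\tau)$. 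Once this holds, the inner sum equals $(-1)^{j+\ell}\det(X[j,\ell])$, and the corollary follows.

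\textbf{Step 3: the sign identity.} This is the only nontrivial point. Let $\pi_j$ be the cycle that moves $j$ to position $1$ and shifts $1,\dots,j-1$ up by one, preserving the order of the rest. It is a product of $j-1$ adjacent transpositions, so $\sgn(\pi_j)=(-1)^{j-1}$. Define $\pi_\ell$ similarly, with $\sgn(\pi_\ell)=(-1)^{\ell-1}$. Then $\pi_\ell\circ\sigma\circ\pi_j^{-1}$ fixes $1$ and acts on $\{2,\dots,d\}$ as $\tau$ shifted by one. Hence its sign is $\sgn(\tau)$, and multiplicativity of $\sgn$ gives $\sgn(\sigma)=(-1)^{(j-1)+(\ell-1)}\sgn(\tau)=(-1)^{j+\ell}\sgn(\tau)$.

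The same computation can be phrased through the remark in \cref{def:determinant} that permuting rows or columns multiplies the determinant by the sign of the permutation. Cyclically move row $j$ to the top and column $\ell$ to the left. The resulting matrix has determinant $(-1)^{j+\ell}\det(X)$, and its expansion along the first row reduces to the easy case $j=\ell=1$. We expect this sign bookkeeping to be the only place requiring care.
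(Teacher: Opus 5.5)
Your proof is correct and complete. Partitioning $\mathfrak S_d$ by the value $\sigma(j)=\ell$, identifying each class with $\mathfrak S_{d-1}$ via $\tau=c_\ell^{-1}\circ\sigma\circ r_j$, and proving $\sgn(\sigma)=(-1)^{j+\ell}\sgn(\tau)$ by conjugating with the cycles $\pi_j,\pi_\ell$ all work. For the last step, $\pi_j\circ r_j(a)=a+1$ and $\pi_\ell\circ c_\ell(b)=b+1$, so $\pi_\ell\sigma\pi_j^{-1}$ really is $\tau$ shifted onto $\{2,\dots,d\}$ with $1$ fixed. You also correctly reconcile the column-indexed convention of \cref{def:determinant}.

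The paper gives no proof of its own. It only records the identity as a standard fact cited from Horn and Johnson, placed after Cauchy--Binet. Your self-contained Leibniz-formula argument is the textbook derivation being invoked, and nothing from Cauchy--Binet is needed.
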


\begin{definition}[Chebyshev polynomial of the second kind, see, e.g., \cite{Subramanian_2019}] \label{def:chebyshev_poly}
    The Chebyshev polynomials of the second kind are a sequence of polynomials $\{T_n(x)\}_{n\geq -1}$, defined on $x\in[-1,1]$ via the initial values $T_{-1}(x) = 0$, $T_0(x)=1$ and $T_1(x)=2x
$, together with the recurrence relation
$$
  \forall n \in \N, \quad  T_{n+1}(x)
    =
    2xT_n(x)-T_{n-1}(x).
$$
Equivalently, under the parametrization $x=\cos\theta$ with $\theta\in[0,\pi]$, these polynomials admit the following trigonometric representation
$$
T_n(\cos\theta) =
\frac{\sin\left((n+1)\theta\right)}{\sin\theta}, \quad \forall \, 0 < \theta < \pi,
$$
extended to both endpoints by continuity, so that $T_n(1) = n + 1$ and $T_n(-1) = (-1)^{n} (n + 1)$.
\end{definition}

\subsection{Representation theory} \label{sec:rep_theory}
To formalize our unitary learning algorithm, we present a minimal review of basic representation theory needed to understand our methodology. For a more comprehensive review of representation theory and its applications in quantum information theory, we refer readers to \cite{Goodman2009} and further to the Ph.D. theses of Harrow, Wright, and
Grinko \cite{Har05, Wri16, grinko2025mixed}.

\begin{definition}[Basics of group representation \cite{Goodman2009, Wri16, grinko2025mixed}]
    Let $G$ be a group. A (complex, unitary, finite-dimensional) representation of $G$ is a tuple $(\lambda, \mathcal{V}_{\lambda})$ where $\mathcal{V}_{\lambda}$ is a (finite-dimensional complex) vector space $\mathcal{V}_{\lambda}$, and $\lambda: G \to \U(\mathcal{V}_{\lambda})$ is a group homomorphism. We denote the dimension of this representation by $d_{\lambda} = \dim \mathcal{V}_{\lambda}$. We write $\chi_{\lambda}(g) = \Tr[\lambda(g)]$ for every $g \in G$ as the character of representation $\lambda$. For two representations $(\lambda, \mathcal{V}_{\lambda}), (\mu, \mathcal{V}_{\mu})$ of $G$, an intertwining map $T$ is a map $T: \mathcal{V}_{\lambda} \to \mathcal{V}_{\mu}$ such that $T \circ \lambda(g) = \mu(g) \circ T$ for all $g \in G$. Moreover, if $T$ is invertible, \ie, $\mu(g) = T \circ \lambda(g) \circ T^{-1}$, then we say $\lambda$ and $\mu$ are isomorphic, or notationally $\lambda \cong \mu$.
\end{definition}

\begin{definition}[Irreducible representations \cite{Goodman2009, Wri16, grinko2025mixed}]
\label{def:irreps}
    Let $(\lambda, \mathcal{V}_{\lambda})$ be the representation of a group $G$, a subspace $\mathcal{W}$ is called an invariant subspace of $\mathcal{V}_{\lambda}$ if it is invariant under the action induced by $\lambda$, \ie, $\forall g \in G$, $\lambda(g) \mathcal{W} \subseteq \mathcal{W}$. Such an invariant subspace $\mathcal{W}$ is trivial if $\mathcal{W} = \{0\}$ or $\mathcal{W} = \mathcal{V}_{\lambda}$. If $\mathcal{V}_{\lambda}$ has a non-trivial invariant subspace, we say that it is reducible, and irreducible otherwise. Correspondingly, the representation is called reducible/irreducible. We denote $\widehat{G}$ as the set of equivalence classes of all irreps of $G$.
\end{definition}

\begin{definition}[Partitions and Young diagrams \cite{Goodman2009}] \label{def:partitions_and_young_diagrams}
A partition $\lambda \vdash n$ is a restriction of the highest weight $\lambda$ such that $\lambda_1 \geq \lambda_2 \geq \cdots \geq \lambda_{k} \geq 0$ and $|\lambda| = n$. The length of a non-empty partition $\lambda$ is given by $\ell(\lambda) = \max\{j : \lambda_j > 0\}$ and if $\ell(\lambda) \leq d$, we write shorthand that $\lambda \vdash_d n$. Any partition $\lambda$ is graphically (the shape of) a Young diagram on $n$ boxes arranged in $\ell(\lambda)$ rows with $\lambda_j$ boxes on the $j$-th row. For instance, $\lambda = (2, 0) \in \Y_2^2$ corresponds to the diagram $\left(\syrow\right)$. 
We write $\Y_n^d$ for the collection of Young diagrams $\lambda \vdash_d n$.
\end{definition}

\begin{definition}[Representation of the symmetric group and general linear group \cite{Goodman2009}]
    Each partition $\lambda \in \Y_n^d$ indexes an irreducible representation of $\mathfrak{S}_n$ and a polynomial irreducible representation of $\U(d)$ when the global phase does not matter. For each $\lambda \in \Y_n^d$, suppose $(\pi_{\lambda}, \mathcal{S}_{\lambda}) \in \widehat{\mathfrak{S}_n}$ and $(\nu_{\lambda}, \mathcal{W}_{\lambda}^d) \in \widehat{\U(d)}$, the free modules $\mathcal{S}_{\lambda}$ and $\mathcal{W}_{\lambda}^d$ are known as the Specht module and the (Schur-)Weyl module, respectively, and we will use $d_{\lambda} = \dim \mathcal{W}_{\lambda}^d$ to denote the dimension of the latter one.
\end{definition}

 To fully exploit the symmetry of i.i.d. quantum resources, quantum learning algorithms often employ representation theory. Fortunately, within the scope of our construction, we will keep the role of representation theory to a minimum and use only the following duality theorem to set up our quantum learning algorithm.

\begin{lemma}[Schur-Weyl duality \cite{Goodman2009, grinko2025mixed}]
\label{lemma:schur_weyl}
    For any $n, d \in \mathbb{N}$, each permutation $\sigma \in \mathfrak{S}_n$ introduces a unitary $P_{\sigma}$ on $(\C^d)^{\otimes n}$, satisfying $P_{\sigma} P_{\tau} = P_{\sigma \tau}$ and $P_{\sigma}^\dagger = P_{\sigma^{-1}}$, and acting as
    $$
    \forall \ket{v_1, v_2, \dots, v_n} \in (\C^d)^{\otimes n}, \quad P_{\sigma} \ket{v_1, v_2, \dots, v_n} = \ket{v_{\sigma^{-1}(1)}, v_{\sigma^{-1}(2)}, \dots, v_{\sigma^{-1}(n)}}.
    $$
    Meanwhile, the natural action $U \in \U(d)$ induces the action $U^{\otimes n} \ket{v_1, v_2, \dots, v_n} = U \ket{v_1} \otimes U \ket{v_2} \otimes \cdots \otimes U \ket{v_n}$. These two actions commute: $[P_{\sigma}, U^{\otimes n}] = 0$. By Schur's lemma, the vector space $(\mathbb{C}^d)^{\otimes n}$ and the $n$-fold tensor of each $U \in \U(d)$ can be decomposed according to the irreps induced by the Young diagrams:
    $$
    (\mathbb{C}^d)^{\otimes n} \cong  \bigoplus_{\lambda \in \Y_n^d } \mathcal{W}_{\lambda}^d \otimes \mathcal{S}_{\lambda}, \quad U^{\otimes n} \cong \bigoplus_{\lambda \in \Y_n^d } U_{\lambda} \otimes \openone_{\mathcal{S}_{\lambda}}.
    $$
    The canonical bases for the Weyl and Specht modules are collectively identified as the Schur-Weyl basis \cite{Har05}. 
\end{lemma}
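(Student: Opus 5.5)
The plan is to prove the Schur--Weyl decomposition by the double commutant theorem. First, check directly that the permutation operators $P_\sigma$ realize a unitary representation of $\Sym_n$. Each $P_\sigma$ permutes computational basis vectors, so it is unitary. With the convention $P_\sigma \ket{v_1,\dots,v_n} = \ket{v_{\sigma^{-1}(1)},\dots,v_{\sigma^{-1}(n)}}$, the slot $i$ of $P_\sigma P_\tau \ket{v_1,\dots,v_n}$ carries $v_{\tau^{-1}(\sigma^{-1}(i))} = v_{(\sigma\tau)^{-1}(i)}$. This gives $P_\sigma P_\tau = P_{\sigma\tau}$, and then $P_\sigma^\dagger = P_\sigma^{-1} = P_{\sigma^{-1}}$. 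Commutation $[P_\sigma, U^{\otimes n}]=0$ is immediate on product vectors: applying $U$ in every slot and then permuting slots is the same as permuting first. It extends to all of $(\C^d)^{\otimes n}$ by linearity.

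Next, let $\mathcal{A} = \Span\{U^{\otimes n}: U\in\U(d)\}$ and $\mathcal{B} = \Span\{P_\sigma\}$. Both are $\dagger$-closed, hence semisimple, finite-dimensional algebras. I would show $\mathcal{A} = \mathcal{B}'$, the commutant of $\mathcal{B}$.

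The inclusion $\mathcal{A}\subseteq\mathcal{B}'$ is the previous step. For the reverse, note that $\mathcal{B}'$ consists of the $\Sym_n$-invariant elements of $\Lin(\C^d)^{\otimes n}$. This is the symmetric subspace of $\Lin(\C^d)^{\otimes n}$, which is spanned by the operators $X^{\otimes n}$ with $X \in \Lin(\C^d)$, by polarization of the symmetric tensor power. Each $X^{\otimes n}$ lies in $\mathcal{A}$ for the following reason. Every $X$ is a limit of linear combinations of unitaries, and more precisely $X\mapsto X^{\otimes n}$ is a polynomial map. Since $\U(d)$ is Zariski-dense in $\mathrm{GL}_d(\C)$, and $\mathrm{GL}_d(\C)$ is dense in $\Lin(\C^d)$, any linear functional vanishing on $\{U^{\otimes n}\}$ vanishes on $\{X^{\otimes n}\}$.

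This density step is the main obstacle. I would handle it by the holomorphic argument: $f(X)=\phi(X^{\otimes n})$ is holomorphic in $X$. For Hermitian $H$, the function $t\mapsto f(e^{\I tH})$ vanishes for real $t$, so it vanishes for complex $t$. Hence $f$ vanishes on $e^{\mathfrak{gl}_d}$, which contains a neighborhood of the identity, so $f\equiv 0$.

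Finally, apply the double commutant theorem to the semisimple algebra $\mathcal{B}$ acting on $(\C^d)^{\otimes n}$. Since $\mathcal{B}''=\mathcal{B}$, we get an isotypic decomposition $(\C^d)^{\otimes n}\cong\bigoplus_\lambda \mathcal{W}_\lambda\otimes\mathcal{S}_\lambda$. Here $\mathcal{S}_\lambda$ ranges over the irreps of $\Sym_n$ that occur, $\mathcal{B}$ acts as $\bigoplus_\lambda \openone\otimes\pi_\lambda$, and $\mathcal{B}'=\mathcal{A}$ acts as $\bigoplus_\lambda \Lin(\mathcal{W}_\lambda)\otimes\openone$, with each $\mathcal{W}_\lambda$ an irreducible and mutually inequivalent $\U(d)$-module. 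This yields $U^{\otimes n}\cong\bigoplus_\lambda U_\lambda\otimes\openone_{\mathcal{S}_\lambda}$.

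It remains to identify the index set as $\Y_n^d$.
\begin{itemize}
\item Use Young symmetrizers $c_\lambda$: the module $\mathcal{S}_\lambda$ occurs iff $c_\lambda(\C^d)^{\otimes n}\neq0$.
\item The column antisymmetrizer of $c_\lambda$ kills the tensor space exactly when some column has length $>d$, that is, when $\ell(\lambda)>d$.
\item When $\ell(\lambda)\le d$, the vector $c_\lambda$ applied to $e_1^{\otimes\lambda_1}\otimes\cdots\otimes e_\ell^{\otimes\lambda_\ell}$ (filling row $i$ with $e_i$) is nonzero, which is a standard check.
\end{itemize}
The corresponding $\mathcal{W}_\lambda$ has highest weight $\lambda$, consistent with the paper's labeling. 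The Schur--Weyl basis is then any basis adapted to this decomposition.
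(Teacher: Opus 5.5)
The paper does not prove this lemma itself; it only cites Goodman--Wallach. Your route is the standard textbook proof. You identify the commutant of $\Span\{P_\sigma\}$ with the $\Sym_n$-invariant (symmetric) subspace of $\Lin(\C^d)^{\otimes n}$, use polarization to see that it is spanned by the $X^{\otimes n}$, apply the double commutant theorem, and label the blocks with Young symmetrizers. That argument is naturally carried out for $\mathrm{GL}_d(\C)$, so the one extra ingredient you need is the passage to $\U(d)$. That is the step that does not work as written.

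From $f(e^{\I tH})=0$ for all complex $t$ and all Hermitian $H$, you only learn that $f$ vanishes on matrices of the form $e^{zH}$ with $z\in\C$ and $H=H^\dagger$. Every such matrix is normal, so this set contains no neighbourhood of $\openone$ in $\mathrm{GL}_d(\C)$. For instance, $\openone+\delta N$ with $N\neq 0$ nilpotent is never of this form. Hence ``$f$ vanishes on $e^{\mathfrak{gl}_d}$'' does not follow from one-parameter slices.

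The fix is to use all directions at once. Let $H_1,\dots,H_{d^2}$ be a real basis of the Hermitian matrices and set $g(t_1,\dots,t_{d^2}):=f\bigl(\exp(\I\sum_k t_kH_k)\bigr)$. This $g$ is entire on $\C^{d^2}$ and vanishes on $\R^{d^2}$, where its argument is unitary. Hence all its Taylor coefficients at $0$ vanish and $g\equiv 0$. Because the complex span of $\{\I H_k\}$ is all of $\mathfrak{gl}_d$, this gives $f\circ\exp\equiv 0$ on $\mathfrak{gl}_d$. So $f$ vanishes on a neighbourhood of $\openone$ and, being a polynomial, vanishes identically. Complexifying $t$ in a single slice is then unnecessary.

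The remaining steps are fine: the double commutant decomposition, irreducibility and inequivalence of the $\mathcal{W}_\lambda$ from $\mathcal{A}=\bigoplus_\lambda \Lin(\mathcal{W}_\lambda)\otimes\openone$, and the Young-symmetrizer criterion $\ell(\lambda)\le d$. One point is asserted rather than shown: that the multiplicity space paired with $\mathcal{S}_\lambda$ is the irrep of highest weight $\lambda$. The paper needs this labeling later for the Weyl character formula. It follows from the standard weight computation on $c_\lambda\bigl(e_1^{\otimes\lambda_1}\otimes\cdots\otimes e_\ell^{\otimes\lambda_\ell}\bigr)$, together with the fact that every other weight of $c_\lambda(\C^d)^{\otimes n}$ is dominated by $\lambda$ (via Kostka numbers).
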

In particular, there exists a change-of-basis unitary $\mathcal{U}_{n, d}^{\mathrm{Sch}}$ on the space $(\C^d)^{\otimes n}$ that implements the isomorphism presented in \lref{lemma:schur_weyl}, which is called the Schur transformation \cite{Har05}. Prior work has demonstrated that for a moderate choice of $n$, such a transformation can be efficiently implemented $\eps$-accurately on a quantum circuit using $\bigo{\poly(n, \log d, \log(\eps^{-1}))}$ quantum gates from a universal gateset \cite{schur_transform_bacon, Krovi2019efficienthigh, burchardt2025highdimensionalquantumschurtransforms}.

Besides, in our analysis, we will need to have a finer-grained analysis of the linear-algebraic identities of the representation of unitary operators. Specifically, as we will see in the subsequent context, it suffices to track the statistics of several spectrum-dependent identities. Thus, we will need the following lemmas.

\begin{lemma}[Weyl character formula, {\cite[Corollary 7.1.2]{Goodman2009}}] \label{lemma:weyl_char_formula}
    For the highest weight $\lambda \vdash n$ with $\ell (\lambda) \leq d$ and any unitary $W \in \U(d)$ with eigenvalues $\spec(W) = \{e^{\I\theta_j}\}_{j=1}^d$, the trace of operator $W_{\lambda}$ is given by the following fraction of two antisymmetric functions:
    $$
  \chi_{\lambda}(W)  = \frac{ \det\left[ e^{\I \theta_j(d + \lambda_k - k)}  \right]_{1 \leq j, k \leq d} }{\det \left[ e^{\I \theta_j (d - k)} \right]_{1 \leq j, k \leq d} } = \frac{ \det\left[ e^{\I \theta_j(d + \lambda_k - k)}  \right]_{1 \leq j, k \leq d} }{ \prod_{1 \leq j < k \leq d}\left(  e^{\I \theta_j} - e^{\I \theta_k}\right) },
    $$
    where the last equality is due to the Vandermonde determinant formula. At repeated eigenvalues, the quotient is well-defined and interpreted by continuity.
\end{lemma}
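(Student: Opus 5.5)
We prove the Weyl character formula by reducing it to a statement about the torus, then using Weyl's integration formula and orthogonality. Both sides are continuous class functions on $\U(d)$, and every unitary is diagonalizable by a unitary conjugation. So it suffices to verify the identity on the maximal torus $W = \diag(e^{\I\theta_1},\dots,e^{\I\theta_d})$ with pairwise distinct eigenvalues; the repeated-eigenvalue case then follows by density and continuity. The Vandermonde identity in the last equality is standard: set $x_j = e^{\I\theta_j}$. Then $\det[x_j^{d-k}]$ is an alternating polynomial of degree $\binom{d}{2}$, it is divisible by every $x_j - x_k$, and comparing the leading monomial $x_1^{d-1}x_2^{d-2}\cdots$ fixes the constant as $1$.

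\textbf{Step 1: quotient is a symmetric Laurent polynomial.} Write $A_\mu(x) := \det[x_j^{\mu_k}]$. Any alternating polynomial is divisible by the Vandermonde $A_\delta$, where $\delta = (d-1,\dots,0)$. Hence $s_\lambda := A_{\lambda+\delta}/A_\delta$ is a symmetric polynomial, and expanding the determinant gives $A_\delta s_\lambda = \sum_{\sigma}\sgn(\sigma)x^{\sigma(\lambda+\delta)}$.

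\textbf{Step 2: orthonormality.} By Weyl's integration formula, for class functions $f$,
$$\int f(U)\dd U = \frac{1}{d!}\int_{\T^d} f(\bmt)\,|A_\delta(e^{\I\bmt})|^2 \dd\nu^{\otimes d}(\bmt).$$
The monomials $x^{\sigma(\lambda+\delta)}$ are pairwise distinct, because the exponents are strictly decreasing. Orthogonality of characters on $\T^d$ then gives $\langle s_\lambda, s_\mu\rangle = \delta_{\lambda\mu}$ in $\mathbb{L}^2(\U(d))$.

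\textbf{Step 3: identify with $\chi_\lambda$ via highest weights.} The irreducible character $\chi_\lambda$ restricted to the torus is a symmetric Laurent polynomial whose weights are dominated by $\lambda$, and the weight $\lambda$ itself appears with multiplicity one. Multiplying by $A_\delta$ makes $A_\delta\chi_\lambda$ alternating. It can therefore be written as $\sum_\mu c_\mu A_{\mu+\delta}$ with integer $c_\mu$, where $c_\lambda = 1$ and $c_\mu \neq 0$ only for $\mu \trianglelefteq \lambda$. Schur orthogonality gives $\|\chi_\lambda\|^2 = 1$, and Step 2 then gives $\sum_\mu c_\mu^2 = 1$. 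Since $c_\lambda = 1$, all other coefficients vanish, so $\chi_\lambda = s_\lambda$.

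I expect Step 3 to be the main obstacle. It relies on the highest-weight classification, namely that the irrep labelled by $\lambda$ has $\lambda$ as its unique highest weight with multiplicity one. Given that this section only cites \cite{Goodman2009}, I would state that fact as imported. A fallback for the polynomial case is to go through Schur–Weyl duality, \lref{lemma:schur_weyl}, and the Frobenius character formula. In that route, the $\mathfrak{S}_n$-orthogonality determines $\Tr[U^{\otimes n}P_\sigma]$ and hence $\chi_\lambda$. The Laplace and Cauchy–Binet tools above would handle the determinant manipulations in either route.
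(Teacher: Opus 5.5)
Your proof is correct, but it cannot follow the paper's route, because the paper gives no proof of this lemma. It is imported from \cite[Corollary 7.1.2]{Goodman2009} and used as a black box, for example in Remark~\ref{remark:reformulating_Weyl_character} and in the likelihood computation, to turn characters into Slater-determinant wavefunctions.

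What you give is Weyl's classical analytic proof. The quotients $A_{\mu+\delta}/A_\delta$ are orthonormal class functions under the Weyl integration formula. $A_\delta\chi_\lambda$ is alternating, so it equals $\sum_\mu c_\mu A_{\mu+\delta}$. Then $\|\chi_\lambda\|_2=1$ together with $c_\lambda=1$ forces every other $c_\mu$ to vanish.

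This fits the paper's own machinery well. Your Step~2 is exactly the computation of Fact~\ref{fact:wavefuntion_orthonormality} carried through Lemma~\ref{lemma:weyl_integral_formula}. It also produces the polynomial identity $A_\delta\chi_\lambda=A_{\lambda+\delta}$, so the clause about interpreting the quotient by continuity at repeated eigenvalues comes for free. The cost, compared with the citation, is that you must import the theorem of the highest weight. That includes a fact the paper never states: the $\U(d)$-irrep paired with $\mathcal{S}_\lambda$ in Lemma~\ref{lemma:schur_weyl} has highest weight $\lambda$.

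The one step you assert rather than prove is $c_\lambda=1$. It follows from a leading-term argument:
\begin{itemize}
\item Every weight $\nu$ of $\chi_\lambda$ has $\lambda-\nu$ equal to a nonnegative combination of the roots $\mathbf{e}_i-\mathbf{e}_{i+1}$, hence $\nu\le_{\mathrm{lex}}\lambda$, with equality only for $\nu=\lambda$.
\item Every permutation satisfies $\sigma\delta\le_{\mathrm{lex}}\delta$, with equality only for $\sigma=\mathrm{id}$.
\item Lexicographic order is compatible with addition.
\end{itemize}
So the monomial $x^{\lambda+\delta}$ arises in $A_\delta\chi_\lambda$ only from $\sigma=\mathrm{id}$ and $\nu=\lambda$. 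Its coefficient is the multiplicity of the weight $\lambda$, which is $1$. With that sentence added, your argument is complete.
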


\begin{lemma}[Weyl integral formula {\cite[Corollary 7.3.7]{Goodman2009}}, adapted] \label{lemma:weyl_integral_formula}
    For any Haar-integrable (that is, $\int_{\U(d)} |f(g)| \dd g < \infty$) class function $f$ over $\U(d)$, it holds that
    $$
    \int_{\U(d)} f(U) \dd  U = \int_{\U(d)} f(\diag (\spec(U)) ) \dd  U = \frac{1}{d!} \int_{\mathbb T^d} f\left( \mathrm{diag}\left( e^{\I \theta_1}, \dots, e^{\I \theta_d} \right) \right) \prod_{1 \leq j < k \leq d} \left| e^{\I \theta_j} - e^{\I \theta_k} \right|^2 \dd \nu^{\otimes d} (\bm{\theta}).
    $$
\end{lemma}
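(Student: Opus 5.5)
The first equality is the defining property of a class function. By the spectral theorem every $U \in \U(d)$ can be written as $U = V D V^\dagger$, where $V \in \U(d)$ and $D = \diag(\spec(U))$. Hence $f(U) = f(D)$ pointwise, and integrating against $\dd U$ gives the first equality. The content of the lemma is the second equality, which converts an integral over $\U(d)$ into one over the maximal torus $T \cong \T^d$ of diagonal unitaries. I would prove it with the standard conjugation-map argument.

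\textbf{Step 1: the conjugation map.} Consider
$$
\Psi : \U(d)/T \times T \to \U(d), \qquad (gT, t) \mapsto g t g^{-1}.
$$
Let $T^{\mathrm{reg}} \subseteq T$ be the diagonal unitaries with pairwise distinct eigenvalues. Its complement in $\U(d)$ (unitaries with a repeated eigenvalue) is a Haar-null set. Over $T^{\mathrm{reg}}$, the map $\Psi$ is a smooth covering of degree $d!$: a regular unitary determines its eigenbasis up to the torus and up to the ordering of eigenvalues, and the Weyl group $\Sym_d$ acts freely on these orderings. Because $f$ is Haar-integrable and the singular set has measure zero, Fubini applies, and the change-of-variables formula gives
$$
\int_{\U(d)} f(U)\dd U = \frac{1}{d!}\int_{T} f(t)\, J(t) \dd t ,
$$
where $J$ is the Jacobian of $\Psi$, computed with respect to the invariant measures on $\U(d)/T$ and on $T$. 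Since $f(g t g^{-1}) = f(t)$, the $\U(d)/T$ integral factors out.

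\textbf{Step 2: the Jacobian (main obstacle).} Write $\mathfrak{u}(d) = \mathfrak{t} \oplus \mathfrak{p}$, where $\mathfrak{t}$ is the diagonal part and $\mathfrak{p}$ the off-diagonal skew-Hermitian matrices. By left invariance it suffices to compute at $g = \openone$. Perturbing gives
$$
(\openone + X)\, t\, (1 + Y)\, (\openone - X) \approx t\bigl(\openone + Y + t^{-1} X t - X\bigr).
$$
Left-translating by $t^{-1}$, the differential therefore acts as the identity on $\mathfrak{t}$ and as $\ad_{t^{-1}} - \openone$ on $\mathfrak{p}$. On the real two-dimensional root space spanned by the $(j,k)$ and $(k,j)$ entries, $\ad_{t^{-1}}$ acts as the rotation by $e^{\I(\theta_k - \theta_j)}$. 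Hence $\ad_{t^{-1}} - \openone$ has determinant $|e^{\I(\theta_k-\theta_j)} - 1|^2 = |e^{\I\theta_j} - e^{\I\theta_k}|^2$ there. Taking the product over $j<k$ gives
$$
J(t) = c\prod_{j<k}|e^{\I\theta_j}-e^{\I\theta_k}|^2 ,
$$
up to a constant $c$ that depends only on how the measures are normalized. I expect this step to need the most care: one must match the Riemannian volume normalizations on $\U(d)/T$, $T$ and $\U(d)$, and keep track of real versus complex dimensions in the root-space determinant.

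\textbf{Step 3: fixing the constant.} Rather than tracking $c$ through the volume normalizations, I would determine it by taking $f \equiv 1$. This requires
$$
\frac{1}{d!}\int_{\T^d}\prod_{j<k}|e^{\I\theta_j}-e^{\I\theta_k}|^2 \dd\nu^{\otimes d}(\bm\theta) = 1 .
$$
To check this, write the Vandermonde product as $\det[e^{\I\theta_j(d-k)}]$, as in \lref{lemma:weyl_char_formula}, and expand $|\det|^2$ by the Leibniz formula (Definition~\ref{def:determinant}). By orthogonality of the characters $e^{\I m\theta}$ in $\mathbb{L}^2(\T,\dd\nu)$, only the diagonal pairs $\sigma = \tau$ survive. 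There are $d!$ such pairs, each contributing $1$, so the integral equals $d!$ and $c = 1$. This completes the proof.
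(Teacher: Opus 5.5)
Your proof is correct. The paper gives no proof of its own and simply imports the formula from Goodman--Wallach (Corollary 7.3.7), which is established by the same conjugation-map argument $\U(d)/T\times T\to\U(d)$ with Jacobian $\det\bigl(\ad_{t^{-1}}-\openone\bigr)|_{\mathfrak p}=\prod_{j<k}|e^{\I\theta_j}-e^{\I\theta_k}|^2$ that you outline, so your proposal is a self-contained version of the cited route; your fixing of the constant via $f\equiv 1$ is the same orthogonality computation the paper carries out in Fact~\ref{fact:wavefuntion_orthonormality}.
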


\subsection{The antisymmetric subspace}
Within the general representation theory of $\U(d)$, we focus on one special irreducible representation: the fully antisymmetric one. These properties will be useful in our subsequent discussions, where we construct probe states for learning $\U(d)$ channels by analogy with optimal phase estimation (that is, $\U(1)$ dynamics learning).

\begin{definition}[antisymmetric subspace and exterior product {\cite[Section 7.1.1]{Watrous2018}}] \label{def:anti_symm_ext_prod}
    For any Hilbert space $\H \cong \C^L$ and an integer $d \in \N$, the antisymmetric subspace of $\H^{\otimes d}$ is given by
    $$
    \wedge^d \mathcal{H} := \left\{ \ket{\psi} \in \mathcal{H}^{\otimes d}:~\forall \sigma \in \mathfrak{S}_d,~P_{\sigma} \ket{\psi} = \sgn(\sigma) \ket{\psi} \right\},
    $$
    The anti-symmetrizer is the following projection:
    $$
    \Pi_{\wedge^d} := \frac{1}{d!} \sum_{\sigma \in \S_d } \sgn(\sigma) P_{\sigma}
    $$
    that satisfies $\Pi_{\wedge^d}^2 = \Pi_{\wedge^d}$, $\mathrm{im}(\Pi_{\wedge^d}) = \wedge^d \mathcal{H}$, and $P_{\sigma} \Pi_{\wedge^d} = \sgn(\sigma) \Pi_{\wedge^d}$ for any $\sigma \in \S_d$. Given a collection of vectors $\{\ket{v_j}\}_{j=1}^d \subseteq \H$, their exterior product, with the normalization convention used here, is given by
    $$
    \ket{v_1 \wedge v_2 \wedge \cdots \wedge v_d} = \sqrt{d!} \cdot \Pi_{\wedge^d} \ket{v_1, v_2, \dots, v_d} = \frac{1}{\sqrt{d!}} \sum_{\sigma \in \S_d } \sgn(\sigma) \ket{v_{\sigma(1)}, v_{\sigma(2)}, \dots, v_{\sigma(d)}}.
    $$
\end{definition}

\begin{fact}[Alternating property] \label{fact:alternating_property}
    For any $\sigma \in \S_d$, the product $\ket{v_{\sigma(1)} \wedge \cdots \wedge v_{\sigma(d)}} = \sgn(\sigma) \ket{v_1 \wedge \cdots \wedge v_d}$. If there exists distinct indices $i , j \in [d]$ such that $\ket{v_i} = \ket{v_j}$, then $\ket{v_1 \wedge \cdots \wedge v_d} = 0$.
\end{fact}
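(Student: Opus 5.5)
The claim follows directly from the definition of the exterior product in Definition \ref{def:anti_symm_ext_prod}, $\ket{v_1\wedge\cdots\wedge v_d} = \sqrt{d!}\,\Pi_{\wedge^d}\ket{v_1,\dots,v_d}$, together with the property $P_\tau\Pi_{\wedge^d} = \sgn(\tau)\Pi_{\wedge^d}$ stated there. I would handle the two parts in turn: first the sign rule under permutations, then the vanishing statement, which is a short consequence of the sign rule.

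\emph{Sign rule.} The first step is to express the permuted product tensor through $P_\sigma$. By the action in \lref{lemma:schur_weyl}, $P_\pi\ket{v_1,\dots,v_d} = \ket{v_{\pi^{-1}(1)},\dots,v_{\pi^{-1}(d)}}$. Taking $\pi = \sigma^{-1}$ gives
\[
\ket{v_{\sigma(1)},\dots,v_{\sigma(d)}} = P_{\sigma^{-1}}\ket{v_1,\dots,v_d}.
\]
Next I would show that $\Pi_{\wedge^d}$ commutes with each $P_\tau$. Reindexing the sum defining $\Pi_{\wedge^d}$ and using $\sgn(\tau\rho) = \sgn(\tau)\sgn(\rho)$ gives $\Pi_{\wedge^d}P_\tau = \sgn(\tau)\Pi_{\wedge^d} = P_\tau\Pi_{\wedge^d}$. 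Combining these,
\[
\ket{v_{\sigma(1)}\wedge\cdots\wedge v_{\sigma(d)}} = \sqrt{d!}\,\Pi_{\wedge^d}P_{\sigma^{-1}}\ket{v_1,\dots,v_d} = \sgn(\sigma^{-1})\ket{v_1\wedge\cdots\wedge v_d}.
\]
Since $\sgn(\sigma^{-1}) = \sgn(\sigma)$, this is the claimed identity.

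\emph{Vanishing.} Suppose $\ket{v_i} = \ket{v_j}$ with $i \neq j$, and let $\sigma = (i\,j)$ be the transposition swapping them. Applying $\sigma$ to the indices leaves the ordered tuple of vectors unchanged. The sign rule therefore gives $x = -x$ for $x = \ket{v_1\wedge\cdots\wedge v_d}$, so $x = 0$ over $\C$.

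The only point needing care is the index convention, that is, whether to use $\sigma$ or $\sigma^{-1}$ when relating the permuted tensor to $P_\sigma$. This causes no real difficulty, because $\sgn(\sigma) = \sgn(\sigma^{-1})$, so either choice yields the same sign.
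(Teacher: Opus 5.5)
Your proof is correct. The paper states this Fact without proof, treating it as an immediate consequence of Definition~\ref{def:anti_symm_ext_prod} and the identity $P_\sigma\Pi_{\wedge^d} = \sgn(\sigma)\Pi_{\wedge^d}$, which is exactly the route you take, followed by the standard transposition argument for the vanishing part.
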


\begin{corollary}[Exterior inner product] \label{cor:exterior_inner_product}
    Take two sets of states $\{\ket{v_i}\}_{i=1}^d, \{\ket{w_j}\}_{j=1}^d \subseteq \H \cong \C^L$, then
    $$
    \braket{v_1 \wedge v_2 \wedge \cdots \wedge v_d | w_1 \wedge w_2 \wedge \cdots \wedge w_d} = \det\left[ \braket{v_i | w_j} \right]_{1 \leq i, j \leq d}.
    $$
\end{corollary}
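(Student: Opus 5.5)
The statement to prove is Corollary~\ref{cor:exterior_inner_product}: for $\{\ket{v_i}\}_{i=1}^d, \{\ket{w_j}\}_{j=1}^d \subseteq \H$,
$$
\braket{v_1 \wedge \cdots \wedge v_d | w_1 \wedge \cdots \wedge w_d} = \det\left[\braket{v_i|w_j}\right]_{1\le i,j\le d}.
$$
The plan is to expand both exterior products using Definition~\ref{def:anti_symm_ext_prod}, compute the inner product of product states factor by factor, and collapse the resulting double sum over $\Sym_d \times \Sym_d$ into one Leibniz sum (Definition~\ref{def:determinant}).

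\textbf{Step 1 (use the projector to remove one sum).} Write
$$
\ket{v_1\wedge\cdots\wedge v_d} = \sqrt{d!}\,\Pi_{\wedge^d}\ket{v_1,\dots,v_d}, \qquad \ket{w_1\wedge\cdots\wedge w_d} = \sqrt{d!}\,\Pi_{\wedge^d}\ket{w_1,\dots,w_d}.
$$
Since $\Pi_{\wedge^d}$ is a self-adjoint idempotent, $\Pi_{\wedge^d}^\dagger\Pi_{\wedge^d} = \Pi_{\wedge^d}$. Self-adjointness holds because $P_\sigma^\dagger = P_{\sigma^{-1}}$ and $\sgn(\sigma^{-1}) = \sgn(\sigma)$. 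Therefore
$$
\braket{v_1\wedge\cdots\wedge v_d | w_1\wedge\cdots\wedge w_d} = d!\,\bra{v_1,\dots,v_d}\Pi_{\wedge^d}\ket{w_1,\dots,w_d} = \sum_{\sigma\in\Sym_d}\sgn(\sigma)\,\bra{v_1,\dots,v_d}P_\sigma\ket{w_1,\dots,w_d}.
$$

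\textbf{Step 2 (evaluate each permuted product).} By the action in Lemma~\ref{lemma:schur_weyl},
$$
P_\sigma\ket{w_1,\dots,w_d} = \ket{w_{\sigma^{-1}(1)},\dots,w_{\sigma^{-1}(d)}},
$$
and the inner product of product vectors factorizes, so
$$
\bra{v_1,\dots,v_d}P_\sigma\ket{w_1,\dots,w_d} = \prod_{i=1}^d \braket{v_i|w_{\sigma^{-1}(i)}}.
$$

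\textbf{Step 3 (match to Leibniz).} Reindex the sum by $\tau = \sigma^{-1}$, using $\sgn(\tau) = \sgn(\sigma)$. This gives
$$
\sum_{\tau\in\Sym_d}\sgn(\tau)\prod_{i=1}^d \braket{v_i|w_{\tau(i)}}.
$$
This is the Leibniz expansion of $\det[\braket{v_i|w_j}]_{i,j}$. If the convention in Definition~\ref{def:determinant} places the permutation on rows instead, $\det X = \det X^{\TP}$ reconciles the two forms.

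There is no real obstacle here. The only point needing care is the bookkeeping of the $\sqrt{d!}$ normalization together with the two $1/d!$ factors from the antisymmetrizers. Using the projector identity in Step 1 avoids expanding the double sum directly; it leaves exactly one sum over $\Sym_d$ with no leftover constant.
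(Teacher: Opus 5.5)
Your proposal is correct and follows the paper's proof essentially line for line. The paper likewise reduces the inner product to $d!\,\bra{v_1,\dots,v_d}\Pi_{\wedge^d}\ket{w_1,\dots,w_d}$, expands the antisymmetrizer into $\sum_{\sigma}\sgn(\sigma)\prod_j\braket{v_j|w_{\sigma^{-1}(j)}}$, and recognizes the Leibniz expansion via $\sgn(\sigma^{-1})=\sgn(\sigma)$; your explicit check that $\Pi_{\wedge^d}^\dagger\Pi_{\wedge^d}=\Pi_{\wedge^d}$ and your transpose remark about Definition~\ref{def:determinant} simply spell out details the paper leaves implicit.
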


\begin{proof}[Proof of Corollary \ref{cor:exterior_inner_product}]
    Expanding the product using Definition \ref{def:anti_symm_ext_prod}, it follows readily that
    $$
    \begin{aligned}
        \braket{v_1 \wedge v_2 \wedge \cdots \wedge v_d | w_1 \wedge w_2 \wedge \cdots \wedge w_d} &= d! \braket{v_1, \cdots, v_d | \Pi_{\wedge^d}| w_1, \dots, w_d } \\
        &= \sum_{\sigma \in \S_d } \sgn(\sigma) \braket{v_1, \cdots, v_d | w_{\sigma^{-1}(1)}, \dots, w_{\sigma^{-1}(d)} } \\
        &= \sum_{\sigma \in \S_d } \sgn(\sigma^{-1}) \prod_{j=1}^d \braket{v_j | w_{\sigma^{-1}(j)}} = \det\left[ \braket{v_i | w_j} \right]_{1 \leq i, j \leq d},
    \end{aligned}
    $$
    finishing the proof.
\end{proof}

\begin{fact}
    The inner product defined in Corollary \ref{cor:exterior_inner_product} gives $\| \ket{v_1 \wedge v_2 \wedge \cdots \wedge v_d}  \|_2^2 = \det\left[ \braket{v_i | v_j} \right]_{1 \leq i, j \leq d}$. In particular, it yields unit norm when $\{\ket{v_j}\}_{j=1}^d$ are orthonormal since $\det\left[ \braket{v_i | v_j} \right]_{1 \leq i, j \leq d} = \det(\openone) = 1$.
\end{fact}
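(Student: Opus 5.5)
The statement to prove is the Fact just above: for vectors $\{\ket{v_j}\}_{j=1}^d \subseteq \H$,
$$\| \ket{v_1 \wedge \cdots \wedge v_d} \|_2^2 = \det\left[\braket{v_i|v_j}\right]_{1 \leq i,j \leq d},$$
with unit norm whenever the $\ket{v_j}$ are orthonormal. The plan is to specialize Corollary \ref{cor:exterior_inner_product} to the diagonal case $\ket{w_j} = \ket{v_j}$ for all $j \in [d]$, and then evaluate the resulting Gram determinant in the orthonormal case.

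First, the squared Euclidean norm of any vector is its inner product with itself:
$$\| \ket{v_1 \wedge \cdots \wedge v_d} \|_2^2 = \braket{v_1 \wedge \cdots \wedge v_d | v_1 \wedge \cdots \wedge v_d}.$$
Applying Corollary \ref{cor:exterior_inner_product} with $w_j = v_j$ turns the right-hand side into $\det[\braket{v_i|v_j}]_{1 \leq i,j \leq d}$, which is exactly the claimed Gram determinant. As a consistency check, this Gram matrix is Hermitian positive semidefinite, so its determinant is real and nonnegative, as a squared norm must be.

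Second, for the orthonormal case we have $\braket{v_i|v_j} = \delta_{i,j}$. The Gram matrix is then $\openone_d$, and its determinant equals $1$. In the Leibniz formula of Definition \ref{def:determinant}, only the identity permutation contributes a nonzero product, and that product is $1$.

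There is no real obstacle here. The only point needing care is the index convention: the inner-product corollary uses the antilinear first slot, and the determinant of the conjugate-transposed Gram matrix equals the complex conjugate of the original. Since both sides are real, the conventions agree.
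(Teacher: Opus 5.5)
Your proposal is correct and takes the same approach as the paper. The Fact is stated there without a separate proof, as an immediate consequence of Corollary~\ref{cor:exterior_inner_product} with $\ket{w_j} = \ket{v_j}$, and the orthonormal case reduces to $\det(\openone) = 1$ exactly as you describe.
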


\begin{lemma}[Occupation basis {\cite[Proposition 7.9]{Watrous2018}}, restated] \label{lemma:occupation_basis}
    For integers $L, d \in \N$, we define the set of strictly increasing sequences 
    $$
    \Omega_{L, d} := \left\{ \bfk \in \{0, 1, \dots, L - 1\}^d:~k_1 < k_2 < \cdots < k_d  \right\}.
    $$
    And every $\bfk \in \Omega_{L, d}$ corresponds to an occupation product
    $$
    \ket{\bfk}_{\wedge} := \ket{k_1 \wedge \cdots \wedge k_d} = \frac{1}{\sqrt{d!}} \sum_{\sigma \in \S_d} \sgn(\sigma) \ket{k_{\sigma(1)}, \dots, k_{\sigma(d)}} \in (\C^L)^{\otimes d}.
    $$
    Then $\{ \ket{\bfk}_{\wedge} \}_{\bfk \in \Omega_{L, d}}$ forms a set of orthonormal basis for $\wedge^d \C^L$, and we use the convention $\bra{\bfk}_\wedge  := \ket{\bfk}_{\wedge}^\dagger$.
\end{lemma}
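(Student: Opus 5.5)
We need to prove that $\{\ket{\bfk}_\wedge\}_{\bfk\in\Omega_{L,d}}$ is an orthonormal basis of $\wedge^d\C^L$. The plan is to establish three properties in turn: membership, orthonormality, and spanning.

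\emph{Membership.} By Definition~\ref{def:anti_symm_ext_prod}, $\ket{\bfk}_\wedge=\sqrt{d!}\,\Pi_{\wedge^d}\ket{k_1,\dots,k_d}$. Since $\mathrm{im}(\Pi_{\wedge^d})=\wedge^d\C^L$, each $\ket{\bfk}_\wedge$ lies in $\wedge^d\C^L$.

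\emph{Orthonormality.} We apply Corollary~\ref{cor:exterior_inner_product} with $v_i=e_{k_i}$ and $w_j=e_{k'_j}$, which gives
$$\braket{\bfk|\bfk'}_\wedge=\det\left[\delta_{k_i,k'_j}\right]_{1\le i,j\le d}.$$
If $\bfk=\bfk'$, the matrix is the identity, so the determinant is $1$. If $\bfk\neq\bfk'$, then, since both sequences have distinct entries, some $k_i$ does not appear in $\bfk'$ (otherwise the two sets would coincide, and because both are sorted, so would the sequences). That row of the matrix is zero, so the determinant vanishes.

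\emph{Spanning.} Take $\ket{\psi}\in\wedge^d\C^L$ and expand it in the product basis as $\ket\psi=\sum_{\mathbf{j}\in\{0,\dots,L-1\}^d}c_{\mathbf j}\ket{j_1,\dots,j_d}$. Because $\Pi_{\wedge^d}$ fixes $\wedge^d\C^L$, we have $\ket\psi=\Pi_{\wedge^d}\ket\psi=\sum_{\mathbf j}c_{\mathbf j}\,\Pi_{\wedge^d}\ket{j_1,\dots,j_d}$. Each term falls into one of two cases:
\begin{itemize}
\item If $\mathbf j$ has a repeated entry, then $\Pi_{\wedge^d}\ket{\mathbf j}=0$ by Fact~\ref{fact:alternating_property}.
\item Otherwise, let $\sigma$ be the sorting permutation, so that $\mathbf j$ is a permutation of some $\bfk\in\Omega_{L,d}$. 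Using $P_\sigma\Pi_{\wedge^d}=\sgn(\sigma)\Pi_{\wedge^d}$, we get $\Pi_{\wedge^d}\ket{\mathbf j}=\pm\frac{1}{\sqrt{d!}}\ket{\bfk}_\wedge$.
\end{itemize}
Hence $\ket\psi\in\Span\{\ket{\bfk}_\wedge\}$.

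None of these steps is a real obstacle. The only point needing care is the sign bookkeeping for $P_\sigma$ versus index permutations, and it is harmless because we need only that $\ket{\mathbf j}$ maps to a scalar multiple of $\ket{\bfk}_\wedge$, not the exact sign. As a consistency check, the dimension count gives $\dim\wedge^d\C^L=|\Omega_{L,d}|=\binom{L}{d}$.
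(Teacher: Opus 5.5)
Your proof is correct. The paper gives no argument of its own for this lemma. It is stated as a restatement of \cite[Proposition 7.9]{Watrous2018} and taken on citation, so there is no internal route to compare against.

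What you supply is the standard self-contained verification: orthonormality from $\det[\delta_{k_i,k'_j}]$, and spanning by applying $\Pi_{\wedge^d}$ to the product basis. It relies only on material the paper establishes before the lemma (Definition~\ref{def:anti_symm_ext_prod}, Fact~\ref{fact:alternating_property}, Corollary~\ref{cor:exterior_inner_product}), none of which depends on the lemma, so there is no circularity. It also handles the degenerate case $d>L$ correctly: every index tuple has a repeat, so $\wedge^d\C^L=\{0\}$, which is spanned by the empty family $\Omega_{L,d}=\emptyset$.

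One small bookkeeping point in the spanning step. Writing $\ket{\mathbf j}=P_\tau\ket{\bfk}$, the identity you actually need is $\Pi_{\wedge^d}P_\tau=\sgn(\tau)\Pi_{\wedge^d}$, with $P_\tau$ multiplying on the right. The definition records only the left-multiplication form $P_\sigma\Pi_{\wedge^d}=\sgn(\sigma)\Pi_{\wedge^d}$. The right-multiplication form follows by taking adjoints, since $\sgn(\sigma^{-1})=\sgn(\sigma)$. Alternatively, you can bypass it: write $\Pi_{\wedge^d}\ket{\mathbf j}=\frac{1}{\sqrt{d!}}\ket{j_1\wedge\cdots\wedge j_d}$ and apply the permutation-sign part of Fact~\ref{fact:alternating_property} directly.
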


\begin{definition}[Wavefunction via Fourier representation {\cite[Section 1.5]{stefanucci2013nonequilibrium}}] \label{def:Fourier_wavefunction}
    Define the isometry $\mathscr{F}_L: \C^L \to \mathbb{L}^2(\T, \dd \nu)$ by $\mathscr{F}_L \ket{k} = f_k$, where $f_k(\theta) = e^{\I k \theta}$ for all $k \in \{0, 1, \dots, L - 1\}$. We will use the shorthand convention $\braket{\theta | k} = f_k(\theta)$. For any occupation basis $\bfk \in \Omega_{L,d}$, the corresponding antisymmetric wavefunction in the Fourier representation takes the form of a Slater determinant obtained by applying $\mathscr{F}_L^{\otimes d}$:
    $$
      \forall \bm{\theta} \in \T^d, \quad  \psi_{\bfk}(\bm{\theta})
        := \mathscr{F}_L^{\otimes d} \ket{\bfk}_{\wedge} = \bra{\bm{\theta}}_{\wedge} \ket{\bfk}_{\wedge} = \frac{1}{\sqrt{d!}} \sum_{\sigma \in \S_d } \sgn(\sigma) \prod_{j=1}^d e^{\I k_{\sigma(j)} \theta_j} = \frac{1}{\sqrt{d!}} 
        \det\left[e^{\I k_j \theta_i}\right]_{1 \leq i,j \leq d}.
    $$
\end{definition}

\begin{fact} \label{fact:wavefuntion_orthonormality}
    The set of wavefunctions $\{ \psi_{\bfk} \}_{\bfk \in \Omega_{L, d} }$ are mutually orthonormal on the inner product space $\mathbb{L}^2(\T^d, \dd \nu^{\otimes d})$.
\end{fact}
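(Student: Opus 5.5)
The statement is Fact~\ref{fact:wavefuntion_orthonormality}: the Slater determinants $\psi_{\bfk}$, $\bfk \in \Omega_{L,d}$, are orthonormal in $\mathbb{L}^2(\T^d, \dd\nu^{\otimes d})$. The plan is to reduce this to the orthonormality of the occupation basis in Lemma~\ref{lemma:occupation_basis}. The mechanism is that $\mathscr{F}_L$ is an isometry, so $\mathscr{F}_L^{\otimes d}$ is an isometry from $(\C^L)^{\otimes d}$ into $\mathbb{L}^2(\T, \dd\nu)^{\otimes d} \cong \mathbb{L}^2(\T^d, \dd\nu^{\otimes d})$.

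First, I will verify that $\mathscr{F}_L$ is an isometry by computing $\langle f_k, f_{k'}\rangle = \int_{\T} e^{\I(k'-k)\theta}\dd\nu(\theta) = \delta_{k,k'}$. This uses the normalization $\dd\nu = \frac{1}{2\pi}\dd\theta$.

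Next, I will show that tensor products of such isometries are isometries on product states. For product vectors,
$$
\Big\langle \prod_{j} f_{a_j}(\theta_j), \prod_j f_{b_j}(\theta_j) \Big\rangle = \prod_j \delta_{a_j, b_j}
$$
by Fubini on the product measure. This equals $\braket{a_1,\dots,a_d | b_1, \dots, b_d}$, and sesquilinear extension then gives that $\mathscr{F}_L^{\otimes d}$ preserves inner products on all of $(\C^L)^{\otimes d}$.

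Finally, I will apply this to $\ket{\bfk}_\wedge$. By Definition~\ref{def:Fourier_wavefunction}, $\psi_{\bfk} = \mathscr{F}_L^{\otimes d}\ket{\bfk}_\wedge$, so $\langle \psi_{\bfk}, \psi_{\bfk'}\rangle = {}_\wedge\!\braket{\bfk|\bfk'}_\wedge = \delta_{\bfk,\bfk'}$ by Lemma~\ref{lemma:occupation_basis}.

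As a direct cross-check, one can expand both determinants. The integral of $\prod_j e^{\I(k'_{\tau(j)} - k_{\sigma(j)})\theta_j}$ is nonzero only if $k_{\sigma(j)} = k'_{\tau(j)}$ for all $j$. Because entries are strictly increasing, this forces $\bfk = \bfk'$ and $\sigma = \tau$, which gives $\frac{1}{d!}\sum_\sigma 1 = 1$.

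There is no real obstacle here. The only care needed is the identification of $\mathbb{L}^2(\T)^{\otimes d}$ with $\mathbb{L}^2(\T^d)$ via Fubini, and respecting the conjugate-linear convention in the first slot of the inner product.
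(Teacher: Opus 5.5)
Your proof is correct, but your main argument takes a different route from the paper's. The paper's proof is exactly what you call your cross-check. It expands both Slater determinants and integrates term by term using $\int_{\T} e^{\I m\theta}\dd\nu(\theta)=\delta_{m,0}$. It then collapses $\frac{1}{d!}\sum_{\sigma,\tau}\sgn(\sigma)\sgn(\tau)\prod_j\delta_{k_{\tau(j)},k'_{\sigma(j)}}$ to $\delta_{\bfk,\bfk'}$ using strict monotonicity of the index sequences.

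Your primary route instead factors the statement into two pieces: $\mathscr{F}_L^{\otimes d}$ is an isometry, and the occupation basis is orthonormal (Lemma~\ref{lemma:occupation_basis}). You glue them together through the defining equality $\psi_{\bfk}=\mathscr{F}_L^{\otimes d}\ket{\bfk}_{\wedge}$ in Definition~\ref{def:Fourier_wavefunction}, which does agree with the determinant formula there.

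This buys a cleaner structural explanation: orthonormality is simply transported from $\wedge^d\H_L$ to $\mathbb{L}^2(\T^d,\dd\nu^{\otimes d})$ by a unitary-onto-image map. That is the viewpoint behind the paper's later uses of the Fact, namely the resolution of identity for $M_{\wedge^d}$ and the claim that $\mu(\bmt)$ is a probability density.

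The combinatorial content is relocated rather than removed. Lemma~\ref{lemma:occupation_basis} itself rests on the same permutation-sum argument, or equivalently on $\det[\delta_{k_i,k'_j}]=\delta_{\bfk,\bfk'}$ via Corollary~\ref{cor:exterior_inner_product}. The paper's direct computation works straight from the determinant formula and is self-contained, without needing the Fock-space side at all.
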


\begin{proof}[Proof of Fact \ref{fact:wavefuntion_orthonormality}]
    The result follows from direct evaluation: Take $\bfk, \bfk' \in \Omega_{L, d}$, we have
    $$
    \begin{aligned}
        \int_{\T^d} \overline{\psi_{\bfk}(\bm{\theta})} \psi_{\bfk'}(\bm{\theta}) \dd \nu^{\otimes d}(\bm{\theta}) &= \frac{1}{d!} \int_{\T^d} \overline{\det\left[ e^{\I k_j \theta_i} \right]_{1 \leq i, j \leq d}} \det \left[ e^{\I k_r' \theta_{\ell}} \right]_{1 \leq r, \ell \leq d} \dd \nu^{\otimes d}(\bm{\theta}) \\
        &= \frac{1}{d!} \sum_{\sigma, \tau \in \S_d} \sgn(\sigma) \sgn(\tau) \prod_{j=1}^d \int_{\T} e^{\I( k_{\tau(j)} - k_{\sigma(j)}' ) \theta_j  } \dd \nu(\theta_j) \\
        &= \frac{1}{d!} \sum_{\sigma, \tau \in \S_d} \sgn(\sigma) \sgn(\tau) \prod_{j=1}^d \delta_{k_{\tau(j)}, k'_{\sigma(j)}} = \frac{1}{d!} \cdot d! \delta_{\bfk, \bfk'} = \delta_{\bfk, \bfk'},
    \end{aligned}
    $$
    finishing the proof.
\end{proof}

\begin{remark}[Weyl character formula via wavefunctions {\cite[Chapter 7]{Goodman2009}}] \label{remark:reformulating_Weyl_character}
For each partition $\lambda \vdash_d n$ subject to $\lambda_1 \leq L - d$, it (bijectively) induces a strictly increasing sequence $\bfk =  \bfk(\lambda)$, where $k_j = \lambda_{d+1 - j} + j - 1$ for all $j \in [d]$, then $\bfk(\lambda) \in \Omega_{L, d}$ [cf. \lref{lemma:occupation_basis}]. Let $\psi_{\bfk}(\bm{\theta})$ denote the wavefunction of Definition \ref{def:Fourier_wavefunction}, then for each $\bm{\theta} \in \mathbb{T}^d$,
$$
\begin{aligned}
\det\left[ e^{\I(\lambda_j + d - j)\theta_i} \right]_{1 \leq i, j \leq d} &= (-1)^{\binom{d}{2}}  \det\left[ e^{\I(\lambda_{d+1-j} + j -1)\theta_i} \right]_{1 \leq i, j \leq d} \\
&= (-1)^{\binom{d}{2}}  \det\left[ e^{\I k_j\theta_i} \right]_{1 \leq i, j \leq d}  = (-1)^{\binom{d}{2}} \sqrt{d!}\,\psi_{\bfk(\lambda)}(\bm{\theta}).
\end{aligned}
$$
Likewise, in the same notation, the denominator is the wavefunction evaluated at the index $\bfk(\bm{0}) = (0, 1, \dots, d - 1)$ of the trivial representation:
$$
\begin{aligned}
\det\left[ e^{\I(d - j)\theta_i} \right]_{1 \leq i, j \leq d} = (-1)^{\binom{d}{2}} \det\left[ e^{\I(j-1)\theta_i} \right]_{1 \leq i, j \leq d} = (-1)^{\binom{d}{2}} \sqrt{d!}\,\psi_{\bfk(\bm{0})}(\bm{\theta}).
\end{aligned}
$$
Therefore, recall the expression in \lref{lemma:weyl_char_formula}, the character can be reformulated as
$$
\chi_{\lambda}(\bm{\theta}) = \frac{\psi_{\bfk(\lambda)}(\bm{\theta})}{\psi_{\bfk(\bm{0})}(\bm{\theta})}.
$$
\end{remark}

\begin{definition}[\cite{Okounkov2001}] \label{def:canonical_transformation}
    The equality $k_j = \lambda_{d+1-j} + j - 1$ establishes a bijection $\bfk \mapsto \lambda(\bfk)$ and $\lambda \mapsto \bfk(\lambda)$ between the lattice of strictly increasing sequences $\Omega_{L, d}$ [cf. Lemma \ref{lemma:occupation_basis}] and the set of partitions with a bounded first row and an indefinite number of total boxes,
    $$
    \Gamma_{L, d} = \left\{ \lambda: \ell(\lambda) \leq d,~\lambda_1 \leq L - d \right\}.
    $$
    Moreover, for each $n \in \N$, since $|\bfk| =\sum_j k_j = \sum_{j} \lambda_{d+1-j} + \sum_j (j-1) = |\lambda| + \binom{d}{2}$, we can define the set of occupation bases that correspond to partitions on $n$ boxes, and the subspace it generates, that is,
    $$
   \forall n \in \{0, 1, \dots, Q\}, \quad \Omega_{L, d}^{(n)} = \left\{ \bfk \in \Omega_{L, d}: |\bfk| = n + \binom{d}{2} \right\}, \quad
    \mathcal{K}_n = \Span\left\{ \ket{\bfk}_{\wedge}:~\bfk \in \Omega_{L, d}^{(n)}\right\}.
    $$
\end{definition}

\begin{corollary} \label{cor:decomp_ext_space_with_box_number}
    Let $\mathcal{K}_n$ as defined in Definition \ref{def:canonical_transformation}. Set $Q = d(L-d)$. $\Omega_{L, d}$ can be decomposed into disjoint sets $\Omega_{L, d} = \bigcup_{n=0}^{Q} \Omega_{L, d}^{(n)}$, and the exterior space $\wedge^d \H$ with $\dim \H = L$ has decomposition
    $
    \wedge^d \H = \bigoplus_{n=0}^{Q} \mathcal{K}_n
    $.
\end{corollary}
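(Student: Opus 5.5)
Since the $\Omega_{L,d}^{(n)}$ and the $\mathcal{K}_n$ are defined through the bijection of Definition~\ref{def:canonical_transformation}, the natural approach is to show that the ``box number'' map $\bfk \mapsto |\bfk| - \binom{d}{2} = |\lambda(\bfk)|$ sends $\Omega_{L,d}$ onto $\{0,1,\dots,Q\}$ with $Q = d(L-d)$. Once we have that, the sets $\Omega_{L,d}^{(n)}$ are simply its fibres. The space decomposition then follows by grouping the orthonormal occupation basis of Lemma~\ref{lemma:occupation_basis} according to these fibres.

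\textbf{Range of the box number.} For $\bfk \in \Omega_{L,d}$ set $\lambda = \lambda(\bfk)$, so $\lambda_{d+1-j} = k_j - (j-1)$. Strict increase $k_{j+1} \geq k_j + 1$ gives $\lambda_{d-j} \geq \lambda_{d+1-j}$. Together with $k_1 \geq 0$ and $k_d \leq L-1$, this yields
$$
0 \leq \lambda_d \leq \lambda_{d-1} \leq \cdots \leq \lambda_1 \leq L-d .
$$
Hence $|\lambda| = |\bfk| - \binom{d}{2}$ is an integer in $[0, d(L-d)] = [0, Q]$. This is the identity $|\bfk| = |\lambda| + \binom{d}{2}$ already recorded in Definition~\ref{def:canonical_transformation}.

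\textbf{Partition of $\Omega_{L,d}$.} Every $\bfk$ lies in exactly one set $\Omega_{L,d}^{(n)}$ with $n \in \{0,\dots,Q\}$, namely $n = |\bfk| - \binom{d}{2}$. Since $|\bfk|$ is a single-valued function of $\bfk$, the sets are pairwise disjoint, so $\Omega_{L,d} = \bigcup_{n=0}^{Q} \Omega_{L,d}^{(n)}$ as a disjoint union. Each fibre is in fact nonempty, since the partitions $(L-d)^{q}\,r$ realize every $n$; disjointness and covering, however, are all that is needed.

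\textbf{Decomposition of $\wedge^d \H$.} By Lemma~\ref{lemma:occupation_basis}, $\{\ket{\bfk}_\wedge\}_{\bfk \in \Omega_{L,d}}$ is an orthonormal basis of $\wedge^d \H$. Splitting an orthonormal basis into disjoint subfamilies produces mutually orthogonal spans whose direct sum is the whole space. Therefore $\wedge^d \H = \bigoplus_{n=0}^{Q} \mathcal{K}_n$.

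There is no real obstacle here. The only point needing care is the inequality $\lambda_1 = k_d - d + 1 \leq L - d$, which is where the upper index $Q = d(L-d)$ comes from. This follows directly from the constraint $k_d \leq L-1$.
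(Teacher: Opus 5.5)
Your proof is correct and takes the same route as the paper. Both arguments bound the box number $|\bfk|-\binom{d}{2}$ to $\{0,\dots,Q\}$ and then group the orthonormal occupation basis into these fibres; the paper does the bound coordinatewise via $j-1\le k_j\le L-d+j-1$, which is exactly your $0\le\lambda_{d+1-j}\le L-d$, and deriving the upper bound from $k_d\le L-1$ as you do also avoids the paper's typo $k_j\le L-d+j+1$.
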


\begin{proof}[Proof of Corollary \ref{cor:decomp_ext_space_with_box_number}]
    Recall \lref{lemma:occupation_basis}, as $\Span\{\ket{\bfk}_{\wedge}\}_{\bfk \in \Omega_{L, d} } = \Span\{\ket{\bfk(\lambda)}_{\wedge}\}_{\lambda \in \Gamma_{L, d} } = \wedge^d \H$, occupation bases that correspond to different Young diagram sizes must span different subspaces. It suffices to track the lower and upper bounds of $n$. Note that the strictly increasing vector $\bfk$ must satisfy $j -1 \leq k_j \leq L - d + j +1$, since it must simultaneously accommodate $j-1$ increases while leaving room for the remaining $d-j$ entries. Therefore, summing up both sides yields $\binom{d}{2} = \sum_{j=1}^d (j-1) \leq \sum_{j=1}^d k_j \leq \sum_{j=1}^d (L - d + j + 1) = d(L-d) + \binom{d}{2}$. Both inequalities are saturable when setting $k_j = j-1$ and $k_j = L - d + j+1$, respectively.
\end{proof}

\section{Constructing the learning algorithm}
\label{sec:contructing_learning_algorithm}

\subsection{Optimal average-case learning of unitary channels}
\label{sec:bisio_learning_strategy}
Bisio, Chiribella, D'Ariano, Facchini, and Perinotti consider the following problem: How to construct a learner (or in the wording of \cite{Giulio_comb_2009}, a ``comb'') $\mathfrak{C}_{N \to M}[\cdot]$ that gets as input $N$ copies of a unitary channel and produces $M$ copies of it optimally under the averaged channel fidelity, defined by
$$
F(\mathfrak{C}_{N \to M}) = \frac{1}{d^{2M}} \int_{\U(d)} \Bra{U^{\otimes M} } \Phi_{\mathfrak{C}_{N \to M}[\mathcal{U}^{\otimes N}]} \Ket{U^{\otimes M}} \dd U.
$$

They showed at the existential level that an optimal learner may be chosen from a canonical covariant family \cite{PhysRevA.81.032324}: It suffices to consider the unknown unitary being queried in parallel to prepare a memory state; a covariant measurement of that state produces an estimate $\widehat U$, and the outcome is either retained as a classical description or used to implement the channel $\widehat{\mathcal U}$.
This framework has proved effective for the estimation of $\U(1)$
\cite{Bu_ek_1999, vanDam_2007}, $\su(2)$
\cite{chiribella2004efficient}, $\su(3)$
\cite{yoshida2025asymptoticallyoptimalunitaryestimation}, and general
$\su(d)$ unitary transformations \cite{Sedl_k_2019, Yang_2020, one_to_one_correspondence2026}. Nevertheless, covariance specifies the structure of the learner but does not determine the optimal sector-wise weights. Indeed, constructing those weights for a given estimation task remains highly nontrivial \cite{Kahn_2007, Yang_2022, yoshida2025asymptoticallyoptimalunitaryestimation}, while a universal approach to designing query-efficient unitary learning strategies has yet to be fully developed.

We now detail this canonical form. Suppose $n$ copies of the unknown unitary channel $\mathcal{U}^{\otimes n}$ are available. Recall the notations in Definition \ref{def:partitions_and_young_diagrams}, an optimal unitary learning strategy with $n$ copies is characterized by a subset $\Y\subseteq\Y_n^d$ of Young diagrams and a probability distribution over $\Y$:
$$
    \bfq=(q_\lambda)_{\lambda\in\Y}\in\mathbb R_{\ge 0}^{|\Y|}.
$$
Given such a strategy $(\Y, \bfq)$, up to a basis transformation [cf. Section \ref{sec:rep_theory}], the optimal probe state for parallel unitary storage takes the form
\begin{equation}
\label{eqn:canonical_probe_state}
    \ket{\Psi_{n, \bfq}} = \bigoplus_{\lambda\in\Y} \sqrt{\frac{q_\lambda}{d_\lambda}} \Ket{\openone_{\mathcal{W}_\lambda^d}}\otimes\ket{\xi_\lambda},
\end{equation}
where $\ket{\xi_\lambda} \in \mathcal{S}_{\lambda} \otimes \mathcal{S}_{\lambda}$ is a pure state chosen arbitrarily. After applying $U^{\otimes n}$ according to the register arrangement in the Schur-Weyl basis [cf. Lemma \ref{lemma:schur_weyl}] to the probe state, the memory state $U^{\otimes n}\ket{\Psi_{n, \bfq}}$ is immediately measured using the covariant POVM $\mathscr{M}=\{\ket{\Psi_{n, \widehat{U}}} \bra{\Psi_{n, \widehat{U}}}\dd {\widehat{U}}\}_{{\widehat{U}}\in\U(d)}$ consisting of reference frame vectors indexed continuously with the following structure
$$
    \ket{\Psi_{n, \widehat{U}}}
    =
    \bigoplus_{\lambda\in\Y}
    \sqrt{d_\lambda}\Ket{\widehat{U}_\lambda}\otimes\ket{\xi_\lambda}.
$$  
Together, the probe state $\ket{\Psi_{n, \bfq}}$ and the measurement $\mathscr{M}$ give the covariant unitary learning procedure. Specifically, for single-shot learning via the $n$-copy unitary channel, the only tunable object in our strategy is the probability distribution $\bfq$ over the Young diagram support $\Y$ (or equivalently, over $\Y_n^d$).

\subsection{The tomography problem with covariant learning scheme}
\label{sec:problem_reformulation}
In this section, we specialize the covariant learning algorithm of
Section~\ref{sec:bisio_learning_strategy} to unitary channel tomography in diamond distance (even though the algorithm was not originally designed for it). The learning procedure works in a ``measure-and-operate'' manner: Once the measurement outcome yields $\widehat{U}$,
we output a classical description $\widehat{\mathcal{U}}$ of the corresponding unitary channel as our estimate \cite{PhysRevA.81.032324, Yang_2020, He2026resource}. Conditioned on the ground truth being $U$, the likelihood density of $\widehat{U}$ with respect to the Haar measure reads
\begin{equation}
\label{eqn:bisio_likelihood}
\begin{aligned}
\Pr\left[ \widehat{U} | U \right] &=  \left| \braket{ \Psi_{n, \widehat{U}} | U^{\otimes n} | \Psi_{n, \bfq} } \right|^2 = \left| \sum_{\lambda \in \Y } \sqrt{q_{\lambda}} \Braket{ \widehat{U}_{\lambda} | U_{\lambda} } \right|^2 = \left| \sum_{\lambda \in \Y} \sqrt{q_{\lambda}} \chi_{\lambda}(\widehat{U}^\dagger U) \right|^2.
\end{aligned}
\end{equation}
If we write $W = \widehat{U}^\dagger U$, since $U \mapsto U_{\lambda}$ is a homomorphism, we have $\Pr[\widehat{U} | U] = \Pr[W | \openone] =: \Pr[W]$. Note also that under this interpretation, we have $\dd W = \dd \widehat{U}$, and by the invariance of the Haar measure and the unitary invariance of the diamond distance, as in \cite{Yang_2020}, the success event can be reformulated as
$$
\Pr\left[ \left\| \widehat{\mathcal{U}} - \mathcal{U} \right\|_{\diamond} \leq \eps \mid U \right] = \Pr\left[ \left\| \mathcal{W} - \mathcal{I}  \right\|_{\diamond} \leq \eps \mid \openone \right].
$$

\subsection{Finding good probe states via variational optimality}

The design of a proper probe state based on the canonical learning framework reviewed in Section \ref{sec:bisio_learning_strategy} for $\U(d)$ dynamics is motivated by the design for learning $\U(1)$ dynamics (or the phase estimation problem), its one-dimensional counterpart. Specifically, the optimality of learning $\U(1)$ under the Holevo cost function \cite{Holevo2011} is derived via the optimal solution of the tridiagonal eigenvalue problem subject to special boundary conditions \cite[Section III.B]{vanDam_2007}. In this section, we first review how the optimal distribution is derived from a recurrence relation in $\U(1)$ learning. We then lift that derivation to $\U(d)$ within the Bisio-Chiribella-D'Ariano-Facchini-Perinotti canonical learning framework, using a variant of the Holevo cost function.

\paragraph{The variational problem of learning $\U(1)$ dynamics: a primer.} The $\U(1)$ dynamics learning is a classic example in the study of quantum estimation theory (see, e.g., \cite{Helstrom1969}). We take $L \in \N$ as the truncation of the energy level of the physical system in the Fock representation, denoted by $\mathcal{H}_L := \Span\{\ket{j} \}_{j=0}^{L-1}$, and the phase representation of a $\U(1)$ unitary reads $U_{\theta} = e^{\I \theta H} = \sum_{j=0}^{L - 1} e^{\I j \theta} \ket{j} \bra{j}$. Any normalized probe state takes the form $\ket{\psi_L} = \sum_{j=0}^{L-1} a_j \ket{j}$ subject to $\sum_{j=0}^{L-1} |a_j|^2 = 1$. As for the figure of merit, we take the Holevo cost \cite{Holevo2011}, which is quadratic and periodic in terms of the estimate deviation $\hat{\theta} - \theta$:
$$
C_{H}(\hat{\theta} - \theta) = 4 \sin^2\left(\frac{\hat{\theta} - \theta}{2}  \right) = 2 - 2 \cos\left( \hat{\theta}  -\theta \right).
$$
Suppose we are agnostic about the prior distribution of $\theta$, we take the uniform prior, and by \cite{Bu_ek_1999, hayashi2006parallel, vanDam_2007, Holevo2011} it suffices to take the covariant measurement over the torus $\T$ [cf. Section \ref{sec:prelim}], take the POVM
$$
M(\dd \hat{\theta}) = U_{\hat \theta} \Lambda U_{\hat \theta}^\dagger \dd \nu(\hat \theta): \quad \int_{\T} M(\dd \hat{\theta}) = \openone_{\mathcal{H}_L}
$$
for some matrix $\Lambda \in \C^{L \times L} $. The orthogonality reads $\int_{\T} e^{\I(t -s)\hat \theta } \dd \nu(\hat \theta) = \delta_{t, s}$, so that for $\{M(\dd \hat{ \theta}) \}_{\hat{ \theta} \in \T}$ to be a legitimate POVM, it is required that
$$
\Lambda \succeq 0; \quad \forall t \in \{0, 1, \dots, L - 1\},~\Lambda_{t, t} = 1.
$$
 By the covariance of the scheme, it suffices to take the ground truth $\theta = 0$ so that $U_{\theta} = \openone$. Plugging in the corresponding likelihood induced by this measurement, the cost reads
\begin{equation} \label{eqn:expected_holevo_cost}
\begin{aligned}
\E\left[C_{H}(\hat{\theta})\right] &= \int_{\T} C_{H}(\hat{ \theta}) \bra{ \psi_L   } M(\dd \hat{ \theta}) \ket{\psi_L} \\
&= \int_{\T}\left( 2 - e^{\I \hat{\theta}} - e^{-\I \hat{ \theta}} \right) \bra{\psi_L} U_{\hat \theta} \Lambda U_{\hat \theta}^\dagger \ket{\psi_L} \dd \nu(\hat{\theta}) \\
&= 2 - \int_{\T} \left( e^{\I \hat{\theta}} + e^{-\I \hat{\theta}} \right) \sum_{s, t=0}^{L-1} \overline{a_s} a_t \Lambda_{s, t} e^{\I(s - t) \hat{\theta}} \dd \nu(\hat \theta) = 2 -   \sum_{s, t=0}^{L-1} \overline{a_s} a_t \Lambda_{s, t} \left(\delta_{s - t , -1} + \delta_{s-t, 1} \right) \\
&= 2 - 2 \Re \sum_{t=0}^{L-2} \overline{a_t} a_{t+1} \Lambda_{t, t + 1}.
\end{aligned}
\end{equation}
The positivity of $\Lambda$ requires that each $2 \times 2$ principal minor of $\Lambda$ should be positive, i.e.,
$$
\forall t \in \{0, 1, \dots, L - 2\}, \quad
\begin{pmatrix}
  \Lambda_{t, t} & \Lambda_{t, t+1} \\ \Lambda_{t+1, t} & \Lambda_{t+1, t+1}
\end{pmatrix} = \begin{pmatrix}
1 & \Lambda_{t, t+1} \\ \Lambda_{t+1, t} & 1
\end{pmatrix} \succeq 0 \quad \iff \quad  1 - |\Lambda_{t, t+1}|^2 \geq 0.
$$
Write each amplitude $a_t = r_t e^{\I \alpha_t}$, we have the basic algebraic inequality $\Re \left(\overline{a_t} a_{t+1} \Lambda_{t, t + 1} \right) \leq |\overline{a_t} a_{t+1}||\Lambda_{t,t+1}| \leq r_t r_{t+1}$. Note that this bound is saturable by taking $\Lambda_{s, t} = e^{\I(\alpha_s - \alpha_t)}$, so that is can be readily decomposed into $\Lambda = \ket{\Psi} \bra{\Psi}$ where $\ket{\Psi} = \sum_{j=0}^{L-1} e^{\I \alpha_j } \ket{j}$. Substituting this into \eref{eqn:expected_holevo_cost}, we have
$$
\E\left[C_{H}(\hat{\theta})\right] = 2 - 2 \Re \sum_{t=0}^{L-2} \overline{a_t} a_{t+1} e^{\I(\alpha_t - \alpha_{t+1})} = 2 - 2 \sum_{t=0}^{L-2} r_t r_{t+1}.
$$
Then it suffices to optimize the amplitude distribution of the probe state; recall the normalization condition $\sum_{t=0}^{L-1} r_t^2 = 1$, and, without loss of generality, set $r_{-1} = r_L = 0$ as the boundary. Then it holds that
$$
\begin{aligned}
\inf_{ \mathbf{r} \in \mathbb{R}_{\geq 0}^{L}: \sum_{t = 0 }^{L - 1} r_t^2 = 1 } \E\left[C_{H}(\hat{\theta})\right] &= \inf_{ \mathbf{r} \in \mathbb{R}_{\geq 0}^{L}: \sum_{t = 0 }^{L - 1} r_t^2 = 1 } 2 - 2 \sum_{t=0}^{L-2} r_t r_{t+1} \\
&= \inf_{ \mathbf{r} \in \mathbb{R}_{\geq 0}^{L}: \sum_{t = 0 }^{L - 1} r_t^2 = 1 }  \sum_{t=-1}^{L-1} r_{t+1}^2 + \sum_{t=-1}^{L-1} r_{t}^2  - 2 \sum_{t=-1}^{L-1} r_t r_{t+1} \\
&= \inf_{ \mathbf{r} \in \mathbb{R}_{\geq 0}^{L}: \sum_{t = 0 }^{L - 1} r_t^2 = 1 } \sum_{t=-1}^{L-1} \left( r_{t+1} - r_t \right)^2.
\end{aligned}
$$
Suppose we set up the Hamiltonian $H_L$ with nearest-neighbor hopping 
\begin{equation}
\label{eqn:hopping_Hamiltonian}
H_L := 2 \openone - \sum_{t=0}^{L-2}\left( \ket{t+1} \bra{t} + \ket{t} \bra{t+1} \right),
\end{equation}
then the cost function can be reformulated as a matrix inner product
$$
\mathbf{r}^T H_L \mathbf{r} = \mathbf{r}^T \left( 2 r_t - r_{t-1} - r_{t+1} \right)_{t=0}^{L-1} = \sum_{t=-1}^{L-1} (r_{t+1} - r_t)^2.
$$
Therefore, it suffices to find the ground state of the Hamiltonian $H_L$ to minimize the cost. Notably, any eigenvector $\mathbf{r}$ of $H_L$ satisfies $H_L \mathbf{r} = \omega \mathbf{r}$, and at the entry level, we will need $2 r_t - r_{t-1} - r_{t+1} = \omega r_t$, or $r_{t+1} = (2 - \omega) r_t - r_{t-1}$. This matches exactly the recurrence relation that defines the Chebyshev polynomial in Definition \ref{def:chebyshev_poly}. To solve for $\omega$ and thus analytically determine $\mathbf{r}$, we will need the following lemma.

\begin{lemma}
\label{lemma:solving_recurrence_relation}
    Take $\{T_n\}_{n \geq - 1}$ as the series of Chebyshev polynomials of the second kind. If a non-zero sequence $\{r_t\}_{t=-1}^{L}$ satisfies $r_{-1} = 0$ and $r_{t+1} = 2x r_t - r_{t-1}$ for any $t\in \{0, 1, \dots, L-1\}$, then for any $-1 \leq t \leq L$, it holds that $r_t = r_0 T_t(x)$. Moreover, if $r_L = 0$, then it must hold that $T_L(x) = 0$.
\end{lemma}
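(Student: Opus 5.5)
The plan is a strong induction on $t$, using that a second-order linear recurrence is determined by two consecutive values. Set $s_t := r_0 T_t(x)$ for $-1 \leq t \leq L$. By Definition \ref{def:chebyshev_poly}, $s_{-1} = r_0 T_{-1}(x) = 0 = r_{-1}$ and $s_0 = r_0 T_0(x) = r_0$. Moreover, $s_{t+1} = 2x s_t - s_{t-1}$ holds for every $t \geq 0$. For $t \geq 1$ this is the defining recurrence. For $t = 0$ we check $T_1(x) = 2x = 2x T_0(x) - T_{-1}(x)$ directly, so the recurrence also holds at $t=0$.

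We now show $r_t = s_t$ for all $-1 \leq t \leq L$ by induction. The base cases $t=-1,0$ were verified above. For the inductive step, assume $r_{t-1} = s_{t-1}$ and $r_t = s_t$ for some $0 \leq t \leq L-1$. Both sequences obey the same recurrence, so
$$
r_{t+1} = 2x r_t - r_{t-1} = 2x s_t - s_{t-1} = s_{t+1}.
$$
This covers every index up to $L$, which gives the first claim.

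For the second claim, first note that the sequence being non-zero forces $r_0 \neq 0$. Indeed, if $r_0 = 0$, then $r_t = 0\cdot T_t(x) = 0$ for every $t$, contradicting non-triviality. Then $r_L = 0$ means $r_0 T_L(x) = 0$, and dividing by $r_0$ yields $T_L(x) = 0$.

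There is no real obstacle here. The only points needing care are the boundary ones: checking that the recurrence holds at $t=0$ (via $T_{-1}=0$ and $T_1 = 2x$), and deducing $r_0 \neq 0$ from non-triviality. The argument is purely algebraic, so it holds for arbitrary real (or complex) $x$ and needs no trigonometric representation.
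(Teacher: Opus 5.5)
Your proof is correct and follows the paper's argument: induction from the base cases $t=-1,0$ using the shared three-term recurrence, then dividing by $r_0\neq 0$. You also check the step $t=0$ explicitly, which the paper's inductive step (stated only for $t\geq 1$) skips, and you justify $r_0\neq 0$ from non-triviality where the paper merely asserts it.
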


\begin{proof}[Proof of \lref{lemma:solving_recurrence_relation}]
    We will prove the identity by induction. The base cases hold by the initial conditions that $T_{-1}(x) = 0$ gives $r_{-1} = 0 = r_0 T_{-1}(x)$, and $T_0(x) = 1$ gives $r_0 = r_0 T_0(x)$. Then for any $t \geq 1$, conditioned on the argument holds for $t$ and $t-1$, we have
    $$
    r_{t+1}(x)   = 2x r_{t}(x) - r_{t-1}(x) = r_0 \left( 2x T_t(x) - T_{t-1}(x) \right)= r_0 T_{t+1}(x).
    $$
    Since the sequence is non-zero, $r_0 \neq 0$, then $r_L = r_0 T_L(x)$ implies readily that $T_L(x) = 0$.
\end{proof}

By the statement of \lref{lemma:solving_recurrence_relation}, we can solve the explicit expression for $\{ r_t \}_{t=-1}^{L}$. For $T_L(x) = 0$, the trigonometric expression gives
$$
T_L(\cos \theta) = \frac{\sin((L + 1) \theta)}{\sin \theta} = 0 \quad \implies \quad \theta \in \{\theta\}_{j=1}^L, ~\theta_j = \frac{j \pi}{L+1}.
$$
So that in our case, the eigenvalues $\omega_j$ from tentative choice of $x_j = \cos \theta_j$ reads
\begin{equation} \label{eqn:eigenvectors_of_hopping_hamiltonian}
\begin{aligned}
&\omega_j = 2 - 2\cos \theta_j = 4 \sin^2\left(\frac{\theta_j}{2} \right) = 4 \sin^2 \left( \frac{j \pi}{2(L+1)} \right); \\
&r_t(j) = r_0 T_{t}(x_j) = \sin \theta_j \cdot \frac{\sin((t + 1) \theta_j)}{\sin \theta_j} = \sin((t + 1) \theta_j) = \sin \left( \frac{j(t+1) }{L+1} \pi \right),
\end{aligned}
\end{equation}
where the second equality holds for endpoint choices of $\theta_j$ via the extension [cf. Definition \ref{def:chebyshev_poly}]. Normalizing over $t$ for a fixed $j$, the discrete sine orthogonality identity \cite{doi:10.1137/S0036144598336745} gives $\sum_{t=0}^{L-1} r_t(j) r_t(j') =  \frac{L+1}{2} \delta_{j, j'}$, so that
$$
\forall \, 0 \leq t \leq L - 1,~1 \leq j \leq L, \quad r_t = \sqrt{\frac{2}{L+1}} \sin\left(  \frac{(t + 1) j}{L+1} \pi \right).
$$
Note that via the expression of $\omega_j$, the minimal value is attained at $j = 1$, so that taking the ground state as $\mathbf{r} = (r_t(1))_{t=0}^{L-1}$ recovers the optimal state for phase estimation derived in \cite{Bu_ek_1999, vanDam_2007}.

\paragraph{Extending to learning $\U(d)$ unitary channels.} Recall the reformulation result in Section~\ref{sec:problem_reformulation}. Roughly speaking, this reformulation tells us that, in our covariant learning framework, it suffices to ensure that the sampled unitaries have their spectra concentrated near the origin in a suitable sense with high probability. To design the probability distribution that generalizes the $\U(1)$ estimation theory, we need to redesign the cost function so that it jointly penalizes the deviation of each eigenvalue in $\spec(W)$. A natural choice that balances tractability against a small diamond distance is the additive Holevo cost:
\begin{equation} \label{eqn:additive_holevo_cost}
\forall \hat \bmt \in \T^d, \quad C_{H}(\hat \bmt) :=  4 \sum_{j=1}^d\sin^2 \left( \frac{\hat{\theta}_j}{2} \right) = 2d - \sum_{j=1}^d \left( e^{\I \hat{\theta}_j} + e^{-\I \hat{\theta}_j} \right).
\end{equation}
To track the property of the unitary channel, for notational consistency, we define $C_{H}(W) := C_{H}(\bmt(W))$. The likelihood function suggested in \eref{eqn:bisio_likelihood} naturally suggests an auxiliary probe in the antisymmetric subspace: For a fixed-$n$ strategy $(\mathsf{Y}, \bfq)$, take $L > d$ with $\Y \subseteq \Gamma_{L, d}$ [see Definition \ref{def:canonical_transformation}], by Lemma \ref{lemma:weyl_integral_formula} and Remark \ref{remark:reformulating_Weyl_character}, 
$$
\Pr[\bm{\theta}(W)\in A] = \int_{A} \left| \sum_{\lambda \in \Y} \sqrt{q_{\lambda}} \psi_{\bfk(\lambda)}(\bmt) \right|^2 \dd \nu^{\otimes d}(\bmt) = \int_{A} \left| \bra{\bmt}_{\wedge} \ket{\Phi_{n, \bfq}} \right|^2 \dd \nu^{\otimes d}(\bmt),~ \ket{\Phi_{n, \bfq}} = \sum_{\lambda \in \Y} \sqrt{q_{\lambda}} \ket{\bfk(\lambda)}_{\wedge} .
$$
for every permutation-invariant measurable set $A\subseteq\mathbb{T}^d$.
These are precisely the phase statistics corresponding to the normalized auxiliary probe state $\ket{\Phi_{n, \bfq}} \in \wedge^d \H_L$ [cf. \eref{eqn:canonical_probe_state}]. We therefore consider the following probe state and POVM on the full subspace $\wedge^d \H_L$ as a relaxation of the fixed-query construction in Section \ref{sec:bisio_learning_strategy}:
$$
\ket{\Psi_{L, d}} = \sum_{\bfk \in \Omega_{L, d} } a_{\bfk} \ket{\bfk}_{\wedge}: \quad \sum_{\bfk \in \Omega_{L, d} } |a_{\bfk}|^2 = 1; \quad M_{\wedge^d}(\dd \bm{\theta}) = \ket{\bmt}_{\wedge} \bra{\bmt}_{\wedge} \dd \nu^{\otimes d}(\bm{\theta}) \implies \int_{\T^d} M_{\wedge^d}(\dd \bm{\theta}) = \openone_{\wedge^d \H_L}.
$$
Transforming to the phase representation [cf. Definition \ref{def:Fourier_wavefunction}], for a chosen $\bm{\theta} \in \T^d$, its wavefunction yields
$$
\widehat{\Psi}_{L, d}(\bm{\theta}) = \bra{\bm{\theta}}_{\wedge} \ket{\Psi_{L, d} } = \sum_{\bfk \in \Omega_{L, d} } a_{\bfk} \bra{\bm{\theta}}_{\wedge} \ket{\bfk }_{\wedge} = \sum_{\bfk \in \Omega_{L, d} } a_{\bfk} \psi_{\bfk}(\bm{\theta}).
$$
Therefore, the expected cost is given by
\begin{equation}
\label{eqn:expected_additive_cost}
\begin{aligned}
    \E \left[ C_{H}(\hat \bmt) \right] = \int_{\T_d} C_{H}(\hat \bmt) \bra{\Psi_{L, d}} M_{\wedge^d}(\dd \hat \bmt) \ket{\Psi_{L, d}} =\int_{\T^d} C_{H}(\hat \bmt) \left| \sum_{\bfk \in \Omega_{L, d} } a_{\bfk} \psi_{\bfk}(\hat \bmt) \right|^2 \dd \nu^{\otimes d}(\hat \bmt).
\end{aligned}
\end{equation}
As before, we recast the minimization of the expected cost as a ground-state problem, while this time restricted to the antisymmetric subspace; the precise statement is the following lemma.

\begin{lemma}[Additive Holevo cost via a restricted local Hamiltonian] \label{lemma:expected_additive_cost_as_an_inner_product}
    Let $H_L$ be defined in \eref{eqn:hopping_Hamiltonian}. Define $H_{L,d}^{\wedge}$ as the restriction of the local Hamiltonian $\widetilde H_{L,d}$ to the antisymmetric subspace $\wedge^d\mathcal H_L$:
    $$
    H_{L, d}^{\wedge} = \widetilde{H}_{L, d} \bigg|_{\wedge^d \H_L}: \quad \widetilde{H}_{L, d} = \sum_{j=1}^d \openone^{\otimes (j - 1)} \otimes H_L \otimes \openone^{\otimes (d - j)},
    $$
    then the expected cost in \eref{eqn:expected_additive_cost} can be reformulated as
    $$
    \E \left[ C_{H}(\hat \bmt) \right] = \braket{ \Psi_{L, d} | H_{L, d}^{\wedge} |  \Psi_{L, d}}.
    $$
\end{lemma}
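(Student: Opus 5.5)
The plan is to mirror the $\U(1)$ computation in \eref{eqn:expected_holevo_cost}, one coordinate at a time, and then antisymmetrize. First, lift the phase representation to the full tensor space. Under the isometry $\mathscr{F}_L^{\otimes d}:(\C^L)^{\otimes d}\to \mathbb{L}^2(\T^d,\dd\nu^{\otimes d})$, each $\ket{k_1,\dots,k_d}$ maps to $\prod_j e^{\I k_j\theta_j}$. By Definition~\ref{def:Fourier_wavefunction}, the antisymmetric vector $\ket{\Psi_{L,d}}=\sum_{\bfk}a_{\bfk}\ket{\bfk}_\wedge$ is then sent exactly to $\widehat{\Psi}_{L,d}(\bmt)$. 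Consequently \eref{eqn:expected_additive_cost} reads
$$
\E[C_H]=\int_{\T^d} C_H(\bmt)\,\bigl|(\mathscr{F}_L^{\otimes d}\Psi_{L,d})(\bmt)\bigr|^2\dd\nu^{\otimes d}(\bmt),
$$
that is, the expectation of the multiplication operator by $C_H$ in the state $\mathscr{F}_L^{\otimes d}\ket{\Psi_{L,d}}$.

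Second, split $C_H(\bmt)=\sum_{j=1}^d\bigl(2-e^{\I\theta_j}-e^{-\I\theta_j}\bigr)$ by linearity into $d$ single-coordinate terms. For fixed $j$, multiplication by $e^{\pm\I\theta_j}$ on $\mathscr{F}_L^{\otimes d}$-images is a shift of the $j$-th index, $k_j\mapsto k_j\pm1$. The key computation is the compressed matrix element: for $\ket{\mathbf{s}},\ket{\mathbf{t}}\in(\C^L)^{\otimes d}$,
$$
\int \overline{\textstyle\prod_i e^{\I s_i\theta_i}}\,\bigl(2-e^{\I\theta_j}-e^{-\I\theta_j}\bigr)\prod_i e^{\I t_i\theta_i}\,\dd\nu^{\otimes d}=\prod_{i\neq j}\delta_{s_i,t_i}\bigl(2\delta_{s_j,t_j}-\delta_{s_j,t_j+1}-\delta_{s_j,t_j-1}\bigr),
$$
which is exactly $\bra{\mathbf{s}}\openone^{\otimes(j-1)}\otimes H_L\otimes\openone^{\otimes(d-j)}\ket{\mathbf{t}}$ by \eref{eqn:hopping_Hamiltonian}. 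The truncation to $\{0,\dots,L-1\}$ is automatic, because both vectors lie in $\H_L^{\otimes d}$: shifted components leaving the range simply have zero overlap, just as in the $\U(1)$ primer.

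Summing over $j$ and using sesquilinearity gives $\E[C_H]=\bra{\Psi_{L,d}}\widetilde H_{L,d}\ket{\Psi_{L,d}}$ with $\ket{\Psi_{L,d}}$ viewed inside $\H_L^{\otimes d}$. Finally, $\widetilde H_{L,d}$ commutes with every $P_\sigma$ since it is a symmetric sum of identical local terms, so it preserves $\wedge^d\H_L$. Because $\ket{\Psi_{L,d}}\in\wedge^d\H_L$, its expectation equals that of the restriction $H^\wedge_{L,d}$.

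The main obstacle is the bookkeeping in the first step: making sure the $1/\sqrt{d!}$ normalization of $\ket{\bfk}_\wedge$ and the Slater-determinant wavefunction are consistent, so that no extra $d!$ factor appears. Fact~\ref{fact:wavefuntion_orthonormality} confirms that the map is isometric on $\wedge^d\H_L$. The remaining concern is that the unrestricted integral over $\T^d$ is not divided by $d!$, and this is consistent with the definition of $M_{\wedge^d}$.
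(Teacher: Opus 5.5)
Your proof is correct, but it follows a different route from the paper's.

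The paper stays inside $\wedge^d\H_L$ throughout. It derives the wedge-product action rule \eref{eqn:action_rule_of_total_ham} from $[\widetilde H_{L,d},\Pi_{\wedge^d}]=0$. It then computes the occupation-basis matrix elements $\bra{\bfk'}_{\wedge} H^\wedge_{L,d}\ket{\bfk}_{\wedge}$, with indicators for $\bfk\pm\mathbf e_j\in\Omega_{L,d}$. Finally it matches these against the Slater-determinant shift identity $\sum_j e^{\pm\I\theta_j}\psi_{\bfk}=\sum_\ell\psi_{\bfk\pm\mathbf e_\ell}$. That step needs two separate arguments: colliding frequencies give a vanishing determinant, and out-of-range frequencies give functions orthogonal to every $\psi_{\bfk'}$.

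You instead prove a stronger, basis-free identity on the whole tensor space: $\langle\mathscr F_L^{\otimes d}\phi,\,C_H\cdot\mathscr F_L^{\otimes d}\phi'\rangle=\bra{\phi}\widetilde H_{L,d}\ket{\phi'}$ for all $\ket{\phi},\ket{\phi'}\in\H_L^{\otimes d}$. You check this on product-basis matrix elements one coordinate at a time, which is the $\U(1)$ primer computation repeated $d$ times. Antisymmetry enters only at the very end. This avoids all determinant bookkeeping and the collision/boundary case analysis.

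Your normalization worry also dissolves. $\mathscr F_L^{\otimes d}$ is an isometry on the entire tensor space, and Fact~\ref{fact:wavefuntion_orthonormality} is a special case of that.

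What the paper's route buys in return is the action rule \eref{eqn:action_rule_of_total_ham} itself. The paper reuses it immediately afterwards to show that $\ket{\mathbf r(j_1)\wedge\cdots\wedge\mathbf r(j_d)}$ are eigenvectors of $H^\wedge_{L,d}$. With your approach that rule would have to be stated separately. It follows in one line from the same commutation $[\widetilde H_{L,d},P_\sigma]=0$ that you already invoke.
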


\begin{proof}[Proof of \lref{lemma:expected_additive_cost_as_an_inner_product}]
    We start by showing the action of the Hamiltonian $H_{L, d}^{\wedge}$ on exterior products. Firstly, note that the local Hamiltonian $\widetilde{H}_{L, d}$ commutes with any permutation $P_{\sigma}$ where $\sigma \in \S_d$. By linearity, we have $[\widetilde{H}_{L, d}, \Pi_{\wedge^d}] = 0$. Therefore, recall Definition \ref{def:anti_symm_ext_prod}, for any $\ket{v_1}, \ket{v_2}, \dots, \ket{v_d} \in \H_L$, we have
    \begin{equation} \label{eqn:action_rule_of_total_ham}
    \begin{aligned}
    H_{L, d}^{\wedge} \ket{v_1 \wedge v_2\wedge \cdots \wedge v_d} &= \sqrt{d!} \widetilde{H}_{L, d} \Pi_{\wedge^d}  \ket{v_1, v_2, \dots, v_d} = \sqrt{d!} \Pi_{\wedge^d} \widetilde{H}_{L, d} \ket{v_1, v_2, \dots, v_d} \\
    &= \sqrt{d!} \Pi_{\wedge^d} \sum_{j=1}^d \ket{v_1, \dots, H_L v_j, \dots, v_d} = \sum_{j=1}^d \ket{v_1 \wedge \cdots \wedge H_L v_j \wedge \cdots \wedge v_d}.
    \end{aligned}
    \end{equation}
    Then for any occupation bases $\bfk, \bfk' \in \Omega_{L, d}$, recall the action of $H_L$ [cf. \eref{eqn:hopping_Hamiltonian}], with the convention that $\ket{-1} = \ket{L} = 0$ [cf. the optimal $\U(1)$ learning derivation], the inner product reads
    \begin{equation} \label{eqn:basis_inner_product}
    \begin{aligned}
    \bra{\bfk'}_{\wedge} H_{L, d}^{\wedge} \ket{\bfk}_{\wedge} &= \bra{\bfk'}_{\wedge} \sum_{j=1}^{d} \ket{k_1 \wedge \cdots \wedge H_L k_j \wedge \cdots \wedge k_d} \\
    &= \bra{\bfk'}_{\wedge} \left(  2d \ket{\bfk}_{\wedge} - \sum_{j=1}^d \ket{k_1 \wedge \cdots \wedge (k_j - 1) \wedge \cdots \wedge k_d} - \sum_{j=1}^d \ket{k_1 \wedge \cdots \wedge (k_j + 1) \wedge \cdots \wedge k_d} \right) \\
    &= 2d \delta_{\bfk, \bfk'} - \sum_{j=1}^d \mathbf{1}_{\Omega_{L, d}}\left( \bfk + \mathbf{e}_j \right) \cdot \delta_{\bfk + \mathbf{e}_j, \bfk'} - \sum_{j=1}^d \mathbf{1}_{\Omega_{L, d}}\left( \bfk - \mathbf{e}_j \right) \cdot \delta_{\bfk - \mathbf{e}_j, \bfk'},
    \end{aligned}
    \end{equation}
    where by the indicator functions we ensure that $\bfk \pm \mathbf{e}_j$ keeps inside the legitimate support $\Omega_{L, d}$, and set them to $0$ via the boundary condition. Meanwhile, by extending the determinant formula for $\psi_{\bfk}$ to arbitrary integer tuples, we have the following pointwise identity
    $$
    \begin{aligned}
   &\sum_{j=1}^d e^{\pm \I \theta_j} \cdot \psi_{\bfk}(\bm{\theta}) = \frac{1}{\sqrt{d!}} \sum_{\ell=1}^d e^{\pm \I \theta_\ell} \det\left[ e^{\I k_j \theta_i} \right]_{1 \leq i, j \leq d} = \frac{1}{\sqrt{d!}} \sum_{\ell=1}^d \det\left[ e^{\I (k_j \pm \delta_{j, \ell}) \theta_i} \right]_{1 \leq i, j \leq d} =  \sum_{\ell=1}^d \psi_{\bfk \pm \mathbf{e}_\ell}(\bm{\theta}).
    \end{aligned}
    $$
    Here we have used the fact that a repeated frequency makes the determinant vanish identically [cf. Fact \ref{fact:alternating_property}], and a frequency dropping outside $\{0, \dots, L-1\}$ gives a nonzero function while being orthogonal to every basis $\psi_{\bfk'}$ with $\bfk' \in \Omega_{L,d}$ in $\mathbb{L}^2(\T^d, \dd \nu^{\otimes d})$. Such terms would therefore vanish in the formulation of the inner products below. Therefore, if we plug in the expression in \eref{eqn:additive_holevo_cost}, by the orthonormality [cf. Fact \ref{fact:wavefuntion_orthonormality}] and the basis expectation result in \eref{eqn:basis_inner_product}, the expected cost can be reformulated as
    \begin{align*}
        \E \left[ C_{H}(\hat \bmt) \right] &=  \int_{\T^d} C_{H}(\hat \bmt) \left| \sum_{\bfk \in \Omega_{L, d} } a_{\bfk} \psi_{\bfk}(\hat \bmt) \right|^2 \dd \nu^{\otimes d}(\hat \bmt) \\
        &=   \int_{\T^d}\left( 2d - \sum_{j=1}^d \left( e^{\I \hat{\theta}_j} + e^{-\I \hat{\theta}_j} \right) \right) \sum_{\bfk, \bfk' \in \Omega_{L, d} } \overline{a_{\bfk'}} a_{\bfk} \overline{\psi_{\bfk'}(\hat{\bm{\theta}})} \psi_{\bfk}(\hat{\bm{\theta}}) \dd \nu^{\otimes d}(\hat{\bm{\theta}}) \\
        &= \sum_{\bfk, \bfk' \in \Omega_{L, d} } \overline{a_{\bfk'}} a_{\bfk}\left(  2d \delta_{\bfk, \bfk' } -   \sum_{j=1}^d \int_{\T_d} \left(\overline{\psi_{\bfk'}(\hat \bmt)} \psi_{\bfk + \mathbf{e}_j}(\hat \bmt) + \overline{\psi_{\bfk'}(\hat \bmt)} \psi_{\bfk - \mathbf{e}_j}(\hat \bmt) \right) \dd \nu^{\otimes d}(\hat \bmt)  \right) \\
        &= \sum_{\bfk, \bfk' \in \Omega_{L, d}} \overline{a_{\bfk'}} a_{\bfk} \left(2d \delta_{\bfk, \bfk'} - \sum_{j=1}^d \mathbf{1}_{\Omega_{L, d}}\left( \bfk + \mathbf{e}_j \right) \cdot \delta_{\bfk + \mathbf{e}_j, \bfk'} - \sum_{j=1}^d \mathbf{1}_{\Omega_{L, d}}\left( \bfk - \mathbf{e}_j \right) \cdot \delta_{\bfk - \mathbf{e}_j, \bfk'} \right) \\
        &= \sum_{\bfk, \bfk' \in \Omega_{L, d}} \overline{a_{\bfk'}} a_{\bfk} \bra{\bfk'}_{\wedge} H_{L, d}^{\wedge} \ket{\bfk}_{\wedge} = \braket{ \Psi_{L, d} | H_{L, d}^{\wedge} |  \Psi_{L, d}}.
    \end{align*}
    This completes the proof.
\end{proof}

The elegant form obtained via \lref{lemma:expected_additive_cost_as_an_inner_product} allows us to minimize the additive Holevo cost simply by finding the ground state energy of $H_{L, d}^\wedge$. As the defining local Hamiltonian $\widetilde{H}_{L, d}$ consists of commuting terms, if we take the full set of eigenvectors $\{\mathbf{r}(j)\}_{j=1}^{L}$ with entries determined in \eref{eqn:eigenvectors_of_hopping_hamiltonian}, we have $H_L \mathbf{r}(j) = \omega_j \mathbf{r}(j)$ by \eref{eqn:action_rule_of_total_ham}, then for a strictly increasing index set $j_1 < j_2 < \cdots < j_d \in \{1, 2, \dots, L\}$, since
$$
\begin{aligned}
    H_{L, d}^{\wedge} \ket{\mathbf{r}(j_1) \wedge \cdots \wedge \mathbf{r}(j_d)} = \sum_{\ell=1}^d \ket{\mathbf{r}(j_1) \wedge \cdots \wedge H_L\mathbf{r}(j_\ell) \wedge \cdots \wedge \mathbf{r}(j_d)} = \left( \sum_{\ell = 1}^d \omega_{j_{\ell}} \right) \ket{\mathbf{r}(j_1)  \wedge \cdots \wedge \mathbf{r}(j_d)},
\end{aligned}
$$
an eigenvalue takes the form $\sum_{\ell}^d \omega_{j_{\ell}}$. Therefore, for the sake of the minimal eigenvalue, as for $d \leq L$, we have $\omega_j < \omega_{j+1}$ always for $j \leq L - 1$ [cf. \eref{eqn:eigenvectors_of_hopping_hamiltonian}], taking $j_{\ell} = \ell$ immediately yield the minimal expected cost. Correspondingly, the optimal probe state should take the form
\begin{equation}
\label{eqn:optimal_probe_state_for_additive_holevo_cost}
\begin{aligned}
\ket{\Psi_{L, d}} = \ket{\mathbf{r}(1) \wedge \cdots \wedge \mathbf{r}(d) } \implies a_{\bfk} &= \bra{\bfk}_{\wedge} \ket{\Psi_{L, d}} = \braket{k_1 \wedge \cdots \wedge k_d | \mathbf{r}(1)  \wedge \cdots \wedge \mathbf{r}(d)} \\
&= \det\left[ r_{k_j}(\ell) \right]_{1 \leq j, \ell \leq d} = \det\left[ \sqrt{\frac{2}{L+1}} \sin\left( \frac{(k_j + 1) \ell}{L+1} \pi \right) \right]_{1 \leq j, \ell \leq d}.
\end{aligned}
\end{equation}
We now show a standard fact about the optimal distribution $a_{\bfk}$.

\begin{fact} \label{fact:sign_of_optimal_distribution}
    For any $\bfk \in \Omega_{L, d}$, $|a_{\bfk}| = (-1)^{\binom{d}{2}} a_{\bfk}$.
\end{fact}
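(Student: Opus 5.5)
The plan is to show that $a_{\bfk}$ is never zero and that its sign is exactly $(-1)^{\binom{d}{2}}$. I would do this by turning the sine determinant in \eref{eqn:optimal_probe_state_for_additive_holevo_cost} into a Vandermonde determinant using the Chebyshev polynomials of Definition \ref{def:chebyshev_poly}.

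First, set $y_j := \frac{(k_j+1)\pi}{L+1}$ for $j \in [d]$. Since $0 \le k_1 < \cdots < k_d \le L-1$, we have $0 < y_1 < y_2 < \cdots < y_d < \pi$. The trigonometric representation in Definition \ref{def:chebyshev_poly} gives
$$
\sin(\ell y_j) = \sin(y_j)\, T_{\ell-1}(\cos y_j) \quad \text{for } \ell \in [d].
$$
Pull the positive scalar $\sqrt{2/(L+1)}$ out of every entry, and pull the factor $\sin y_j > 0$ out of row $j$. This leaves
$$
a_{\bfk} = \left(\tfrac{2}{L+1}\right)^{d/2} \prod_{j=1}^d \sin y_j \cdot \det\bigl[T_{\ell-1}(x_j)\bigr]_{1\le j,\ell\le d}, \qquad x_j := \cos y_j .
$$

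Second, I would evaluate the remaining determinant. From the recurrence $T_{n+1}=2xT_n-T_{n-1}$ with $T_0=1$, $T_1=2x$, each $T_{\ell-1}$ is a polynomial of degree $\ell-1$ with leading coefficient $2^{\ell-1}$. Column $\ell$ can therefore be reduced to $2^{\ell-1}x_j^{\ell-1}$ by subtracting linear combinations of columns $1,\dots,\ell-1$, which does not change the determinant. Hence
$$
\det\bigl[T_{\ell-1}(x_j)\bigr] = 2^{\binom{d}{2}} \det\bigl[x_j^{\ell-1}\bigr] = 2^{\binom{d}{2}} \prod_{1\le i<j\le d}(x_j-x_i).
$$

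Finally, cosine is strictly decreasing on $(0,\pi)$, so $x_j < x_i$ whenever $i<j$. Each of the $\binom{d}{2}$ Vandermonde factors is then strictly negative, while every other factor is strictly positive. Therefore $\operatorname{sgn}(a_{\bfk}) = (-1)^{\binom{d}{2}}$ with $a_{\bfk}\neq 0$, which is equivalent to $|a_{\bfk}| = (-1)^{\binom{d}{2}}a_{\bfk}$.

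The only steps that need care are the column reduction, including the leading-coefficient bookkeeping, and the orientation of the Vandermonde product. Neither is a genuine obstacle.
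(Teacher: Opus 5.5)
Your proof is correct and follows the same route as the paper. Both factor $a_{\bfk} = \left(\tfrac{2}{L+1}\right)^{d/2} 2^{\binom{d}{2}} \prod_{j} \sin y_j \prod_{i<j} (\cos y_j - \cos y_i)$ and read off the sign $(-1)^{\binom{d}{2}}$ from the negative Vandermonde factors; the only difference is that the paper cites this trigonometric Vandermonde identity from Proskuryakov, whereas you derive it from $\sin(\ell y) = \sin y \, T_{\ell-1}(\cos y)$ and a column reduction, which makes the argument self-contained.
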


\begin{proof}[Proof of Fact \ref{fact:sign_of_optimal_distribution}]
    It suffices to track the sign of $a_{\bfk}$. Note that if we expand the expression of $a_{\bfk}$ using the trigonometric Vandermonde determinant identity \cite[Problem 350]{proskuryakov1974problems}, we have
    $$
    a_{\bfk} = \left( \frac{2}{L+1} \right)^{d/2} 2^{\binom{d}{2}} \prod_{i=1}^d \sin x_{i} \prod_{1 \leq j < \ell \leq d} \left( \cos x_{\ell} - \cos x_{j} \right): \quad x_{j} = \frac{k_{j} + 1}{L + 1} \pi.
    $$
    Since the index $k_j$ monotonically increases, we have $0 < x_j < x_{\ell} < \pi$ for $j < \ell$ and thus $\cos x_j > \cos x_{\ell}$, so each of the $\binom{d}{2}$ factors in $\prod_{1 \leq j < \ell \leq d}\left( \cos x_{\ell} - \cos x_{j} \right)$ is negative, thus contributing the sign $(-1)^{\binom{d}{2}}$.
\end{proof}

\paragraph{Fitting into the Bisio-Chiribella-D’Ariano-Facchini-Perinotti learning framework.} Note that the optimal probe state in the previous paragraph is dedicated to $\wedge^d \H_L$. To instantiate the learning strategy in our targeted framework, we must translate this auxiliary fermionic state back into the fixed-query primitive of the canonical learning framework. In particular, the occupation labels $\mathbf{k}\in\Omega_{L,d}$ are first partitioned according to the induced query number $n(\mathbf{k})=|\lambda(\mathbf{k})| = |\bfk| - \binom{d}{2}$, after which the squared amplitudes of the fermionic ground state determine the conditional representation weights within each fixed-$n$ sector.

Recall Definition \ref{def:canonical_transformation} and Corollary~\ref{cor:decomp_ext_space_with_box_number}, for a fixed choice of $\Omega_{L, d} = \bigcup_{n=0}^{Q} \Omega_{L, d}^{(n)}$ where $Q = d(L-d)$, we group the optimal probability distribution $\{|a_{\bfk}|^2\}_{\bfk \in \Omega_{L,d}}$ according to the number of boxes $n = n(\bfk)$ in the Young diagram associated with $\bfk$ [see \eref{eqn:optimal_probe_state_for_additive_holevo_cost}] to induce a probability distribution for $n$:
$$
\forall n \in \{0, 1, \dots, Q\}, \quad \mathsf{p}(n) := \sum_{\bfk\in \Omega_{L, d}^{(n)}  } |a_{\bfk}|^2.
$$
Then for a fixed $n$, we establish the probability distribution over Young diagrams and set up the probe state
\begin{equation}
\label{eqn:conditioned_working_probe_state}
\forall n: \mathsf{p}(n) > 0,~q_{\lambda(\bfk)}^{(n)} = \frac{|a_{\bfk}|^2}{\mathsf{p}(n)}, \quad \forall \bfk \in \Omega_{L, d}^{(n)} ~ \implies ~ \ket{\Psi_{n, \bfq^{(n)}}} = \bigoplus_{\bfk \in \Omega_{L, d}^{(n)} }  \frac{|a_{\bfk}|}{\sqrt{d_{\lambda(\bfk)}\mathsf{p}(n) }} \Ket{\openone_{\mathcal{W}_{\lambda(\bfk)}^d}} \otimes \ket{\xi_{\lambda(\bfk)}}.
\end{equation}
Note that including potentially zero-probability sectors leaves the probability formulas below unchanged, so for clarity we take $n$ to be supported on $\{0, 1, \dots, Q\}$. Our protocol then works as follows: we first sample $n \sim \mathsf{p}(n)$, and conditioned on $n$, we make $n$ parallel black-box queries to the unknown unitary $U$ and run the canonical learning framework to obtain the final estimate $\widehat{U}$.
Equivalently, we can pad extra registers and write the probe state and the POVM $\mathscr{M} = \{ \Psi_{\widehat{U}}  \dd \widehat{U} \}_{\widehat{U} \in \U(d)}$ from the global point of view
\begin{equation}
\label{eqn:probe_state_POVM_for_parallel_tomo}
\Psi_Q = \sum_{n=0}^{Q} \mathsf{p}(n) \ket{n} \bra{n} \otimes \ket{\Psi_{n, \bfq^{(n)}}} \bra{\Psi_{n, \bfq^{(n)}}} \otimes \ket{0^{Q-n}} \bra{0^{Q -n}}, ~ \Psi_{\widehat{U}} = \sum_{n=0}^{Q} \ket{n} \bra{n} \otimes \ket{\Psi_{n, \widehat{U}}} \bra{\Psi_{n, \widehat{U}}}  \otimes \openone^{\otimes (Q-n)}.
\end{equation}
Here since
$\Tr[ \mathcal{U}^{\otimes (Q-n)} (\ket{0^{Q-n}} \bra{0^{Q-n}}) ] = 1$, the extra queries applied to the idle registers do not affect the outcome distribution.
Note that by this design, we have discarded the coherence between sectors with different numbers of total queries. Fortunately, as we will show in our subsequent analysis, for the purpose of minimizing the diamond distance under covariant protocol\footnote{
Note that this is already a weaker requirement than ensuring a small additive Holevo cost: The diamond distance concerns the pairwise arc distance between eigenphases and is insensitive to a joint shift; see Corollary \ref{corollary:success_tomo_event}.
}, it suffices to exploit such classical correlation.

\section{Proof of \tref{thm:query_optimal_parallel_tomo}}
\label{sec:proof_of_main_theorem}

To prove the main theorem, we first establish a general bound on the error probability using properties of the diamond distance and of covariant estimation. We then instantiate this bound with the construction of
Section~\ref{sec:contructing_learning_algorithm} to obtain the claimed query
complexity.

\subsection{Reformulating the success probability}
While the diamond distance is determined entirely by the relative geometry of the eigenphases of $W$, its explicit analytical form does not lend itself to direct estimation. We therefore need the following lemma to reformulate the target success event in a more convenient form.

\begin{lemma}[{\cite[Section 1.1]{HKOT23}}, adapted]
\label{lemma:reformulating_diamond_norm}
Let $V\in\U(d)$ and suppose $\spec(V) = \{e^{\I\theta_j}\}_{j=1}^d$, counted with multiplicity. Define
$$
R_{\sp}(V):=\inf_{\beta\in\mathbb R}\max_{j\in[d]}\left|\Arg\!\left(e^{\I(\theta_j-\beta)}\right)\right|=\inf_{\beta\in\mathbb R}\max_{j\in[d]}\min_{k\in\mathbb Z}|\theta_j-\beta+2\pi k|.
$$
Then
$$
\left\|\mathcal V-\mathcal I\right\|_\diamond=2\sin\!\left(\min\left\{R_{\sp}(V),\frac\pi2\right\}\right).
$$
\end{lemma}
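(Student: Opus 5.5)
To prove the identity we compute $\|\mathcal V-\mathcal I\|_\diamond$ directly for a unitary channel versus the identity, and then recognise the resulting geometric quantity on the circle as $R_{\sp}(V)$.

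\emph{Step 1 (pure inputs suffice).} By convexity of the trace norm, the supremum in the diamond norm is attained on pure states $\ket{\psi}_{AR}$. For pure states $\|\ket{a}\bra{a}-\ket{b}\bra{b}\|_1=2\sqrt{1-|\braket{a|b}|^2}$. With $\ket{a}=(V\otimes\openone)\ket{\psi}$ and $\ket{b}=\ket{\psi}$ we get $\braket{b|a}=\Tr(V\rho_A)$, where $\rho_A$ is the reduced state of $\psi$. As $\psi$ ranges over purifications, $\rho_A$ ranges over all density matrices, so
$$
\|\mathcal V-\mathcal I\|_\diamond = 2\sqrt{1-m(V)^2},\qquad m(V):=\min_{\rho}|\Tr(V\rho)|.
$$
Diagonalising $V$ shows that $\{\Tr(V\rho)\}$ is exactly the convex hull $\conv(\{e^{\I\theta_j}\}_{j=1}^d)$. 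Hence $m(V)$ is the Euclidean distance from the origin to the polygon $P$ spanned by the eigenvalues.

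\emph{Step 2 (planar geometry).} We relate $\mathrm{dist}(0,P)$ to the smallest arc containing the spectrum.
\begin{itemize}
\item If $R_{\sp}(V)<\pi/2$, all points lie in an arc of half-angle $R:=R_{\sp}(V)$ centred at some $\beta$ attaining the infimum (attained by compactness). Projecting onto the direction $e^{\I\beta}$ gives $\Re(e^{-\I\beta}z)\ge\cos R$ for all $z\in P$, so $m\ge\cos R$.
\item Conversely, minimality of the arc forces its two endpoints $e^{\I(\beta\pm R)}$ to be eigenvalues. Otherwise one could shift $\beta$ or shrink $R$: the extreme points of a minimal arc are occupied at both ends. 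The chord between these endpoints lies in $P$, and its midpoint has modulus $\cos R$, so $m\le\cos R$.
\item If $R_{\sp}(V)\ge\pi/2$, no open half-plane through the origin contains all eigenvalues. By the separating hyperplane theorem, $0\in P$, so $m=0$.
\end{itemize}
This last case needs care. Half-angle exactly $\pi/2$ gives a closed half-plane with endpoints at antipodal eigenvalues, whose chord passes through $0$. Half-angle greater than $\pi/2$ means that every closed half-plane misses some eigenvalue, so the origin cannot be separated from $P$.

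\emph{Step 3 (combine).} Together these give $m=\cos(\min\{R_{\sp}(V),\pi/2\})$, and substituting into Step 1 yields $2\sqrt{1-\cos^2(\cdot)}=2\sin(\min\{R_{\sp}(V),\pi/2\})$, as claimed.

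The main obstacle is Step 2: the equivalence between "smallest enclosing arc" and "distance of the origin to the convex hull", including the endpoint-occupancy argument and the boundary case $R_{\sp}=\pi/2$. I would handle it by supporting-line duality, $\mathrm{dist}(0,P)=\max_{\beta}\min_j\Re\bigl(e^{\I(\theta_j-\beta)}\bigr)^+$, and note that $\min_j\cos(\theta_j-\beta)=\cos(\max_j|\Arg(e^{\I(\theta_j-\beta)})|)$. Since $\cos$ is decreasing on $[0,\pi]$, this gives $m=\max(\cos R_{\sp},0)$ directly and avoids the explicit endpoint argument.
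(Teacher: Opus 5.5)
Your proposal is correct and follows essentially the same route as the paper: express $\|\mathcal V-\mathcal I\|_\diamond$ as $2\sqrt{1-\kappa(V)^2}$ with $\kappa(V)=\min_{z\in\conv(\spec(V))}|z|$, then show $\kappa(V)=\max\{0,\cos R_{\sp}(V)\}$ using the projection lower bound, the occupied-endpoints midpoint upper bound, and a separation argument when the covering arc is too wide. The differences are cosmetic: you re-derive the formula $2\sqrt{1-\kappa(V)^2}$ via the pure-state reduction where the paper cites Proposition 1.6 of Haah--Kothari--O'Donnell--Tang, and you treat the boundary case $R_{\sp}(V)=\pi/2$ in the separation branch rather than in the first branch.
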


\begin{proof}[Proof of \lref{lemma:reformulating_diamond_norm}]
By \cite[Proposition 1.6]{HKOT23},
$$
\left\|\mathcal V-\mathcal I\right\|_\diamond= 2\sqrt{1-\kappa(V)^2}, \quad \kappa(V):=\min_{z\in\conv(\spec(V))}|z|.
$$
It remains to show that $\kappa(V)=\max\{0,\cos R_{\sp}(V)\}$. The infimum in the definition of $R_{\sp}(V)$ is attained. First, we suppose that $R_{\sp}(V)\leq\pi/2$. After a common rotation, we choose extreme points representatives $\phi_j\in[-R_{\sp}(V),R_{\sp}(V)]$ of the eigenphases. For every $z=\sum_jp_je^{\I\phi_j}$ in their convex hull, we have
$$
|z|\geq\Re z=\sum_jp_j\cos\phi_j\geq\cos R_{\sp}(V).
$$
By minimality of the covering arc, both endpoints are spectral points when $R_{\sp}(V)>0$. The case $R_{\sp}(V)=0$ is immediate. Hence, the midpoint of the two endpoint eigenvalues must belong to the convex hull and has modulus $\cos R_{\sp}(V)$. Thus we have $\kappa(V)=\cos R_{\sp}(V)$ for this case.

Now suppose that $R_{\sp}(V)>\pi/2$. If $0\notin\conv(\spec(V))$, conventional separation theorem \cite{kakutani1937beweis} in the real plane gives $\beta\in\mathbb R$ and $c>0$ such that $\Re(e^{-\I\beta}z)\geq c$ for every $z\in\conv(\spec(V))$. In particular, every eigenphase lies in the open semicircle centered at $\beta$, so $|\Arg(e^{\I(\theta_j-\beta)})|<\pi/2$ for all $j$. This contradicts $R_{\sp}(V)>\pi/2$. Therefore, $0$ belongs to the convex hull and $\kappa(V)=0$, completing the proof.
\end{proof}

We can therefore reformulate the target success event:

\begin{corollary} \label{corollary:success_tomo_event}
    For $0<\eps<2$, Lemma \ref{lemma:reformulating_diamond_norm} implies that the success probability can be reformulated as
    $$
    \begin{aligned}
    \Pr\left[ \left\| \mathcal{W} - \mathcal{I} \right\|_{\diamond} \leq \eps \right] &= \Pr\left[  R_{\sp}(W) \leq \arcsin \frac{\eps}{2} \right] =: \Pr\left[ W \in \mathsf{E}_{\eps} \right].
    \end{aligned}
    $$
For $\bm\theta\in\mathbb T^d$, with slight notational overload\footnote{
We borrow this terminology from computer programming to allow the same function to accept different input types.
}, we write $R_{\sp}(\bm\theta):=R_{\sp}(\mathrm{diag}(e^{\I\theta_1},\ldots,e^{\I\theta_d}))$, and the event $W \in \mathsf{E}_{\eps}$ is equivalent to $\bmt(W) \in \widetilde{\mathsf{E}}_{\eps}$, where $\widetilde{\mathsf{E}}_{\eps} := \left\{ \bm{\theta} \in \T^d:~R_{\sp}(\bmt) \leq \arcsin \frac{\eps}{2}  \right\}$. Note that the spectral covering radius $R_{\sp}$ depends only on the unordered eigenphases and is invariant under a common rotation. Thus, for every permutation $\sigma\in\mathfrak S_d$ and every $t\in\mathbb R$, with $\sigma(\bm\theta):=(\theta_{\sigma(1)},\ldots,\theta_{\sigma(d)})$,
$$
R_{\sp}(\sigma(\bm\theta))=R_{\sp}(\bm\theta)=R_{\sp}(\bm \theta+ (t^d)),
$$
where the phase additions are all conducted modulo $2\pi$.
\end{corollary}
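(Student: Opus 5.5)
The plan is to read everything off \lref{lemma:reformulating_diamond_norm}, applied to $V = W$, together with elementary monotonicity of the sine. There are three steps: (i) turn the diamond-distance inequality into an inequality on $R_{\sp}(W)$; (ii) pass from $W$ to its eigenphase vector $\bmt(W)$; (iii) check that $R_{\sp}$ is invariant under permutations and common rotations.

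\emph{Step (i).} By \lref{lemma:reformulating_diamond_norm}, $\|\mathcal W-\mathcal I\|_\diamond = g(R_{\sp}(W))$ with $g(r):=2\sin(\min\{r,\pi/2\})$ for $r\ge 0$. Since $0<\eps<2$, the threshold $r_\eps:=\arcsin(\eps/2)$ lies strictly inside $(0,\pi/2)$. The function $g$ is strictly increasing on $[0,\pi/2]$, and $g\equiv 2>\eps$ on $[\pi/2,\infty)$. I would split into two cases.
\begin{itemize}
\item If $R_{\sp}(W)\ge \pi/2$, then $g(R_{\sp}(W))=2>\eps$ and also $R_{\sp}(W)>r_\eps$, so both events fail.
\item If $R_{\sp}(W)<\pi/2$, then $g(R_{\sp}(W))\le\eps=g(r_\eps)$ holds if and only if $R_{\sp}(W)\le r_\eps$, by strict monotonicity of $\sin$ on $[0,\pi/2]$.
\end{itemize}
The two events therefore coincide pointwise in $W$, and their probabilities agree. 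The only point needing care is the strict inequality $\eps<2$: it guarantees $r_\eps<\pi/2$, so the saturated branch of $g$ never meets the threshold.

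\emph{Step (ii).} $R_{\sp}(V)$ is defined only through the multiset $\spec(V)$. Since $\mathrm{diag}(e^{\I\theta_1(W)},\dots,e^{\I\theta_d(W)})$ has the same spectrum as $W$, we get $R_{\sp}(W)=R_{\sp}(\bmt(W))$. Hence $W\in\mathsf E_\eps$ if and only if $\bmt(W)\in\widetilde{\mathsf E}_\eps$, whatever random labelling of the eigenphases is used.

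\emph{Step (iii).} Permutation invariance is immediate, because the inner $\max_{j\in[d]}$ is symmetric in the coordinates. For the common rotation $\bmt\mapsto\bmt+(t^d)$, substitute $\beta\mapsto\beta+t$ in the infimum over $\beta\in\R$; this is a bijection of $\R$, so the infimum is unchanged. Reduction of the phases modulo $2\pi$ is harmless, since the inner $\min_{k\in\Z}|\theta_j-\beta+2\pi k|$ already depends only on $\theta_j \bmod 2\pi$. Nothing here is a real obstacle; the only subtle point of the whole corollary is the boundary bookkeeping in step (i).
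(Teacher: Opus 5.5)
Your proposal is correct and follows the same route as the paper: the corollary is read off directly from Lemma~\ref{lemma:reformulating_diamond_norm} using the monotonicity of $2\sin(\min\{r,\pi/2\})$, where $\eps<2$ keeps $\arcsin(\eps/2)<\pi/2$. The permutation and common-rotation invariance of $R_{\sp}$ then follow from its definition via the unordered spectrum and the substitution $\beta\mapsto\beta+t$; your case split and your remark on reduction modulo $2\pi$ just make explicit what the paper leaves implicit.
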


To start with, we reformulate the probability of obtaining the rotated estimate $W$ when the probe state \eref{eqn:probe_state_POVM_for_parallel_tomo} is used via the Born rule:
\begin{equation}
\label{eqn:likelihood_probability_reformulation_updated}
\begin{aligned}
    \Pr\left[ W  \right] &  = \Tr\left[ \Psi_{\widehat{U}} \cdot \mathcal{U}^{\otimes Q}(\Psi_{Q}) \right] = \sum_{n=0}^Q \mathsf{p}(n) \left| \braket{\Psi_{n, \widehat{U}} | U^{\otimes n} | \Psi_{n, \bfq^{(n)}}} \right|^2 = \sum_{n=0}^{Q}\left|  \sum_{\bfk \in \Omega_{L, d}^{(n)}} |a_{\bfk}| \chi_{\lambda(\bfk)}(W) \right|^2.
\end{aligned}
\end{equation}
Then, by Corollary \ref{corollary:success_tomo_event}, the indicator function $\mathbf{1}_{\mathsf{E}_{\eps}}$ is a class function on $\U(d)$, since it depends on any $W$ only through the geometry of its eigenphases counted with multiplicity and is invariant under their permutations. Consequently, the success probability reads
$$
\begin{aligned}
    \Pr\left[ W \in \mathsf{E}_{\eps} \right] &= \sum_{n=0}^Q \int_{\U(d)}\left| \sum_{\bfk \in \Omega_{L, d}^{(n)}} |a_{\bfk}| \chi_{\lambda(\bfk)}(W) \right|^2 \mathbf{1}_{\mathsf{E}_{\eps}}\left( W\right)\dd W \\
    \overset{(\ast)}&{=} \sum_{n=0}^Q\int_{\U(d)}\left| \sum_{\bfk \in \Omega_{L, d}^{(n)}} |a_{\bfk}| \cdot \frac{\det\left[ e^{\I \theta_j (d + \lambda_{\ell}(\bfk) - \ell) } \right]_{1 \leq j, \ell \leq d} }{ \prod_{1 \leq j < \ell \leq d} \left( e^{\I \theta_j} - e^{\I \theta_{\ell}} \right) } \right|^2 \mathbf{1}_{\mathsf{E}_{\eps}}(W) \dd W \\
    \overset{(\ast \ast)}&{=}  \frac{1}{d!}\sum_{n = 0}^Q \int_{\T^d} \left| \sum_{\bfk \in \Omega_{L, d}^{(n)}}  |a_{\bfk}| \cdot \det\left[ e^{\I \theta_j k_{d - \ell + 1} } \right]_{1 \leq j, \ell \leq d}   \right|^2 \mathbf{1}_{\widetilde{\mathsf{E}}_{\eps}}(\bmt) \dd \nu^{\otimes d} (\bm{\theta}).
\end{aligned}
$$
Here $(\ast)$ follows from Lemma \ref{lemma:weyl_char_formula} and the relation $\lambda_{\ell}(\bfk) + d - \ell = k_{d - \ell + 1}$ in the occupation basis-partition transformation [cf. Definition \ref{def:canonical_transformation}], and $(\ast \ast)$ is due to Lemma \ref{lemma:weyl_integral_formula}. By Remark~\ref{remark:reformulating_Weyl_character} and Fact~\ref{fact:sign_of_optimal_distribution}, and by reversing the column order of the matrix determinant at the cost of an extra sign factor $(-1)^{\binom{d}{2}}$, we can further express the probability in terms of the wavefunctions as
\begin{equation}
\label{eqn:probability_with_phase_integral}
\begin{aligned}
\Pr\left[ W \in \mathsf{E}_{\eps} \right] = \Pr\left[ \bmt(W) \in \widetilde{\mathsf{E}}_{\eps} \right] &= \frac{1}{d!} \sum_{n=0}^Q \int_{\widetilde{\mathsf{E}}_{\eps}}  \left| \sum_{\bfk \in \Omega_{L, d}^{(n)}}  |a_{\bfk}| \cdot (-1)^{\binom{d}{2}} \det\left[ e^{\I \theta_j k_{\ell} } \right]_{1 \leq j, \ell \leq d}   \right|^2 \dd \nu^{\otimes d} (\bm{\theta}) \\ &=  \sum_{n=0}^Q \int_{\widetilde{\mathsf{E}}_{\eps}} \left| \sum_{\bfk \in \Omega_{L, d}^{(n)} } a_{\bfk} \psi_{\bfk}(\bm{\theta}) \right|^2 \dd \nu^{\otimes d} (\bm{\theta}).
\end{aligned}
\end{equation}
To proceed, we use the following lemma to reformulate the integral.

\begin{lemma}
\label{lemma:vanishing_of_cross_term}
    Define the phase-dependent identity $F_n(\bm{\theta}) =  \sum_{\bfk \in \Omega_{L, d}^{(n)} } a_{\bfk} \psi_{\bfk}(\bm{\theta}) $. Then for any $m, n \in \{0, \dots, Q\}$,
    $$
    \int_{\widetilde{\mathsf{E}}_{\eps}} \overline{F_{m}(\bm{\theta})} F_{n}(\bm{\theta}) \dd \nu^{\otimes d}(\bm{\theta}) = \delta_{m, n} \int_{\widetilde{\mathsf{E}}_{\eps}} \left|F_{n}(\bm{\theta})\right|^2 \dd \nu^{\otimes d}(\bm{\theta}).
    $$
\end{lemma}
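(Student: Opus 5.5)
The plan is to exploit the invariance of $\widetilde{\mathsf{E}}_{\eps}$ under a common shift of all eigenphases (Corollary~\ref{corollary:success_tomo_event}), combined with the fact that each $F_n$ is homogeneous of a fixed total frequency. First I verify the homogeneity. For $\bfk \in \Omega_{L,d}^{(n)}$ the Slater determinant $\psi_{\bfk}(\bmt) = \frac{1}{\sqrt{d!}}\det[e^{\I k_j\theta_i}]$ is a sum of monomials $\prod_j e^{\I k_{\sigma(j)}\theta_j}$. Each of these picks up the factor $e^{\I t|\bfk|}$ under $\bmt \mapsto \bmt + (t^d)$. Since $|\bfk| = n + \binom{d}{2}$ on $\Omega^{(n)}_{L,d}$ by Definition~\ref{def:canonical_transformation}, we obtain
$$
F_n(\bmt + (t^d)) = e^{\I t (n + \binom{d}{2})} F_n(\bmt), \qquad \overline{F_m(\bmt + (t^d))} F_n(\bmt + (t^d)) = e^{\I t (n-m)}\, \overline{F_m(\bmt)} F_n(\bmt).
$$

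Second, I denote $I_{m,n} := \int_{\widetilde{\mathsf{E}}_{\eps}} \overline{F_m} F_n \dd\nu^{\otimes d}$ and apply the measure-preserving change of variables $\bmt \mapsto \bmt + (t^d)$ on $\T^d$. The Haar measure $\nu^{\otimes d}$ is translation invariant, and the set $\widetilde{\mathsf{E}}_{\eps}$ is mapped onto itself because $R_{\sp}(\bmt + (t^d)) = R_{\sp}(\bmt)$. This gives $I_{m,n} = e^{\I t(n-m)} I_{m,n}$ for every $t \in \R$. When $m \neq n$, the integer $n - m$ is nonzero, so I can pick $t$ with $e^{\I t (n-m)} \neq 1$ (for instance $t = \pi/(n-m)$), and this forces $I_{m,n} = 0$. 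The diagonal case $m = n$ is trivially the claimed identity. One could equivalently average over $t \in \T$ with respect to $\dd\nu(t)$ and use $\int e^{\I t(n-m)}\dd\nu(t) = \delta_{m,n}$.

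The remaining steps are routine. Integrability holds because each $F_n$ is a trigonometric polynomial, hence bounded. Measurability of $\widetilde{\mathsf{E}}_{\eps}$ follows because $R_{\sp}$ is continuous on $\T^d$, being an infimum over a compact parameter set of continuous functions. The only point needing care is justifying that the shift maps $\widetilde{\mathsf{E}}_{\eps}$ exactly onto itself modulo $2\pi$, and that is exactly the invariance already recorded in Corollary~\ref{corollary:success_tomo_event}. I therefore expect no real obstacle. The conceptual content is that the diamond-distance success event ignores the global phase, while the sectors with distinct box numbers $n$ carry distinct global $\U(1)$ charges.
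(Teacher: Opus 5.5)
Your proposal is correct and matches the paper's proof: the paper also derives $F_n(\bmt + (t^d)) = e^{\I t(n+\binom{d}{2})}F_n(\bmt)$, then uses the joint-shift invariance of $\widetilde{\mathsf{E}}_{\eps}$ and of $\nu^{\otimes d}$ to get $I_{m,n} = e^{\I t(n-m)} I_{m,n}$, and concludes by choosing $t = \pi/(n-m)$. Your remarks on integrability and on measurability via continuity of $R_{\sp}$ are sound and supply details the paper leaves implicit.
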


\begin{proof}[Proof of \lref{lemma:vanishing_of_cross_term}]
    As remarked in Corollary \ref{corollary:success_tomo_event}, the set $\widetilde{\mathsf{E}}_{\eps}$ is closed under any joint shift $\bm{\theta} \mapsto \bmt + (t^d)$, and
    $$
    \begin{aligned}
    F_n(\bmt + (t^d)) &= \sum_{ \bfk \in \Omega_{L,d}^{(n)} } a_{\bfk} \psi_{\bfk}(\bm{\theta} + (t^d)) = \sqrt{d!} \sum_{\bfk \in \Omega_{L, d}^{(n)} } a_{\bfk} \det\left[ e^{\I k_j (\theta_i + t)} \right]_{1 \leq i, j \leq d} \\
    &= \sqrt{d!} e^{\I t |\bfk|} \sum_{\bfk \in \Omega_{L, d}^{(n)} }a_{\bfk} \det\left[ e^{\I k_j \theta_i} \right]_{1 \leq i, j \leq d} = e^{\I t \left(n + \binom{d}{2} \right)}  F_n(\bmt).
    \end{aligned}
    $$
    Therefore, by the closure property, the integral over the target region yields
    $$
    \begin{aligned}
        \int_{\widetilde{\mathsf{E}}_{\eps}} \overline{F_{m}(\bm{\theta})} F_{n}(\bm{\theta}) \dd \nu^{\otimes d}(\bm{\theta}) &= \int_{\widetilde{\mathsf{E}}_{\eps}} \overline{F_{m}(\bm{\theta} + (t^d))} F_{n}(\bm{\theta} + (t^d)) \dd \nu^{\otimes d}(\bm{\theta} + (t^d)) \\
        &= e^{-\I t \left( m + \binom{d}{2} \right) } e^{\I t \left( n + \binom{d}{2} \right) }\int_{\widetilde{\mathsf{E}}_{\eps}} \overline{F_{m}(\bm{\theta})} F_{n}(\bm{\theta}) \dd \nu^{\otimes d}(\bm{\theta}) \\
        &= e^{\I t(n - m)} \int_{\widetilde{\mathsf{E}}_{\eps}} \overline{F_{m}(\bm{\theta})} F_{n}(\bm{\theta}) \dd \nu^{\otimes d}(\bm{\theta}).
    \end{aligned}
    $$
    Since this holds for any $t \in \R$, when $m \neq n$, setting $t = \frac{\pi}{n - m}$ gives rise to $\int_{\widetilde{\mathsf{E}}_{\eps}} \overline{F_{m}(\bm{\theta})} F_{n}(\bm{\theta}) \dd \nu^{\otimes d}(\bm{\theta}) = - \int_{\widetilde{\mathsf{E}}_{\eps}} \overline{F_{m}(\bm{\theta})} F_{n}(\bm{\theta}) \dd \nu^{\otimes d}(\bm{\theta})$, and thus $\int_{\widetilde{\mathsf{E}}_{\eps}} \overline{F_{m}(\bm{\theta})} F_{n}(\bm{\theta}) \dd \nu^{\otimes d}(\bm{\theta}) = 0$. This completes the proof.
\end{proof}
This cross-term elimination that follows from the translation invariance of the success event would substantially simplify the sum on the right-hand side of \eref{eqn:probability_with_phase_integral}:
\begin{equation}
\label{eqn:probability_of_successful_event_reformulation_updated}
\begin{aligned}
   \Pr\left[ \bmt(W) \in \widetilde{\mathsf{E}}_{\eps} \right] &= \sum_{n=0}^{Q} \int_{\widetilde{\mathsf{E}}_{\eps}} \left| \sum_{\bfk \in \Omega_{L, d}^{(n)} } a_{\bfk} \psi_{\bfk}(\bm{\theta}) \right|^2 \dd \nu^{\otimes d} (\bm{\theta}) \\
   &= \sum_{n=0}^{Q} \int_{\widetilde{\mathsf{E}}_{\eps}} \left| F_{n}(\bm{\theta}) \right|^2 \dd \nu^{\otimes d} (\bm{\theta}) + \sum_{m \neq n} \int_{\widetilde{\mathsf{E}}_{\eps}} \overline{F_{m}(\bm{\theta})} F_{n}(\bm{\theta}) \dd \nu^{\otimes d} (\bm{\theta}) \\
    &= \int_{\widetilde{\mathsf{E}}_{\eps}} \left| \sum_{n=0}^{Q} \sum_{\bfk \in \Omega_{L, d}^{(n)}} a_{\bfk} \psi_{\bfk}(\bm{\theta}) \right|^2 \dd \nu^{\otimes d} (\bm{\theta}) = \int_{\widetilde{\mathsf{E}}_{\eps}} \left| \sum_{\bfk \in \Omega_{L, d} } a_{\bfk} \psi_{\bfk}(\bm{\theta}) \right|^2 \dd \nu^{\otimes d} (\bm{\theta}).
\end{aligned}
\end{equation}
Define $\mu(\bmt) = | \sum_{\bfk \in \Omega_{L, d} } a_{\bfk} \psi_{\bfk}(\bm{\theta}) |^2$, by the orthogonality guarantee in Fact \ref{fact:wavefuntion_orthonormality} and the normalization condition $\sum_{\bfk \in \Omega_{L, d} } |a_{\bfk}|^2  = \sum_{n=0}^{Q} \mathsf{p}(n) = 1$, $\mu(\bmt)$ is a legitimate probability density over $\T^d$ with respect to $ \dd \nu^{\otimes d}(\bmt)$. Therefore, with slight notational overload, we henceforth let $\bm{\theta}$ denote an auxiliary random variable distributed according to $\mu(\bm{\theta}) \dd\nu^{\otimes d}(\bm{\theta})$.
By the preceding calculation, the probability that the eigenphases of $W$ produced by our learning protocol fall into $\widetilde{\mathsf{E}}_{\eps}$ coincides with $\Pr[\bmt \in \widetilde{\mathsf{E}}_{\eps}]$. We can henceforth replace $\bmt(W)$ with the auxiliary variable $\bmt$ of explicit density in the success probability and further lower bound it by passing to a smaller but more tractable event, via the following observation:

\begin{observation} \label{obs:lower_bound_for_probability}
    Set $I_{\eps} := \{ \theta \in \T: |\Arg(e^{\I \theta})| \leq \arcsin(\eps / 2) \}$, identified with $[-\arcsin(\eps / 2), \arcsin(\eps / 2)] \subset (-\pi, \pi]$. Then for any $\bm{\theta} \in I_{\eps}^{\times d}$, choosing $\beta = 0$ in the definition of $R_{\sp}$ [cf. Lemma~\ref{lemma:reformulating_diamond_norm}] gives $R_{\sp}(\bm{\theta}) \leq \arcsin(\eps/2)$, and hence $I_{\eps}^{\times d} \subseteq \widetilde{\mathsf{E}}_{\eps}$ [cf. Corollary \ref{corollary:success_tomo_event}]. Therefore, for any random variable $\bm{\theta}$ on $\mathbb{T}^d$,
    \begin{equation} \label{eqn:prob_lower_bound_by_domain_subset}
    \Pr\left[ \bmt  \in \widetilde{\mathsf{E}}_{\eps} \right] \geq \Pr\left[ \bmt  \in I_{\eps}^{\times d} \right].
    \end{equation}
\end{observation}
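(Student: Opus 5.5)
The statement to prove is Observation~\ref{obs:lower_bound_for_probability}: the containment $I_{\eps}^{\times d} \subseteq \widetilde{\mathsf{E}}_{\eps}$, and then the probability inequality \eqref{eqn:prob_lower_bound_by_domain_subset}. I would handle the containment directly from the definition of $R_{\sp}$ in Lemma~\ref{lemma:reformulating_diamond_norm}, and obtain the inequality from monotonicity of probability measures.

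\textbf{Containment.} Fix $\bmt \in I_{\eps}^{\times d}$. By definition of $I_{\eps}$, each coordinate satisfies $|\Arg(e^{\I\theta_j})| \le \arcsin(\eps/2)$, where we use the representative interval $(-\pi,\pi]$ fixed in Section~\ref{sec:prelim}. Since $R_{\sp}$ is an infimum over $\beta \in \R$, it is bounded by its value at the particular choice $\beta = 0$. This gives
$$
R_{\sp}(\bmt) \le \max_{j\in[d]} \left|\Arg\left(e^{\I\theta_j}\right)\right| \le \arcsin(\eps/2).
$$
By the definition of $\widetilde{\mathsf{E}}_{\eps}$ in Corollary~\ref{corollary:success_tomo_event}, this means $\bmt \in \widetilde{\mathsf{E}}_{\eps}$.

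Two small points need checking along the way.
\begin{itemize}
\item Since $0<\eps<2$, we have $\arcsin(\eps/2) \in (0,\pi/2)$. Hence $I_{\eps}$ is a genuine arc that does not wrap around, and its identification with a subinterval of $(-\pi,\pi]$ is consistent.
\item $R_{\sp}(\bmt)$ is defined through the diagonal unitary $\mathrm{diag}(e^{\I\theta_1},\dots,e^{\I\theta_d})$. It therefore depends only on the $\theta_j$ modulo $2\pi$, so choosing representatives in $(-\pi,\pi]$ loses nothing.
\end{itemize}

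\textbf{Probability inequality.} Both $I_{\eps}^{\times d}$ and $\widetilde{\mathsf{E}}_{\eps}$ are Borel sets in $\T^d$. The first is a closed product of arcs. The second is closed because $R_{\sp}$ is continuous: it is an infimum over the compact set of $\beta \in \T$ of a jointly continuous function. Monotonicity of the law of $\bmt$ applied to the containment then yields \eqref{eqn:prob_lower_bound_by_domain_subset}. This holds for any random variable on $\T^d$, in particular for the one with density $\mu$.

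I do not expect a real obstacle here. The only subtle point is measurability, or equivalently continuity of $R_{\sp}$. If needed, this follows because $\max_j \min_k |\theta_j - \beta + 2\pi k|$ is $1$-Lipschitz in $(\bmt,\beta)$, and an infimum over $\beta$ of functions that are all $1$-Lipschitz in $\bmt$ is again $1$-Lipschitz in $\bmt$.
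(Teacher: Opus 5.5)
Your proposal is correct and follows the paper's argument: bound the infimum defining $R_{\sp}$ by its value at $\beta = 0$ to get $I_{\eps}^{\times d} \subseteq \widetilde{\mathsf{E}}_{\eps}$, then apply monotonicity of probability. Your extra check that $\widetilde{\mathsf{E}}_{\eps}$ is closed, via the Lipschitz continuity of $R_{\sp}$, settles a measurability point the paper leaves implicit, and it is sound.
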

With the above observation, let $N(\bmt) = \sum_{j=1}^d \mathbf{1}_{I_{\eps}^c}(\theta_j)$ count the total number of coordinates of $\bmt$ falling outside $I_{\eps}$. Note that the right-hand side of \eref{eqn:prob_lower_bound_by_domain_subset} evaluates precisely to the probability of getting $N(\bmt) = 0$. By an argument analogous to that in \cite[Remark 1.11]{HKOT23}, Markov's inequality implies that
    \begin{equation}
    \label{eqn:relaed_probability_lower_bound}
    \begin{aligned}
    \Pr\left[ \bmt  \in I_{\eps}^{\times d} \right] = \Pr\left[ N(\bmt ) = 0 \right] &= 1 - \Pr\left[ N(\bmt) \geq 1 \right] \\
    &\geq 1 - \E\left[N(\bmt) \right]  = 1 - \sum_{j=1}^d\Pr\left[ \theta_j \in I_{\eps}^c \right].
    \end{aligned}
    \end{equation}

To sum up, it suffices to bound the tail probability for each $\theta_j$ to fall outside the interval of interest by combining the inequalities presented in \eref{eqn:prob_lower_bound_by_domain_subset} and \eref{eqn:relaed_probability_lower_bound}:
\begin{equation}
\label{eqn:expectation_of_total_count}
\Pr\left[ \bmt  \in \widetilde{\mathsf{E}}_{\eps} \right]\ge 1- \sum_{j=1}^d\Pr\left[ \theta_j  \in I_{\eps}^c \right].
\end{equation}

\subsection{Bounding the residual terms}

\label{subsec:tail_probability_bounding}

We now bound the residual terms $\E[N(\bmt)] = \sum_{j=1}^d \Pr[\theta_j \in I_{\eps}^c]$ under the auxiliary density. For each event $\theta_j \in I_{\eps}^c$, it suffices to expand its probability in terms of the marginal of a single coordinate $\theta_j$:
$$
    \sum_{j=1}^d \Pr\left[\theta_j \in I_{\eps}^c\right] = \sum_{j=1}^d \int_{I_{\eps}^c} \dd\nu(\theta_j) \int_{\T^{d-1}} \left| \sum_{\bfk \in \Omega_{L,d}} a_{\bfk} \psi_{\bfk}(\bmt) \right|^2 \dd\nu^{\otimes (d-1)}(\bmt_{\bar{j}}),
$$
where $\bmt_{\bar{j}} = (\theta_k)_{k \in [d] \setminus \{j\}}$ collects the random variables in the remaining $d-1$ coordinates. Recall that both $a_{\bfk}$ and $\psi_{\bfk}(\bm{\theta})$ have determinant forms [cf. Definition \ref{remark:reformulating_Weyl_character} and \eref{eqn:optimal_probe_state_for_additive_holevo_cost}], we can obtain a compact reformulation:
$$
\begin{aligned}
    \sum_{\bfk \in \Omega_{L, d} } a_{\bfk} \psi_{\bfk}(\bm{\theta}) &= \frac{1}{\sqrt{d!}}\sum_{0 \leq k_1 < \cdots < k_d \leq L - 1} \det\left[ e^{\I k_{\ell} \theta_j}  \right]_{1 \leq j, \ell \leq d}\det\left[ \sqrt{\frac{2}{L+1}} \sin\left( \frac{(k_j + 1) \ell}{L + 1} \pi \right)  \right]_{1 \leq j , \ell \leq d}   \\
    \overset{(\ast)}&{=} \frac{1}{\sqrt{d!}}\det\left[ \left( e^{\I t \theta_j} \right)_{1 \leq j \leq d, 0 \leq t \leq L-1}\left(\sqrt{\frac{2}{L+1}} \sin\left( \frac{(t + 1) \ell}{L + 1} \pi \right) \right)_{0 \leq t \leq L-1, 1 \leq \ell \leq d} \right] \\
    &= \frac{1}{\sqrt{d!}}\det\left[ \sqrt{\frac{2}{L+1}} \sum_{t=0}^{L-1}  e^{\I t \theta_j}\sin \left( \frac{(t + 1) \ell}{L+1} \pi \right) \right]_{1 \leq j, \ell \leq d} \\
    &:= \frac{1}{\sqrt{d!}}\det\left[ \varphi_{\ell}(\theta_{j}) \right]_{1 \leq j, \ell \leq d}.
\end{aligned}
$$
Here $(\ast)$ follows from the Cauchy-Binet formula [cf. Lemma \ref{lemma:cauchy_binet}].
We therefore arrive at the following exact expression for the residual terms of interest
\begin{equation}
\label{eqn:success_probability_expression}
    \sum_{j=1}^d\Pr\left[ \theta_j \in I_{\eps}^c \right] = \frac{1}{d!}\sum_{j=1}^d \int_{I_{\eps}^c} \dd \nu(\theta_j) \int_{\T^{d-1}} \left| \det\left[ \varphi_{\ell}(\theta_{k}) \right]_{1 \leq k, \ell \leq d}  \right|^2 \dd \nu^{\otimes (d-1)} (\bm{\theta}_{\overline{j}}),
\end{equation}
Observing the strong symmetry among all coordinates of $\bmt$, we can simplify the above integral to
\begin{equation}
\label{eqn:expectation_of_total_count1}
\sum_{j=1}^d\Pr\left[ \theta_j \in I_{\eps}^c \right] = \frac{1}{(d-1)!}\int_{I_{\eps}^c} \dd \nu(\theta_1) \int_{\T^{d-1}} \left| \det\left[ \varphi_{\ell}(\theta_{k}) \right]_{1 \leq k, \ell \leq d}  \right|^2 \dd \nu^{\otimes (d-1)} (\bm{\theta}_{\overline{1} }),
\end{equation}
Explicitly evaluating the probability via the marginal of $\theta_1$ gives
\begin{equation}
\label{eqn:expanding_by_marginal}
\begin{aligned}
    &\sum_{j=1}^d\Pr\left[ \theta_j \in I_{\eps}^c \right]  \\
    \overset{(\ast)}&{=} \frac{1}{(d-1)!} \int_{I_{\eps}^c} \dd \nu(\theta_1) \int_{\T^{d-1}} \left| \sum_{\ell = 1}^d (-1)^{\ell + 1} \varphi_{\ell}(\theta_1) \det\left[ \varphi_{r}(\theta_t) \right]_{2 \leq t \leq d, r \in [d] \setminus \{\ell\} } \right|^2 \dd \nu^{\otimes (d-1)}(\bm{\theta}_{\overline{1}}) \\
    \overset{}&{=} \frac{1}{(d-1)!} \int_{I_{\eps}^c} \dd \nu(\theta_1) \sum_{\ell, \ell' = 1}^d (-1)^{\ell + \ell'} \overline{\varphi_{\ell}(\theta_1)} \varphi_{\ell'}(\theta_1) \int_{\T^{d-1}} \overline{\det\left[ \varphi_{r}(\theta_t) \right]_{\substack{2 \leq t \leq d \\ r \in [d] \setminus \{\ell\} }}} \det\left[ \varphi_{r'}(\theta_{t'}) \right]_{\substack{2 \leq t' \leq d \\r' \in [d] \setminus \{\ell'\} }} \dd \nu^{\otimes (d-1)}(\bm{\theta}_{\overline{1}}) \\
    \overset{(\ast \ast)}&{=} \frac{1}{(d-1)!} \int_{I_{\eps}^c} \dd \nu(\theta_1) \sum_{\ell, \ell' = 1}^d  \overline{\varphi_{\ell}(\theta_1)} \varphi_{\ell'}(\theta_1) \cdot (d-1)! \cdot \delta_{[d] \setminus \{\ell\}, [d] \setminus \{\ell'\} } = \int_{I_{\eps}^c} \sum_{\ell=1}^d \left| \varphi_{\ell} (\theta) \right|^2 \dd \nu(\theta),
\end{aligned}
\end{equation}
where $(\ast)$ follows from the Laplace minor expansion in Corollary \ref{corollary:laplace_expansion}, the proof of the equality marked $(\ast\ast)$ is deferred to the end of this section to avoid interrupting the argument, and we have omitted the subscript in the last expression for brevity.
Further expanding the integrand yields
$$
\begin{aligned}
    \sum_{\ell=1}^d \left| \varphi_{\ell} (\theta) \right|^2 &= \frac{2}{L+1}\sum_{\ell = 1}^d \left| \sum_{t=0}^{L-1} e^{\I t \theta} \sin \left( \frac{(t + 1) \ell}{L+1} \pi \right) \right|^2 \\
    &= \frac{2}{L+1} \sum_{\ell=1}^d \left| \frac{1}{2\I} \left(  e^{\I \frac{\ell \pi}{L+1} } \frac{1 - (e^{\I (\theta + \frac{\ell \pi}{L+1})})^L }{1 - e^{\I (\theta + \frac{\ell \pi}{L+1})}} - e^{-\I \frac{\ell \pi}{L+1} } \frac{1 - (e^{\I (\theta - \frac{\ell \pi}{L+1})})^L }{1 - e^{\I (\theta - \frac{\ell \pi}{L+1})}} \right) \right|^2
    \\
    &= \frac{2}{L+1} \sum_{\ell=1}^d \left| \frac{ \sin \left( \frac{\ell \pi}{L+1} \right) \left( 1 - (-1)^{\ell} e^{\I (L + 1) \theta} \right) }{1 - 2 e^{\I \theta } \cos\left(\frac{\ell \pi}{L + 1}\right) + e^{2 \I \theta } }  \right|^2
    \\
    &= \frac{1}{L+1} \sum_{\ell=1}^d \frac{ \sin^2 \left(\frac{\ell \pi}{L+1} \right) \left( 1 - (-1)^\ell \cos((L + 1)\theta) \right) }{\left( \cos \theta - \cos \frac{\ell \pi}{L+1} \right)^2}.
\end{aligned}
$$
And at points $\theta = \pm\frac{\ell\pi}{L+1}$ we interpret the quotient by continuity, while our choice of $L$ below keeps these points outside the tail region
$I_\varepsilon^c$. Indeed, for $\eps$ such that $0 < \eps < 2$, choosing $L = \lceil \frac{8 \pi d}{\eps} \rceil - 1$ suffices to guarantee that $L > d$ and $\frac{\eps}{2} \geq \frac{4 \pi d }{L + 1} $. Take the auxiliary random variable $\vartheta = \vartheta(\theta) := | \Arg(e^{\I \theta}) | \in [0, \pi]$ in a single period. As $\arcsin x \geq x$ for $x \in [0, 1]$, we can derive a lower bound for $\vartheta$ when $\theta \in I_{\eps}^c$: For every $\ell \in [d]$, it holds that
$$
\vartheta > \arcsin \frac{\eps}{2} \geq \frac{\eps}{2} \geq \frac{4\pi d}{L + 1} \geq \frac{4 \pi \ell}{L + 1}.
$$
Consequently, using the handy inequalities $\frac{x}{\pi} \leq \sin \frac{x}{2} \leq \frac{x}{2}$ for $x \in [0, \pi]$, and noting that $\frac{\ell \pi}{2(L+1)} \leq \frac{\vartheta}{8} < \frac{\vartheta}{4}$, the denominator satisfies
$$
\cos \frac{\ell \pi}{L+1} - \cos \vartheta = 2 \left( \sin^2 \frac{\vartheta}{2} - \sin^2 \frac{\ell \pi}{2(L+1)} \right) > 2\left( \frac{1}{\pi^2} - \frac{1}{16} \right) \vartheta^2.
$$
Meanwhile, the numerator can be simply bounded from above by
$$
\sin^2 \left(\frac{\ell \pi}{L+1} \right) \left( 1 - (-1)^\ell \cos((L + 1)\theta) \right) \leq 2 \sin^2 \left(\frac{\ell \pi}{L+1} \right) \leq \frac{2\ell^2 \pi^2}{(L + 1)^2}.
$$
And therefore
$$
\begin{aligned}
    \sum_{\ell=1}^d \left| \varphi_{\ell} (\theta) \right|^2 < \frac{1}{L+ 1} \sum_{\ell = 1}^d \frac{\frac{2\ell^2 \pi^2}{(L + 1)^2}}{\left(2\left( \frac{1}{\pi^2} - \frac{1}{16} \right) \vartheta^2 \right)^2} = \frac{\pi^2 \sum_{\ell = 1}^d \ell^2}{2 \left(\frac{1}{\pi^2} - \frac{1}{16}  \right)^{2} (L+1)^3 }  \frac{1}{\vartheta^4} = \underbrace{\frac{\pi^2  }{12\left(\frac{1}{\pi^2} - \frac{1}{16}  \right)^{2} }}_{=: C} \frac{d(d + 1)(2d + 1)}{(L+1)^3} \frac{1}{\vartheta^4}.
\end{aligned}
$$
 By the symmetry on the phase circle and noting that $\dd \nu(\theta) = \dd \vartheta / 2\pi$, it suffices to track one of the branches, so that \eref{eqn:expanding_by_marginal} is bounded by
$$
\begin{aligned}
    \sum_{j=1}^d\Pr\left[ \theta_j \in I_{\eps}^c \right]   = \int_{I_{\eps}^c} \sum_{\ell=1}^d \left| \varphi_{\ell} (\theta) \right|^2 \dd \nu(\theta) &< C \cdot \frac{d(d + 1)(2d + 1)}{(L+1)^3} \frac{1}{\pi} \int^{\pi}_{\arcsin(\eps/2)} \frac{\dd \vartheta}{\vartheta^4} \\
    &\leq C \cdot \frac{d(d + 1)(2d + 1)}{(L+1)^3} \frac{1}{\pi} \int^{\pi}_{\eps/2} \frac{\dd \vartheta}{\vartheta^4} \\
    &= C \cdot \frac{d(d + 1)(2d + 1)}{3\pi (L+1)^3} \left( \frac{8}{\eps^3} - \frac{1}{\pi^3} \right) < C \cdot \frac{8d(d + 1)(2d + 1)}{3\pi (L+1)^3 \eps^3}.
\end{aligned}
$$
Finally, combining the actual-to-auxiliary identity in \eref{eqn:probability_of_successful_event_reformulation_updated}, the reformulation in Observation \ref{obs:lower_bound_for_probability}, and the preceding inequality into \eref{eqn:expectation_of_total_count}, we finally get
$$
\begin{aligned}
\Pr\left[ \left\| \widehat{\mathcal{U}} - \mathcal{U} \right\|_{\diamond}\leq\eps \right] = \Pr\left[ \left\| \mathcal{W} - \mathcal{I} \right\|_{\diamond}\leq\eps \right] &= \Pr\left[ W \in \mathsf{E}_{\eps} \right] = \Pr\left[ \bmt \in \widetilde{\mathsf{E}}_{\eps} \right] \\
&>1 - C \cdot \frac{8d(d + 1)(2d + 1)}{3\pi (L+1)^3 \eps^3} \\
\overset{(\ast)}&{\geq} 1 - \frac{C}{192 \pi^4} \frac{(d+1)(2d+1)}{d^2} \\
&= 1 - \frac{\pi^2}{9(16 - \pi^2)^2} \frac{(d+1)(2d + 1)}{d^2} \geq 1 - \frac{2\pi^2}{3(16 - \pi^2)^2} > \frac{2}{3},
\end{aligned}
$$
where $(\ast)$ follows from $L + 1 \geq \frac{8\pi d}{\eps}$ and $d \geq 1$.
This establishes the required success probability for the $\eps$-closeness of the estimate in diamond distance, and the total queries to $U$ gives $Q = d(L - d)$ [cf. Corollary \ref{cor:decomp_ext_space_with_box_number}]. This completes the proof of \tref{thm:query_optimal_parallel_tomo}.

We close this section by supplying the derivation of equality $(\ast\ast)$ in
\eref{eqn:expanding_by_marginal} as remarked in the previous context. For $S = \{s_2 < \cdots < s_d\}$, let $\mathfrak{S}(S)$ be the set of orderings $\sigma = (\sigma_2, \dots, \sigma_d)$ of $S$ where $\sgn(\sigma) = \sgn(\pi)$ for the permutation $\pi$ of $\{2, \dots, d\}$ such that $\sigma_t = s_{\pi(t)}$. Expanding by definition gives
\begin{align*}
    &\int_{\T^{d-1}} \overline{\det\left[ \varphi_{r}(\theta_t) \right]_{\substack{2 \leq t \leq d \\ r \in [d] \setminus \{\ell\} }}} \det\left[ \varphi_{r'}(\theta_{t'}) \right]_{\substack{2 \leq t' \leq d \\r' \in [d] \setminus \{\ell'\} }} \dd \nu^{\otimes (d-1)}(\bm{\theta}_{\overline{1}}) \\
    &= \int_{\T^{d-1}} \sum_{\substack{ \sigma \in \Sym([d] \setminus \{\ell\} ) \\ \tau \in \Sym([d] \setminus \{\ell'\} )  }} \sgn(\sigma) \sgn(\tau) \prod_{t=2}^{d} \left[ \frac{2}{L+1} \sum_{k, k' = 0}^{L- 1} \sin\left( \frac{(k + 1) \sigma_t}{L+1} \pi \right) \sin \left( \frac{(k' + 1) \tau_t}{L+1} \pi \right) e^{\I (k' - k) \theta_t}  \right] \dd \nu^{\otimes (d-1)} (\bm{\theta}_{\overline{1}}) \\
    &= \sum_{\substack{ \sigma \in \Sym([d] \setminus \{\ell\} ) \\ \tau \in \Sym([d] \setminus \{\ell'\} )  }} \sgn(\sigma) \sgn(\tau) \prod_{t=2}^{d} \left[ \frac{2}{L+1} \sum_{k, k' = 0}^{L- 1} \sin\left( \frac{(k + 1) \sigma_t}{L+1} \pi \right) \sin \left( \frac{(k' + 1) \tau_t}{L+1} \pi \right)  \int_{\T} e^{\I (k' - k) \theta_t} \dd \nu(\theta_t) \right] \\
    &= \sum_{\substack{ \sigma \in \Sym([d] \setminus \{\ell\} ) \\ \tau \in \Sym([d] \setminus \{\ell'\} )  }} \sgn(\sigma) \sgn(\tau) \prod_{t=2}^{d} \left[ \frac{2}{L+1} \sum_{k = 0}^{L- 1} \sin\left( \frac{(k + 1) \sigma_t}{L+1} \pi \right) \sin \left( \frac{(k + 1) \tau_t}{L+1} \pi \right)  \right] \\
    &= \sum_{\substack{ \sigma \in \Sym([d] \setminus \{\ell\} ) \\ \tau \in \Sym([d] \setminus \{\ell'\} )  }} \sgn(\sigma) \sgn(\tau) \prod_{t=2}^d \delta_{\sigma_t, \tau_t} = \sum_{\substack{ \sigma \in \Sym([d] \setminus \{\ell\} ) \\ \tau \in \Sym([d] \setminus \{\ell'\} )  }} \sgn(\sigma) \sgn(\tau) \delta_{\sigma, \tau} = (d - 1)! \cdot \delta_{ [d] \setminus \{\ell\}, [d] \setminus \{\ell'\} }.
\end{align*}

\section*{Acknowledgments}
We are grateful to Zi-Shen Li for insightful discussions that inspired the application of our protocol to the tomography of fermionic linear optics. We also thank Noam Scully and Sisi Zhou for helpful discussions. The amplitude distribution of the probe state was inspired by Holevo's covariant phase estimation algorithm~\cite{Holevo2011}: we reformulated the likelihood function that arises from the covariant learning framework of Bisio et~al.~\cite{PhysRevA.81.032324} and prompted ChatGPT 5.6 with the idea of relating the additive Holevo cost minimization to a ground-state problem restricted to the antisymmetric subspace. We subsequently fitted the construction into Bisio et al.'s canonical framework, bounded the success probability from below, and checked the validity with assistance from ChatGPT. 

This work is supported by the National Natural Science Foundation of China via the Excellent Young Scientists Fund (Hong Kong and Macau) Project 12322516, the National Natural Science Foundation of China (NSFC)/Research Grants Council (RGC) Joint Research Scheme via Project N\_HKU7107/24, the Guangdong Provincial Quantum Science Strategic Initiative via Project GDZX2503001 and the Hong Kong Research Grants Council (RGC) through the General Research Fund (GRF) Grant 17305625.

\bibliographystyle{alphaurl}
\bibliography{opt_par_tomo_updated}

\newcommand{\etalchar}[1]{$^{#1}$}
\begin{thebibliography}{BBMT04b}

\bibitem[AAB{\etalchar{+}}19]{arute2019quantum}
Frank Arute, Kunal Arya, Ryan Babbush, et~al.
\newblock Quantum supremacy using a programmable superconducting processor.
\newblock {\em Nature}, 574(7779):505--510, 2019.
\newblock \href {https://doi.org/10.1038/s41586-019-1666-5} {\path{doi:10.1038/s41586-019-1666-5}}.

\bibitem[AJV01]{PhysRevA.64.050302}
A.~Ac{\'\i}n, E.~Jan{\'e}, and G.~Vidal.
\newblock Optimal estimation of quantum dynamics.
\newblock {\em Physical Review A}, 64(5):050302(R), 2001.
\newblock \href {https://arxiv.org/abs/quant-ph/0012015} {\path{arXiv:quant-ph/0012015}}, \href {https://doi.org/10.1103/PhysRevA.64.050302} {\path{doi:10.1103/PhysRevA.64.050302}}.

\bibitem[BBMT04a]{BaganBaigMunozTapia2004a}
E.~Bagan, M.~Baig, and R.~Mu{\~n}oz-Tapia.
\newblock Entanglement-assisted alignment of reference frames using a dense covariant coding.
\newblock {\em Physical Review A}, 69(5):050303(R), 2004.
\newblock \href {https://arxiv.org/abs/quant-ph/0303019} {\path{arXiv:quant-ph/0303019}}, \href {https://doi.org/10.1103/PhysRevA.69.050303} {\path{doi:10.1103/PhysRevA.69.050303}}.

\bibitem[BBMT04b]{BaganBaigMunozTapia2004b}
E.~Bagan, M.~Baig, and R.~Mu{\~n}oz-Tapia.
\newblock Quantum reverse engineering and reference-frame alignment without nonlocal correlations.
\newblock {\em Physical Review A}, 70(3):030301(R), 2004.
\newblock \href {https://arxiv.org/abs/quant-ph/0405082} {\path{arXiv:quant-ph/0405082}}, \href {https://doi.org/10.1103/PhysRevA.70.030301} {\path{doi:10.1103/PhysRevA.70.030301}}.

\bibitem[BCD{\etalchar{+}}10]{PhysRevA.81.032324}
Alessandro Bisio, Giulio Chiribella, Giacomo~Mauro D'Ariano, Stefano Facchini, and Paolo Perinotti.
\newblock Optimal quantum learning of a unitary transformation.
\newblock {\em Physical Review A}, 81(3):032324, 2010.
\newblock \href {https://arxiv.org/abs/0903.0543} {\path{arXiv:0903.0543}}, \href {https://doi.org/10.1103/PhysRevA.81.032324} {\path{doi:10.1103/PhysRevA.81.032324}}.

\bibitem[BCH07]{schur_transform_bacon}
Dave Bacon, Isaac~L. Chuang, and Aram~W. Harrow.
\newblock The quantum {Schur} and {Clebsch--Gordan} transforms: {I}. efficient qudit circuits.
\newblock In {\em Proceedings of the Eighteenth Annual {ACM--SIAM} Symposium on Discrete Algorithms}, pages 1235--1244, Philadelphia, PA, 2007. Society for Industrial and Applied Mathematics.
\newblock \href {https://arxiv.org/abs/quant-ph/0601001} {\path{arXiv:quant-ph/0601001}}.

\bibitem[BDL{\etalchar{+}}26]{braccia2026commutantfermionicgaussianunitaries}
Paolo Braccia, N.~L. Diaz, Martin Larocca, M.~Cerezo, and Diego Garc{\'i}a-Mart{\'i}n.
\newblock The commutant of fermionic {Gaussian} unitaries, 2026.
\newblock \href {https://arxiv.org/abs/2603.19210} {\path{arXiv:2603.19210}}.

\bibitem[BDM99]{Bu_ek_1999}
Vladim{\'i}r Bu{\v{z}}ek, Radoslav Derka, and Serge Massar.
\newblock Optimal quantum clocks.
\newblock {\em Physical Review Letters}, 82(10):2207--2210, 1999.
\newblock \href {https://arxiv.org/abs/quant-ph/9808042} {\path{arXiv:quant-ph/9808042}}, \href {https://doi.org/10.1103/PhysRevLett.82.2207} {\path{doi:10.1103/PhysRevLett.82.2207}}.

\bibitem[BEG{\etalchar{+}}24]{bluvstein2024logical}
Dolev Bluvstein, Simon~J. Evered, Alexandra~A. Geim, et~al.
\newblock Logical quantum processor based on reconfigurable atom arrays.
\newblock {\em Nature}, 626(7997):58--65, 2024.
\newblock \href {https://arxiv.org/abs/2312.03982} {\path{arXiv:2312.03982}}, \href {https://doi.org/10.1038/s41586-023-06927-3} {\path{doi:10.1038/s41586-023-06927-3}}.

\bibitem[BFG{\etalchar{+}}25]{burchardt2025highdimensionalquantumschurtransforms}
Adam Burchardt, Jiani Fei, Dmitry Grinko, Martin Larocca, Maris Ozols, Sydney Timmerman, and Vladyslav Visnevskyi.
\newblock High-dimensional quantum {Schur} transforms, 2025.
\newblock \href {https://arxiv.org/abs/2509.22640} {\path{arXiv:2509.22640}}.

\bibitem[Bha97]{Bhatia1997}
Rajendra Bhatia.
\newblock {\em Matrix Analysis}, volume 169 of {\em Graduate Texts in Mathematics}.
\newblock Springer, 1997.
\newblock \href {https://doi.org/10.1007/978-1-4612-0653-8} {\path{doi:10.1007/978-1-4612-0653-8}}.

\bibitem[BKD14]{BaldwinKalevDeutsch2014}
Charles~H. Baldwin, Amir Kalev, and Ivan~H. Deutsch.
\newblock Quantum process tomography of unitary and near-unitary maps.
\newblock {\em Physical Review A}, 90(1):012110, 2014.
\newblock \href {https://arxiv.org/abs/1404.2877} {\path{arXiv:1404.2877}}, \href {https://doi.org/10.1103/PhysRevA.90.012110} {\path{doi:10.1103/PhysRevA.90.012110}}.

\bibitem[BMQ21]{PhysRevLett.127.200504}
Jessica Bavaresco, Mio Murao, and Marco~T{\'u}lio Quintino.
\newblock Strict hierarchy between parallel, sequential, and indefinite-causal-order strategies for channel discrimination.
\newblock {\em Physical Review Letters}, 127(20):200504, 2021.
\newblock \href {https://arxiv.org/abs/2011.08300} {\path{arXiv:2011.08300}}, \href {https://doi.org/10.1103/PhysRevLett.127.200504} {\path{doi:10.1103/PhysRevLett.127.200504}}.

\bibitem[BMQ22]{Bavaresco2022}
Jessica Bavaresco, Mio Murao, and Marco~T{\'u}lio Quintino.
\newblock Unitary channel discrimination beyond group structures: Advantages of sequential and indefinite-causal-order strategies.
\newblock {\em Journal of Mathematical Physics}, 63(4):042203, 2022.
\newblock \href {https://arxiv.org/abs/2105.13369} {\path{arXiv:2105.13369}}, \href {https://doi.org/10.1063/5.0075919} {\path{doi:10.1063/5.0075919}}.

\bibitem[Bra05]{Bravyi_Lagrangian_2005}
Sergey Bravyi.
\newblock Lagrangian representation for fermionic linear optics.
\newblock {\em Quantum Information and Computation}, 5(3):216--238, 2005.
\newblock \href {https://arxiv.org/abs/quant-ph/0404180} {\path{arXiv:quant-ph/0404180}}, \href {https://doi.org/10.26421/QIC5.3-3} {\path{doi:10.26421/QIC5.3-3}}.

\bibitem[CDP08]{PhysRevLett.101.180501}
Giulio Chiribella, Giacomo~M. D'Ariano, and Paolo Perinotti.
\newblock Memory effects in quantum channel discrimination.
\newblock {\em Physical Review Letters}, 101(18):180501, 2008.
\newblock \href {https://arxiv.org/abs/0803.3237} {\path{arXiv:0803.3237}}, \href {https://doi.org/10.1103/PhysRevLett.101.180501} {\path{doi:10.1103/PhysRevLett.101.180501}}.

\bibitem[CDP09]{Giulio_comb_2009}
Giulio Chiribella, Giacomo~Mauro D'Ariano, and Paolo Perinotti.
\newblock Theoretical framework for quantum networks.
\newblock {\em Physical Review A}, 80(2):022339, 2009.
\newblock \href {https://arxiv.org/abs/0904.4483} {\path{arXiv:0904.4483}}, \href {https://doi.org/10.1103/PhysRevA.80.022339} {\path{doi:10.1103/PhysRevA.80.022339}}.

\bibitem[CDPS04]{chiribella2004efficient}
G.~Chiribella, G.~M. D'Ariano, P.~Perinotti, and M.~F. Sacchi.
\newblock Efficient use of quantum resources for the transmission of a reference frame.
\newblock {\em Physical Review Letters}, 93(18):180503, 2004.
\newblock \href {https://arxiv.org/abs/quant-ph/0405095} {\path{arXiv:quant-ph/0405095}}, \href {https://doi.org/10.1103/PhysRevLett.93.180503} {\path{doi:10.1103/PhysRevLett.93.180503}}.

\bibitem[CDS05]{PhysRevA.72.042338}
G.~Chiribella, G.~M. D'Ariano, and M.~F. Sacchi.
\newblock Optimal estimation of group transformations using entanglement.
\newblock {\em Physical Review A}, 72(4):042338, 2005.
\newblock \href {https://arxiv.org/abs/quant-ph/0506267} {\path{arXiv:quant-ph/0506267}}, \href {https://doi.org/10.1103/PhysRevA.72.042338} {\path{doi:10.1103/PhysRevA.72.042338}}.

\bibitem[CGO{\etalchar{+}}26]{chen_Girardi2026}
Kean Chen, Filippo Girardi, Aadil Oufkir, Nengkun Yu, and Zhicheng Zhang.
\newblock Quantum channel tomography: Optimal bounds and a {Heisenberg}-to-classical phase transition, 2026.
\newblock \href {https://arxiv.org/abs/2604.17369} {\path{arXiv:2604.17369}}.

\bibitem[CHL{\etalchar{+}}23]{10353129}
Sitan Chen, Brice Huang, Jerry Li, Allen Liu, and Mark Sellke.
\newblock When does adaptivity help for quantum state learning?
\newblock In {\em 2023 IEEE 64th Annual Symposium on Foundations of Computer Science ({FOCS})}, pages 391--404. IEEE, 2023.
\newblock \href {https://arxiv.org/abs/2206.05265} {\path{arXiv:2206.05265}}, \href {https://doi.org/10.1109/FOCS57990.2023.00029} {\path{doi:10.1109/FOCS57990.2023.00029}}.

\bibitem[CLM{\etalchar{+}}21]{Christandl2021}
Matthias Christandl, Felix Leditzky, Christian Majenz, Graeme Smith, Florian Speelman, and Michael Walter.
\newblock Asymptotic performance of port-based teleportation.
\newblock {\em Communications in Mathematical Physics}, 381(1):379--451, 2021.
\newblock \href {https://arxiv.org/abs/1809.10751} {\path{arXiv:1809.10751}}, \href {https://doi.org/10.1007/s00220-020-03884-0} {\path{doi:10.1007/s00220-020-03884-0}}.

\bibitem[CS24]{Cudby2024}
Josh Cudby and Sergii Strelchuk.
\newblock Learning {Gaussian} operations and the {Matchgate} hierarchy.
\newblock In {\em 2024 IEEE International Conference on Quantum Computing and Engineering ({QCE})}, pages 141--149. IEEE, 2024.
\newblock \href {https://arxiv.org/abs/2407.12649} {\path{arXiv:2407.12649}}, \href {https://doi.org/10.1109/QCE60285.2024.00026} {\path{doi:10.1109/QCE60285.2024.00026}}.

\bibitem[CZ26]{christensen2026learningfermioniclinearoptics}
Aria Christensen and Andrew Zhao.
\newblock Learning fermionic linear optics with {Heisenberg} scaling and physical operations, 2026.
\newblock \href {https://arxiv.org/abs/2602.05058} {\path{arXiv:2602.05058}}.

\bibitem[DFY07]{PhysRevLett.98.100503}
Runyao Duan, Yuan Feng, and Mingsheng Ying.
\newblock Entanglement is not necessary for perfect discrimination between unitary operations.
\newblock {\em Physical Review Letters}, 98(10):100503, 2007.
\newblock \href {https://arxiv.org/abs/quant-ph/0601150} {\path{arXiv:quant-ph/0601150}}, \href {https://doi.org/10.1103/PhysRevLett.98.100503} {\path{doi:10.1103/PhysRevLett.98.100503}}.

\bibitem[DG13]{Dereziński_Gérard_2013}
Jan Derezi{\'n}ski and Christian G{\'e}rard.
\newblock {\em Mathematics of Quantization and Quantum Fields}.
\newblock Cambridge Monographs on Mathematical Physics. Cambridge University Press, Cambridge, 2013.
\newblock \href {https://doi.org/10.1017/CBO9780511894541} {\path{doi:10.1017/CBO9780511894541}}.

\bibitem[GL25]{grewal2026efficientlearningstructuredquantum}
Sabee Grewal and Daniel Liang.
\newblock Efficient learning of structured quantum circuits via {Pauli} dimensionality and sparsity, 2025.
\newblock \href {https://arxiv.org/abs/2510.00168} {\path{arXiv:2510.00168}}.

\bibitem[GLM06]{giovannetti2006quantum}
Vittorio Giovannetti, Seth Lloyd, and Lorenzo Maccone.
\newblock Quantum metrology.
\newblock {\em Physical Review Letters}, 96(1):010401, 2006.
\newblock \href {https://arxiv.org/abs/quant-ph/0509179} {\path{arXiv:quant-ph/0509179}}, \href {https://doi.org/10.1103/PhysRevLett.96.010401} {\path{doi:10.1103/PhysRevLett.96.010401}}.

\bibitem[GLM11]{giovannetti2011advances}
Vittorio Giovannetti, Seth Lloyd, and Lorenzo Maccone.
\newblock Advances in quantum metrology.
\newblock {\em Nature Photonics}, 5(4):222--229, 2011.
\newblock \href {https://arxiv.org/abs/1102.2318} {\path{arXiv:1102.2318}}, \href {https://doi.org/10.1038/nphoton.2011.35} {\path{doi:10.1038/nphoton.2011.35}}.

\bibitem[GMZ{\etalchar{+}}25]{girardi2025random}
Filippo Girardi, Francesco~Anna Mele, Haimeng Zhao, Marco Fanizza, and Ludovico Lami.
\newblock Random {Stinespring} superchannel: Converting channel queries into dilation isometry queries, 2025.
\newblock \href {https://arxiv.org/abs/2512.20599} {\path{arXiv:2512.20599}}.

\bibitem[Gri25]{grinko2025mixed}
Dmitry~A. Grinko.
\newblock {\em Mixed {Schur--Weyl} duality in quantum information}.
\newblock PhD thesis, Universiteit van Amsterdam, Amsterdam, 2025.
\newblock ILLC Dissertation Series DS-2025-02.
\newblock URL: \url{https://hdl.handle.net/11245.1/d9e16c26-ef20-40c5-b847-53c13d1a8a1a}.

\bibitem[GSTW24]{10.1145/3618260.3649621}
Uma Girish, Makrand Sinha, Avishay Tal, and Kewen Wu.
\newblock The power of adaptivity in quantum query algorithms.
\newblock In {\em Proceedings of the 56th Annual {ACM} Symposium on Theory of Computing}, pages 1488--1497, New York, NY, USA, 2024. Association for Computing Machinery.
\newblock \href {https://arxiv.org/abs/2311.16057} {\path{arXiv:2311.16057}}, \href {https://doi.org/10.1145/3618260.3649621} {\path{doi:10.1145/3618260.3649621}}.

\bibitem[GW09]{Goodman2009}
Roe Goodman and Nolan~R. Wallach.
\newblock {\em Symmetry, Representations, and Invariants}, volume 255 of {\em Graduate Texts in Mathematics}.
\newblock Springer, 2009.
\newblock \href {https://doi.org/10.1007/978-0-387-79852-3} {\path{doi:10.1007/978-0-387-79852-3}}.

\bibitem[Har05]{Har05}
Aram~W. Harrow.
\newblock {\em Applications of coherent classical communication and the {Schur} transform to quantum information theory}.
\newblock PhD thesis, Massachusetts Institute of Technology, Cambridge, MA, 2005.
\newblock URL: \url{https://dspace.mit.edu/entities/publication/a43d9288-c946-4317-b408-b33136c08977}, \href {https://arxiv.org/abs/quant-ph/0512255} {\path{arXiv:quant-ph/0512255}}.

\bibitem[Hay06]{hayashi2006parallel}
Masahito Hayashi.
\newblock Parallel treatment of estimation of {SU(2)} and phase estimation.
\newblock {\em Physics Letters A}, 354(3):183--189, 2006.
\newblock \href {https://arxiv.org/abs/quant-ph/0407053} {\path{arXiv:quant-ph/0407053}}, \href {https://doi.org/10.1016/j.physleta.2006.01.043} {\path{doi:10.1016/j.physleta.2006.01.043}}.

\bibitem[Hay25]{Hayashi2025indefinitecausal}
Masahito Hayashi.
\newblock Indefinite causal order strategy does not improve the estimation of group action.
\newblock {\em Quantum}, 9:1891, 2025.
\newblock \href {https://arxiv.org/abs/2501.09312} {\path{arXiv:2501.09312}}, \href {https://doi.org/10.22331/q-2025-10-22-1891} {\path{doi:10.22331/q-2025-10-22-1891}}.

\bibitem[Hel69]{Helstrom1969}
Carl~W. Helstrom.
\newblock Quantum detection and estimation theory.
\newblock {\em Journal of Statistical Physics}, 1(2):231--252, 1969.
\newblock \href {https://doi.org/10.1007/BF01007479} {\path{doi:10.1007/BF01007479}}.

\bibitem[HHLW10]{PhysRevA.81.032339}
Aram~W. Harrow, Avinatan Hassidim, Debbie~W. Leung, and John Watrous.
\newblock Adaptive versus nonadaptive strategies for quantum channel discrimination.
\newblock {\em Physical Review A}, 81(3):032339, 2010.
\newblock \href {https://arxiv.org/abs/0909.0256} {\path{arXiv:0909.0256}}, \href {https://doi.org/10.1103/PhysRevA.81.032339} {\path{doi:10.1103/PhysRevA.81.032339}}.

\bibitem[HJ12]{Horn2012}
Roger~A. Horn and Charles~R. Johnson.
\newblock {\em Matrix Analysis}.
\newblock Cambridge University Press, second edition, 2012.
\newblock \href {https://doi.org/10.1017/CBO9781139020411} {\path{doi:10.1017/CBO9781139020411}}.

\bibitem[HKOT23]{HKOT23}
Jeongwan Haah, Robin Kothari, Ryan O'Donnell, and Ewin Tang.
\newblock Query-optimal estimation of unitary channels in diamond distance.
\newblock In {\em 2023 {IEEE} 64th Annual Symposium on Foundations of Computer Science ({FOCS})}, pages 363--390. IEEE, 2023.
\newblock \href {https://arxiv.org/abs/2302.14066} {\path{arXiv:2302.14066}}, \href {https://doi.org/10.1109/FOCS57990.2023.00028} {\path{doi:10.1109/FOCS57990.2023.00028}}.

\bibitem[HLS{\etalchar{+}}26]{he2026optimalclassicalshadowestimation}
Entong He, Zihao Li, Noam Scully, Sisi Zhou, and Yuxiang Yang.
\newblock Optimal classical shadow estimation of unitary channels at the {Heisenberg} limit, 2026.
\newblock \href {https://arxiv.org/abs/2606.13638} {\path{arXiv:2606.13638}}.

\bibitem[HNG{\etalchar{+}}25]{PRXQuantum.6.030202}
Akel Hashim, Long~B. Nguyen, Noah Goss, Brian Marinelli, Ravi~K. Naik, Trevor Chistolini, Jordan Hines, J.~P. Marceaux, Yosep Kim, Pranav Gokhale, Teague Tomesh, Senrui Chen, Liang Jiang, Samuele Ferracin, Kenneth Rudinger, Timothy Proctor, Kevin~C. Young, Irfan Siddiqi, and Robin Blume-Kohout.
\newblock Practical introduction to benchmarking and characterization of quantum computers.
\newblock {\em PRX Quantum}, 6(3):030202, 2025.
\newblock \href {https://arxiv.org/abs/2408.12064} {\path{arXiv:2408.12064}}, \href {https://doi.org/10.1103/PRXQuantum.6.030202} {\path{doi:10.1103/PRXQuantum.6.030202}}.

\bibitem[Hol11]{Holevo2011}
Alexander~S. Holevo.
\newblock {\em Probabilistic and Statistical Aspects of Quantum Theory}, volume~1 of {\em Publications of the Scuola Normale Superiore}.
\newblock Edizioni della Normale, Pisa, first edition, 2011.
\newblock \href {https://doi.org/10.1007/978-88-7642-378-9} {\path{doi:10.1007/978-88-7642-378-9}}.

\bibitem[How87]{Howe1987}
Roger Howe.
\newblock ({GL}n, {GL}m)-duality and symmetric plethysm.
\newblock {\em Proceedings of the Indian Academy of Sciences: Mathematical Sciences}, 97(1--3):85--109, 1987.
\newblock \href {https://doi.org/10.1007/BF02837817} {\path{doi:10.1007/BF02837817}}.

\bibitem[HP19]{hennessy2019golden}
John~L. Hennessy and David~A. Patterson.
\newblock A new golden age for computer architecture.
\newblock {\em Communications of the ACM}, 62(2):48--60, 2019.
\newblock \href {https://doi.org/10.1145/3282307} {\path{doi:10.1145/3282307}}.

\bibitem[HY26]{He2026resource}
Entong He and Yuxiang Yang.
\newblock Resource quantification for programming low-depth quantum circuits.
\newblock {\em Quantum}, 10:2166, 2026.
\newblock \href {https://arxiv.org/abs/2509.09642} {\path{arXiv:2509.09642}}, \href {https://doi.org/10.22331/q-2026-07-20-2166} {\path{doi:10.22331/q-2026-07-20-2166}}.

\bibitem[Kah07]{Kahn_2007}
Jonas Kahn.
\newblock Fast rate estimation of a unitary operation in {$\mathrm{SU}(d)$}.
\newblock {\em Physical Review A}, 75(2):022326, 2007.
\newblock \href {https://arxiv.org/abs/quant-ph/0603115} {\path{arXiv:quant-ph/0603115}}, \href {https://doi.org/10.1103/PhysRevA.75.022326} {\path{doi:10.1103/PhysRevA.75.022326}}.

\bibitem[Kak37]{kakutani1937beweis}
Shizuo Kakutani.
\newblock Ein beweis des satzes von {M. Eidelheit} {\"u}ber konvexe mengen.
\newblock {\em Proceedings of the Imperial Academy}, 13(4):93--94, 1937.
\newblock \href {https://doi.org/10.3792/pia/1195579980} {\path{doi:10.3792/pia/1195579980}}.

\bibitem[Kro19]{Krovi2019efficienthigh}
Hari Krovi.
\newblock An efficient high dimensional quantum {Schur} transform.
\newblock {\em Quantum}, 3:122, 2019.
\newblock \href {https://arxiv.org/abs/1804.00055} {\path{arXiv:1804.00055}}, \href {https://doi.org/10.22331/q-2019-02-14-122} {\path{doi:10.22331/q-2019-02-14-122}}.

\bibitem[LBH15]{lecun2015deep}
Yann LeCun, Yoshua Bengio, and Geoffrey Hinton.
\newblock Deep learning.
\newblock {\em Nature}, 521(7553):436--444, 2015.
\newblock \href {https://doi.org/10.1038/nature14539} {\path{doi:10.1038/nature14539}}.

\bibitem[Leu00]{leung2000robustquantumcomputation}
Debbie~W. Leung.
\newblock {\em Towards Robust Quantum Computation}.
\newblock PhD thesis, Stanford University, Stanford, CA, 2000.
\newblock \href {https://arxiv.org/abs/cs/0012017} {\path{arXiv:cs/0012017}}.

\bibitem[LHYY23]{Liu_2023}
Qiushi Liu, Zihao Hu, Haidong Yuan, and Yuxiang Yang.
\newblock Optimal strategies of quantum metrology with a strict hierarchy.
\newblock {\em Physical Review Letters}, 130(7):070803, 2023.
\newblock \href {https://arxiv.org/abs/2203.09758} {\path{arXiv:2203.09758}}, \href {https://doi.org/10.1103/PhysRevLett.130.070803} {\path{doi:10.1103/PhysRevLett.130.070803}}.

\bibitem[MB25]{mele2026optimallearningquantumchannels}
Antonio~Anna Mele and Lennart Bittel.
\newblock Optimal learning of quantum channels in diamond distance, 2025.
\newblock \href {https://arxiv.org/abs/2512.10214} {\path{arXiv:2512.10214}}.

\bibitem[NC10]{Nielsen_Chuang_2010}
Michael~A. Nielsen and Isaac~L. Chuang.
\newblock {\em Quantum Computation and Quantum Information}.
\newblock Cambridge University Press, 10th anniversary edition, 2010.
\newblock \href {https://doi.org/10.1017/CBO9780511976667} {\path{doi:10.1017/CBO9780511976667}}.

\bibitem[ODMZ22]{PRXQuantum.3.020328}
Micha{\l} Oszmaniec, Ninnat Dangniam, Mauro E.~S. Morales, and Zolt{\'a}n Zimbor{\'a}s.
\newblock Fermion sampling: A robust quantum computational advantage scheme using fermionic linear optics and magic input states.
\newblock {\em PRX Quantum}, 3(2):020328, 2022.
\newblock \href {https://arxiv.org/abs/2012.15825} {\path{arXiv:2012.15825}}, \href {https://doi.org/10.1103/PRXQuantum.3.020328} {\path{doi:10.1103/PRXQuantum.3.020328}}.

\bibitem[Oko01]{Okounkov2001}
Andrei Okounkov.
\newblock Infinite wedge and random partitions.
\newblock {\em Selecta Mathematica, New Series}, 7(1):57--81, 2001.
\newblock \href {https://arxiv.org/abs/math/9907127} {\path{arXiv:math/9907127}}, \href {https://doi.org/10.1007/PL00001398} {\path{doi:10.1007/PL00001398}}.

\bibitem[Pre18]{preskill2018nisq}
John Preskill.
\newblock Quantum computing in the {NISQ} era and beyond.
\newblock {\em Quantum}, 2:79, 2018.
\newblock \href {https://arxiv.org/abs/1801.00862} {\path{arXiv:1801.00862}}, \href {https://doi.org/10.22331/q-2018-08-06-79} {\path{doi:10.22331/q-2018-08-06-79}}.

\bibitem[Pro78]{proskuryakov1974problems}
Igor~V. Proskuryakov.
\newblock {\em Problems in Linear Algebra}.
\newblock Mir Publishers, Moscow, 1978.
\newblock Translated from the Russian by George Yankovsky.

\bibitem[PS02]{peres2001entangled}
Asher Peres and Petra~F. Scudo.
\newblock Covariant quantum measurements may not be optimal.
\newblock {\em Journal of Modern Optics}, 49(8):1235--1243, 2002.
\newblock \href {https://arxiv.org/abs/quant-ph/0107114} {\path{arXiv:quant-ph/0107114}}, \href {https://doi.org/10.1080/09500340110118449} {\path{doi:10.1080/09500340110118449}}.

\bibitem[SBJ19]{Subramanian_2019}
Sathyawageeswar Subramanian, Stephen Brierley, and Richard Jozsa.
\newblock Implementing smooth functions of a {Hermitian} matrix on a quantum computer.
\newblock {\em Journal of Physics Communications}, 3(6):065002, 2019.
\newblock \href {https://arxiv.org/abs/1806.06885} {\path{arXiv:1806.06885}}, \href {https://doi.org/10.1088/2399-6528/ab25a2} {\path{doi:10.1088/2399-6528/ab25a2}}.

\bibitem[SBZ19]{Sedl_k_2019}
Michal Sedl{\'a}k, Alessandro Bisio, and M{\'a}rio Ziman.
\newblock Optimal probabilistic storage and retrieval of unitary channels.
\newblock {\em Physical Review Letters}, 122(17):170502, 2019.
\newblock \href {https://arxiv.org/abs/1809.04552} {\path{arXiv:1809.04552}}, \href {https://doi.org/10.1103/PhysRevLett.122.170502} {\path{doi:10.1103/PhysRevLett.122.170502}}.

\bibitem[Str99]{doi:10.1137/S0036144598336745}
Gilbert Strang.
\newblock The discrete cosine transform.
\newblock {\em SIAM Review}, 41(1):135--147, 1999.
\newblock \href {https://doi.org/10.1137/S0036144598336745} {\path{doi:10.1137/S0036144598336745}}.

\bibitem[SvL13]{stefanucci2013nonequilibrium}
Gianluca Stefanucci and Robert van Leeuwen.
\newblock {\em Nonequilibrium Many-Body Theory of Quantum Systems: A Modern Introduction}.
\newblock Cambridge University Press, Cambridge, 2013.
\newblock \href {https://doi.org/10.1017/CBO9781139023979} {\path{doi:10.1017/CBO9781139023979}}.

\bibitem[SZ26]{NoamZhouNearOptimal2026}
Noam Scully and Sisi Zhou.
\newblock to appear, 2026.

\bibitem[vACGN23]{vanapeldoorn2022quantumtomographyusingstatepreparation}
Joran van Apeldoorn, Arjan Cornelissen, Andr{\'a}s Gily{\'e}n, and Giacomo Nannicini.
\newblock Quantum tomography using state-preparation unitaries.
\newblock In {\em Proceedings of the 2023 Annual ACM--SIAM Symposium on Discrete Algorithms ({SODA})}, pages 1265--1318. Society for Industrial and Applied Mathematics, 2023.
\newblock \href {https://arxiv.org/abs/2207.08800} {\path{arXiv:2207.08800}}, \href {https://doi.org/10.1137/1.9781611977554.ch47} {\path{doi:10.1137/1.9781611977554.ch47}}.

\bibitem[vDDE{\etalchar{+}}07]{vanDam_2007}
Wim van Dam, G.~Mauro D'Ariano, Artur Ekert, Chiara Macchiavello, and Michele Mosca.
\newblock Optimal phase estimation in quantum networks.
\newblock {\em Journal of Physics A: Mathematical and Theoretical}, 40(28):7971--7984, 2007.
\newblock \href {https://arxiv.org/abs/0706.4412} {\path{arXiv:0706.4412}}, \href {https://doi.org/10.1088/1751-8113/40/28/S07} {\path{doi:10.1088/1751-8113/40/28/S07}}.

\bibitem[Wat18]{Watrous2018}
John Watrous.
\newblock {\em The Theory of Quantum Information}.
\newblock Cambridge University Press, 2018.
\newblock \href {https://doi.org/10.1017/9781316848142} {\path{doi:10.1017/9781316848142}}.

\bibitem[WBC{\etalchar{+}}21]{wu2021strong}
Yulin Wu, Wan-Su Bao, Sirui Cao, et~al.
\newblock Strong quantum computational advantage using a superconducting quantum processor.
\newblock {\em Physical Review Letters}, 127(18):180501, 2021.
\newblock \href {https://arxiv.org/abs/2106.14734} {\path{arXiv:2106.14734}}, \href {https://doi.org/10.1103/PhysRevLett.127.180501} {\path{doi:10.1103/PhysRevLett.127.180501}}.

\bibitem[Wri16]{Wri16}
John Wright.
\newblock {\em How to learn a quantum state}.
\newblock PhD thesis, Carnegie Mellon University, Pittsburgh, PA, 2016.
\newblock Technical Report CMU-CS-16-108.
\newblock URL: \url{https://csd.cs.cmu.edu/academics/doctoral/degrees-conferred/john-wright}.

\bibitem[Yan19]{PhysRevLett.123.110501}
Yuxiang Yang.
\newblock Memory effects in quantum metrology.
\newblock {\em Physical Review Letters}, 123(11):110501, 2019.
\newblock \href {https://arxiv.org/abs/1904.07267} {\path{arXiv:1904.07267}}, \href {https://doi.org/10.1103/PhysRevLett.123.110501} {\path{doi:10.1103/PhysRevLett.123.110501}}.

\bibitem[YKS{\etalchar{+}}26]{one_to_one_correspondence2026}
Satoshi Yoshida, Yuki Koizumi, Micha{\l} Studzi{\'n}ski, Marco~T{\'u}lio Quintino, and Mio Murao.
\newblock One-to-one correspondence between deterministic port-based teleportation and unitary estimation.
\newblock {\em IEEE Transactions on Information Theory}, 72(4):2358--2377, 2026.
\newblock \href {https://arxiv.org/abs/2408.11902} {\path{arXiv:2408.11902}}, \href {https://doi.org/10.1109/TIT.2026.3658543} {\path{doi:10.1109/TIT.2026.3658543}}.

\bibitem[YMR{\etalchar{+}}22]{Yang_2022}
Yuxiang Yang, Yin Mo, Joseph~M. Renes, Giulio Chiribella, and Mischa~P. Woods.
\newblock Optimal universal quantum error correction via bounded reference frames.
\newblock {\em Physical Review Research}, 4(2):023107, 2022.
\newblock \href {https://arxiv.org/abs/2007.09154} {\path{arXiv:2007.09154}}, \href {https://doi.org/10.1103/PhysRevResearch.4.023107} {\path{doi:10.1103/PhysRevResearch.4.023107}}.

\bibitem[YRC20]{Yang_2020}
Yuxiang Yang, Renato Renner, and Giulio Chiribella.
\newblock Optimal universal programming of unitary gates.
\newblock {\em Physical Review Letters}, 125(21):210501, 2020.
\newblock \href {https://arxiv.org/abs/2007.10363} {\path{arXiv:2007.10363}}, \href {https://doi.org/10.1103/PhysRevLett.125.210501} {\path{doi:10.1103/PhysRevLett.125.210501}}.

\bibitem[YYM25]{yoshida2025asymptoticallyoptimalunitaryestimation}
Satoshi Yoshida, Hironobu Yoshida, and Mio Murao.
\newblock Asymptotically optimal unitary estimation in {$\mathrm{SU}(3)$} by the analysis of graph {Laplacian}, 2025.
\newblock \href {https://arxiv.org/abs/2509.20608} {\path{arXiv:2509.20608}}.

\bibitem[ZLK{\etalchar{+}}24]{ZhaoEtAl2024BoundedGate}
Haimeng Zhao, Laura Lewis, Ishaan Kannan, Yihui Quek, Hsin-Yuan Huang, and Matthias~C. Caro.
\newblock Learning quantum states and unitaries of bounded gate complexity.
\newblock {\em PRX Quantum}, 5(4):040306, 2024.
\newblock \href {https://arxiv.org/abs/2310.19882} {\path{arXiv:2310.19882}}, \href {https://doi.org/10.1103/PRXQuantum.5.040306} {\path{doi:10.1103/PRXQuantum.5.040306}}.

\bibitem[ZWD{\etalchar{+}}20]{zhong2020quantum}
Han-Sen Zhong, Hui Wang, Yu-Hao Deng, et~al.
\newblock Quantum computational advantage using photons.
\newblock {\em Science}, 370(6523):1460--1463, 2020.
\newblock \href {https://arxiv.org/abs/2012.01625} {\path{arXiv:2012.01625}}, \href {https://doi.org/10.1126/science.abe8770} {\path{doi:10.1126/science.abe8770}}.

\end{thebibliography}

\end{document}